\theoremstyle{plain}
\newtheorem{theorem}{Theorem}[section]
\newtheorem{lemma}{Lemma}[section]
\newtheorem{corollary}{Corollary}[section]
\theoremstyle{definition}
\newtheorem{definition}{Definition}[section]
\newtheorem{example}{Example}[section]
\newtheorem{remark}{Remark}[section]
\numberwithin{equation}{section}
\newcommand{\Linn}{L^{0}(\mathbb{I}^{m})}
\newcommand{\Lina}{L^{0}(\mathbb{I}^{m_1})}
\newcommand{\Linb}{L^{0}(\mathbb{I}^{m_2})}
\newcommand{\R}{{\mathbb R}}
\newcommand{\CC}{{\mathcal C}}
\newcommand{\eee}{{\bf e}}
\newcommand{\uuu}{{\bf u}}
\newcommand{\vvv}{{\bf v}}
\newcommand{\ein}{{\bf 1}}
\newcommand{\nul}{{\bf 0}}
\newcommand{\leb}{\boldsymbol{\lambda}}
\begin{document}

\title{Dissimilarity functions for rank-invariant hierarchical clustering of continuous variables
}
\author{
Sebastian Fuchs
\thanks{Department for Mathematics, Universit{\"a}t Salzburg, Salzburg, Austria.
E-mail: \texttt{sebastian.fuchs@sbg.ac.at} (Corresponding author).}
\and
F. Marta L. Di Lascio
\thanks{Faculty of Economics and Management, Free University of Bozen-Bolzano, Bozen-Bolzano, Italy.
E-mail: \texttt{marta.dilascio@unibz.it}}
\and
Fabrizio Durante
\thanks{Dipartimento di Scienze dell'Economia, Universit\`a del Salento, Lecce, Italy.
E-mail: \texttt{fabrizio.durante@unisalento.it}}}

\date{}

\maketitle

\begin{abstract}
\noindent
A theoretical framework is presented for a (copula-based) notion of dissimilarity between continuous random vectors and its main properties are studied. The proposed dissimilarity assigns the smallest value to a pair of random vectors that are comonotonic.
Various properties of this dissimilarity are studied, with special attention to those that are prone to the hierarchical agglomerative methods, such as reducibility. 
Some insights are provided for the use of such a measure in clustering algorithms and a simulation study is presented. 
Real case studies illustrate the main features of the whole methodology. \\

\noindent\textit{Keywords:} Comonotonicity, Copula, Cluster analysis, Dissimilarity, Stochastic dependence.\\
\end{abstract}

\section{Introduction}\label{S:intro}

A clustering method aims at visualizing the relationships among objects, say $\mathbf{x}_1,\dots,\mathbf{x}_m$, so that one can understand their main features. In particular, hierarchical clustering methods represent the relationships between $m$ objects based on their (pairwise) dissimilarities in the form of a tree, where each leaf corresponds to one of the original objects, namely $\mathbf{x}_i$, and each interior node represents a subset or cluster of objects. Moreover, agglomerative hierarchical clustering algorithms build trees in a bottom-up approach, beginning with $n$ singleton clusters of the form $\{\mathbf{x}_i\}$ and, then, merging the two closest clusters at each stage, until only one cluster remains. The resulting binary tree formed by this process can also provide an intuitive graphical representation in terms of a dendrogram. See, for instance, \cite{Eve11,Has09,Hen16}.

According to the different applications, $\mathbf{x}_1,\dots,\mathbf{x}_m$ may have various nature and interpretation.  Here, we suppose that they are sample data from the continuous random variables (r.v.'s hereafter) $X_1,\dots,X_m$ defined on the same probability space. Moreover, we consider hierarchical clustering algorithms that can visualize a specific kind of association (similarity) among the r.v.'s. We recall that measures of association capture the many facets of dependence relationships, from classical linear correlation coefficients to indices for detecting concordance, tail dependence, radial symmetry, etc. In their study, the contribution of copula methods to describe (mainly, continuous) r.v.'s has been largely recognized (see, e.g., \cite{DurSem16,Gib21,Groetal14,Joe15,MaiSch12,Nel06,Saletal07} and references therein).

In the literature, the use of (copula-based) hierarchical clustering procedures is mainly two-fold. First, such algorithms have been employed to 
guide the process of model building (and selection), especially in high dimensions. For instance, cluster algorithms have been used in \cite{Goretal17} for the identification of a nested Archimedean structure, and in \cite{Czaetal12,Disetal13} for the determination of a vine copula model, among others. Moreover, a procedure is illustrated in \cite{CotGen15} for selecting the tree structure of a risk aggregation model by combining hierarchical clustering techniques with a distance metric based on Kendall's tau. Finally, in \cite{Per19}, an iterative algorithm is proposed to group variables into clusters with exchangeable dependence.

Second, hierarchical clustering procedures have been used to detect comovements of r.v.'s (especially, in time series). In financial time series, for instance, these methods start with the use of (Pearson) correlation coefficient and some of its variants (see, for instance, \cite{Bonetal04,Eve11}) and, then, benefit from the copula approach especially when the detection of extreme dependence is of interest \cite{DeLZuc11,DeLZuc17b,DeLZuc17,DurPapTor14ADAC,DurPapTor15StatPap}. 
Related clustering methods can be built from other measures of association/concordance \cite{Bon19,DiLDurPap17RBN}, mutual information \cite{Koj04}, as well as from a dissimilarity derived from the empirical copula \cite{Disetal17,Koj10}.
Notice that the above procedures differ from model-based clustering techniques, which aim to group observations from the same subpopulation of a multivariate mixture distribution (see, e.g., \cite{KosKar16,Mar17}) and from the CoClust algorithm, which aims to groups observations according to the multivariate dependence structure of the data generating copula (see \cite{Dilascio16}).

Motivated by the interest in clustering methods of (agglomerative) hierarchical type, we introduce and formalize a notion of dissimilarity between two subsets of r.v.'s. Such a  dissimilarity measure assigns the smallest value to two subsets of r.v.'s that are pairwise comonotonic (see, e.g., \cite{Dhaetal02a,KocDes11,PucSca10,PucWan15}). Moreover, this measure is of probabilistic nature, i.e. it depends on the joint probability distribution function of the involved variables, and is copula-based, i.e. it is invariant under monotonically increasing transformations of the involved r.v.'s.

Specifically, we investigate whether a dissimilarity measure can satisfy some desirable theoretical properties (for example, reducibility) that are satisfied by some classical clustering methods based on Euclidean distances.

The paper is organized as follows. First, we define the general framework where our dissimilarity concept is build up (Section \ref{sec:frame}) and, in particular, we introduce some desirable properties that a dissimilarity may satisfy with particular emphasis on those properties that are prone to the hierarchical agglomerative methods. In Section \ref{sec:example}, we consider and compare to each other various examples of dissimilarity mappings, including those methods based on linkage functions. Moreover, in Section \ref{sec:algorithm} we show how the dissimilarity can be used to detect various kinds of stochastic dependence of a random vector. From the computational side, we hence provide a simulation study in order to show how algorithms based on linkage and pairwise dissimilarities work in a finite sample and discuss their advantages and disadvantages (Section \ref{subsec:simula}). Real case studies illustrate the whole methodology (Section \ref{sec:appli}). Remarkably, we also present a case when the use of novel dissimilarity measures (not based on linkage functions) may be beneficial in the detection of global dependencies. Section \ref{sec:concl} summarizes the main findings.

\section{The framework}\label{sec:frame}

Throughout this manuscript, we consider r.v.'s  defined on the same probability space $(\Omega,\mathcal{F},\mathbb{P})$. We denote by $F_X$ the probability distribution function of a r.v.~$X$ and by $F_{X}^{(-1)}$ its generalized inverse (see, e.g., \cite{EmbHof13}). We recall that, for a continuous r.v.~$X$, the composition $F_X \circ X$ is uniformly distributed on $\mathbb{I}:=[0,1]$ (see, e.g., \cite{DurSem16}). 
Let $\vec{\mathbf X}$ be a continuous random vector, i.e. $\vec{\mathbf X}=(X_1,\dots,X_m)$. In view of Sklar's Theorem, the copula of $\vec{\mathbf X}$ is the distribution function of $(F_{X_1}(X_1), \dots, F_{X_{m}}(X_{m}))$. 

For the sake of completeness, we recall that the copula $M$ is defined by $M(u_1\dots,u_m)=\min\{u_1,\dots,u_m\}$ for all $u_1,\dots,u_m$ in $\mathbb{I}$. In particular, two continuous r.v.'s  $X$ and $Y$ are said to be \emph{comonotonic} if their copula is equal to $M$. Comonotonicity of r.v.'s  can also be expressed in one of the following equivalent ways:
\begin{itemize}
\item[(a)] $(X,Y)\stackrel{d}{=} (F_X^{(-1)}(U),F_Y^{(-1)}(U))$, where $U$ is a uniform random variable;
\item[(b)] there exists a r.v.~$Z$ such that $(X,Y)\stackrel{d}{=} (f_1(Z),f_2(Z))$ for some increasing functions $f_1,f_2$.
\end{itemize}
See, e.g., \cite{Dhaetal02a,PucSca10}.

\bigskip
In the following, let $ m \geq 2 $ be an integer which will be kept fixed. We consider a (finite) set $\mathcal{X}=\{X_1,\dots,X_m\}$ of continuous r.v.'s. Any subset of $\mathcal{X}$ will be denoted by upper-case black-board letters, e.g. $\mathbb{X}$. 
Let $\mathcal{P}_0(\mathcal{X})$ denote the set of all non--empty subsets of $\mathcal{X}$.

Given a subset $\mathbb{X}=\{X_1, \dots, X_k\}\subset \mathcal{X}$ composed of $k$ r.v.'s, we indicate by $\vec{\mathbb{X}}$ a vector representation of 
$\mathbb{X}$, i.e. a $k$--dimensional random vector whose coordinates are distinct elements from $\mathbb{X}$. Clearly, the vector representation of any $\mathbb{X}$ need not be unique.

Here, we aim at quantifying how two non-empty subsets of $\mathcal{X}$
(not necessarily equal in cardinality) are similar or, analogously, how we can define a suitable dissimilarity index between them. The main properties that this index should satisfy are illustrated in the following.

A \textit{dissimilarity index} is a mapping
$\widetilde{d}$ that assigns to every pair $(\mathbb{X},\mathbb{Y}) \in \mathcal{P}_0(\mathcal{X})\times \mathcal{P}_0(\mathcal{X})$ a value in $[0,+\infty[$ with the following properties:
\begin{itemize}
\item[\textnormal{($\widetilde{A1}$)}]
$ \widetilde{d}(\mathbb{X},\mathbb{Y})=0$
holds for all $\mathbb{X}=\{X_1,\dots,X_{m_1}\}, \mathbb{Y}=\{Y_1,\dots,Y_{m_2}\} \in \mathcal{P}_0(\mathcal{X})$ such that the r.v.'s $X_1,\dots,X_{m_1},Y_1,\dots,Y_{m_2}$ are all pairwise comonotonic.

\item[\textnormal{($\widetilde{A2}$)}]
$ \widetilde{d}(\mathbb{X},\mathbb{Y})
	= \widetilde{d}(\mathbb{Y},\mathbb{X}) $
holds for all $\mathbb{X}, \mathbb{Y} \in \mathcal{P}_0(\mathcal{X})$.

\item[\textnormal{($\widetilde{A3}$)}]
The identity
$$
  \widetilde{d}(\mathbb{X},\mathbb{Y})
	= \widetilde{d}(\mathbb{X}_1,\mathbb{Y}_1)
$$
holds for all $\mathbb{X},\mathbb{Y},\mathbb{X}_1,\mathbb{Y}_1 \in \mathcal{P}_0(\mathcal{X})$ such that there exist some vector representations of $\mathbb{X},\mathbb{Y},\mathbb{X}_1,\mathbb{Y}_1 $ for which it holds $(\vec{\mathbb X},\vec{\mathbb Y}) \stackrel{d}{=} (\vec{\mathbb X}_1,\vec{\mathbb Y}_1)$.
\item[\textnormal{($\widetilde{A4}$)}]
The identity
$$
  \widetilde{d} \big( \mathbb{X}, \mathbb{Y} \big)
	= \widetilde{d} \big( \{T_1(X_1), \dots, T_{m_1}(X_{m_1})\}, \{Y_1, \dots, Y_{m_2}\} \big)
$$
holds for all $\mathbb{X} = \{X_1, \dots, X_{m_1}\}, \mathbb{Y} = \{Y_1, \dots, Y_{m_2}\} \in \mathcal{P}_0(\mathcal{X})$
and every set of strictly increasing transformations $\{T_1, \dots, T_{m_1}\}$.
\end{itemize}

Condition ($\widetilde{A1}$) implies that $\widetilde{d}(\{X\},\{Y\})=0$ when $X$ and $Y$ are comonotonic. Thus, roughly speaking, this index quantifies the closeness of the joint distribution of $(X,Y)$ to the upper bound of the related Fr\'echet class (see, also, \cite{CifReg17,GDA16} for an historical overview). In general, by extending slightly the notation in \cite{PucSca10}, the dissimilarity index between two subsets $\mathbb{X}$ and $\mathbb{Y}$ is minimal when any vector representation of $\mathbb{X}$ is strongly comonotonic with any vector representations of $\mathbb{Y}$.

Condition ($\widetilde{A2}$) expresses a natural symmetry property of $\widetilde{d}$. Condition ($\widetilde{A3}$), instead, states that the dissimilarity index is law-invariant, as the various notions of association considered in the literature. At this point note that conditions
$\vec{\mathbb X} \stackrel{d}{=} \vec{\mathbb X}_1$ and
$\vec{\mathbb Y} \stackrel{d}{=} \vec{\mathbb Y}_1$ are essential since the above law-invariance is limited to pairs of subsets of fixed size and fixed distribution; for instance, $\widetilde{d} \big( \{X_1, X_{2}\}, \{Y_1, Y_{2}\} \big)$ can be different from $\widetilde{d} \big( \{X_1\}, \{X_{2}, Y_1, Y_{2}\} \big)$.

Condition ($\widetilde{A4}$), together with ($\widetilde{A2}$), states that the dissimilarity index between $\mathbb{X}$ and $\mathbb{Y}$ is invariant under strictly increasing transformations of their respective elements and, hence, it does not depend on the univariate distribution functions of the involved r.v.'s. Therefore, since it is precisely the copula which captures those properties of the joint distribution which are invariant under strictly increasing transformations, the dissimilarity index is a copula-based concept.

\begin{remark}
Notice that the previous conditions, especially ($\widetilde{A1}$), distinguishes the proposed methodology with other methods that enable the detection of all types of functional dependencies among variables (see, e.g., \cite{Koj04}). Indeed, the proposed dissimilarity index essentially aims at finding monotonic functional dependencies among the involved r.v.'s.
\end{remark}

Due to the above stated rank-invariant property of the dissimilarity, without loss of generality, we can introduce a dissimilarity index by considering r.v.'s that are uniformly distributed on $\mathbb{I}$ (i.e. working directly in the class of copulas). Before formalizing this aspect (see next Theorem \ref{diss.fct.index.relation}), we need some preliminary definitions.

Let $\Linn$ denote the space of all $m$-dimensional random vectors with uniform margins on $\mathbb{I}$.

\begin{definition} \label{diss.fct.local}
For all $ m_1, m_2 \in \mathbb{N} $ with $ 2 \le m_1+m_2\le m $,
$$
d^{m_1,m_2} \colon \Lina\times\Linb\to [0,+\infty[
$$
is called a $(m_1,m_2)$-\emph{dissimilarity function} if it satisfies the following properties:
\begin{itemize}
\item[(1)] For every $ (\vec{\mathbb{X}}, \vec{\mathbb{Y}}) \in \Lina \times \Linb $
\begin{equation} \label{local.diss.1}
  d^{m_1,m_2}(\vec{\mathbb{X}},\vec{\mathbb{Y}})= 0
\end{equation}
when the copula of $(\vec{\mathbb{X}},\vec{\mathbb{Y}})$ is equal to the comonotonicity copula $M$.
\item[(2)]
For every $(\vec{\mathbb{X}}, \vec{\mathbb{Y}}) \in \Lina \times \Linb$
\begin{equation} \label{local.diss.2}
d^{m_1,m_2}(\vec{\mathbb{X}},\vec{\mathbb{Y}})
	= d^{m_1,m_2}(\sigma_1(\vec{\mathbb{X}}),\sigma_2(\vec{\mathbb{Y}}))
\end{equation}
holds for all $\sigma_1$ and $\sigma_2$ permuting the coordinates of a vector from $\mathbb{I}^{m_1}$ and $\mathbb{I}^{m_2}$, respectively.

\item[(3)]
The identity
\begin{equation} \label{local.diss.3}
  d^{m_1,m_2} (\vec{\mathbb{X}},\vec{\mathbb{Y}})
	= d^{m_1,m_2} (\vec{\mathbb{X}}_1,\vec{\mathbb{Y}}_1)
\end{equation}
holds for all $(\vec{\mathbb{X}}, \vec{\mathbb{Y}}), (\vec{\mathbb{X}}_1,\vec{\mathbb{Y}}_1) \in \Lina \times \Linb$
with $(\vec{\mathbb X},\vec{\mathbb Y}) \stackrel{d}{=} (\vec{\mathbb X}_1,\vec{\mathbb Y}_1)$.
\end{itemize}
\end{definition}

Properties \eqref{local.diss.1} and \eqref{local.diss.3}
are direct translations of ($\widetilde{A1}$) and ($\widetilde{A3}$) in a copula setting; the latter property \eqref{local.diss.3} states that the dissimilarity between two vectors is law-invariant (and hence only depends on the copula involved).

Property \eqref{local.diss.2} states that the dissimilarity between two vectors does not change when the components of each vector are permuted. Roughly speaking, the dissimilarity does depend on the components of a random vector, but not on the order they are considered (so, it is a property about sets not vectors).

All the dissimilarity functions can be glued together into the following concept.

\begin{definition} \label{diss.fct.global}
An \emph{extended dissimilarity function} (of degree $m$) is a map
\begin{equation*}
  d: \bigcup_{2 \leq m_1+m_2 \leq m} \Lina\times\Linb \to [0,+\infty[
\end{equation*}
whose restriction
$d^{m_1,m_2} := d|_{\Lina\times\Linb} $ to $\Lina\times\Linb$ is a $(m_1,m_2)$-\emph{dissimilarity function} for all $ m_1, m_2 \in \mathbb{N} $ with $ 2 \le m_1+m_2\le m $, such that, for every $ (\vec{\mathbb{X}}, \vec{\mathbb{Y}}) \in \Lina \times \Linb$,
\begin{equation}\label{eq:simmetry}
d^{m_1,m_2}(\vec{\mathbb{X}},\vec{\mathbb{Y}})
=d^{m_2,m_1}(\vec{\mathbb{Y}},\vec{\mathbb{X}})
\end{equation}
holds.
\end{definition}

Roughly speaking, an extended dissimilarity function allows to assign a degree of dissimilarity to any pair of random vectors, regardless of the respective dimension. The condition given by \eqref{eq:simmetry} simply ensures that the dissimilarity has some natural symmetry related to ($\widetilde{A2}$).

Theorem \ref{diss.fct.index.relation} below demonstrates that the notion of extended dissimilarity function is consistent with the notion of dissimilarity index.

\begin{theorem}\label{diss.fct.index.relation}
The following statements hold:
\begin{enumerate}
\item[\textnormal{(a)}]
Let $d$ be an extended dissimilarity function. Then, the mapping $\tilde{d}\colon \mathcal{P}_0(\mathcal{X})\times \mathcal{P}_0(\mathcal{X})\to [0,+\infty[$ given, for every $\mathbb{X}=\{X_1, \dots, X_{m_1}\}$ and $\mathbb{Y}= \{Y_1, \dots, Y_{m_2}\}$, by
\begin{equation}\label{eq:d-tilde}
\widetilde{d} \big( \mathbb{X},\mathbb{Y} \big)
:=  d^{m_1,m_2} \big( (F_{X_1}(X_1), \dots, F_{X_{m_1}}(X_{m_1})), (F_{Y_1}(Y_1), \dots, F_{Y_{m_2}}(Y_{m_2})) \big)
\end{equation}
is a dissimilarity index.
\item[\textnormal{(b)}]
Let $\tilde{d}$ be a dissimilarity index.
Then, the map
\begin{equation*}
  d: \bigcup_{2 \leq m_1+m_2 \leq m} \Lina\times\Linb \to [0,+\infty[
\end{equation*}
given by
\begin{equation*}
  d \big( (X_1, \dots, X_{m_1}), (Y_1, \dots, Y_{m_2}) \big)
	:= \tilde{d} \big( \{X_1, \dots, X_{m_1}\}, \{Y_1, \dots, Y_{m_2}\} \big)
\end{equation*}
is an extended dissimilarity function.
\end{enumerate}
\end{theorem}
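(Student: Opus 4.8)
The plan is to prove the two implications separately, verifying in each direction that the defining axioms of the target object follow from those of the source object. The bridge in both directions is the probability integral transform (PIT): for a continuous r.v.\ $X$ the variable $F_X(X)$ is uniform on $\mathbb{I}$, so the transformed vectors in \eqref{eq:d-tilde} indeed lie in $\Lina\times\Linb$, and their joint law is precisely the copula of the original vector. I would record two elementary facts at the outset. First, two continuous r.v.'s $X,Y$ are comonotonic if and only if $F_X(X)=F_Y(Y)$ almost surely, since comonotonic uniform margins must coincide a.s. Second, for a strictly increasing transformation $T$ one has $F_{T(X)}(T(X))=F_X(X)$ a.s., because $F_{T(X)}(y)=F_X(T^{-1}(y))$ on the range of $T$. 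These two identities carry almost all of the probabilistic content.

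For part~(a), I would first note that $\widetilde{d}$ is well defined on \emph{sets}: formula \eqref{eq:d-tilde} fixes an enumeration of $\mathbb{X}$ and $\mathbb{Y}$, but any two enumerations differ by permutations of the coordinates, under which $d$ is invariant by property~(2). I then verify the four axioms. For ($\widetilde{A1}$), pairwise comonotonicity together with the first fact forces all PIT-transformed coordinates to equal a single common uniform variable a.s., so the copula of the combined vector is $M$, and property~(1) of $d$ returns the value $0$. Axiom ($\widetilde{A2}$) is immediate from the symmetry relation \eqref{eq:simmetry}. For ($\widetilde{A3}$), equality in distribution of the chosen vector representations forces the matching marginals, hence the componentwise PIT maps, to agree, so the PIT-transformed vectors are again equal in distribution; property~(3) of $d$ then yields equality of $\widetilde{d}$, with property~(2) absorbing any discrepancy between the enumeration built into \eqref{eq:d-tilde} and the representation for which $\stackrel{d}{=}$ holds. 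Axiom ($\widetilde{A4}$) is the cleanest: by the second fact the PIT-transformed vector is literally unchanged (a.s.) when each $X_i$ is replaced by $T_i(X_i)$, so $\widetilde{d}$ is unchanged.

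For part~(b) the r.v.'s already have uniform margins, so no PIT is needed and ($\widetilde{A4}$) never enters; I simply translate each axiom of $\widetilde{d}$ into the corresponding property of the restriction $d^{m_1,m_2}$. Property~(1) follows from ($\widetilde{A1}$): a copula equal to $M$ with uniform margins forces all coordinates to coincide a.s., which is in particular pairwise comonotonicity. Property~(2) holds because permuting the coordinates of $\vec{\mathbb{X}}$ and $\vec{\mathbb{Y}}$ leaves the underlying sets unchanged, so the value of $\widetilde{d}$ cannot change. Property~(3) is a direct instance of ($\widetilde{A3}$), the given vectors serving as the required representations, and the symmetry \eqref{eq:simmetry} is exactly ($\widetilde{A2}$).

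The main obstacle I anticipate is not any single computation but the careful bookkeeping in ($\widetilde{A3}$) of part~(a): one must track the interplay between the fixed enumeration built into \eqref{eq:d-tilde}, the arbitrary vector representation appearing in the axiom, and the permutation-invariance (property~(2)) needed to reconcile them, while simultaneously checking that equality in distribution of the raw vectors transfers to equality in distribution of the PIT-transformed vectors. A secondary point requiring care is the transitivity step in ($\widetilde{A1}$), where pairwise comonotonicity must be upgraded to joint comonotonicity (copula $M$); this is exactly where the first fact and the a.s.\ transitivity of equality are essential.
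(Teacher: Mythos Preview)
Your proposal is correct and follows essentially the same approach as the paper's own proof: well-definedness of $\widetilde d$ via permutation invariance~(2), the probability integral transform to land in $\Lina\times\Linb$, and then the direct correspondence between $(\widetilde{A1})$--$(\widetilde{A4})$ and properties~(1), \eqref{eq:simmetry}, (3), together with the identity $F_{T\circ X}(T\circ X)=F_X(X)$; part~(b) is declared ``straightforward'' in the paper and your verification is exactly that. Your write-up is in fact more detailed than the paper's, particularly in flagging the enumeration/representation bookkeeping in $(\widetilde{A3})$ and the pairwise-to-joint upgrade in $(\widetilde{A1})$, both of which the paper leaves implicit.
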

\begin{proof}
Consider assertion (a). First, consider that \eqref{eq:d-tilde} is well-defined since, in view of property \eqref{local.diss.2}, it does not depend on the specific vector representation of $\{X_1,\dots,X_{m_1}\}$ and $\{Y_1,\dots,Y_{m_2}\}$. 
Note that, for continuous distributions functions $F_X$, the composition $F_X \circ X$ is uniformly distributed on $\mathbb{I}$. Therefore, for every set of r.v.'s $\{X_1, \dots, X_{m_1}\}$,
the transformed vector satisfies $(F_{X_1}(X_1), \dots, F_{X_{m_1}}(X_{m_1})) \in \Lina$.
Moreover,
($\widetilde{A1}$), ($\widetilde{A2}$) and ($\widetilde{A3}$) are direct consequences of properties
\eqref{local.diss.1}, \eqref{eq:simmetry} and \eqref{local.diss.3} of Definitions \ref{diss.fct.local} and \ref{diss.fct.global}.
Finally, ($\widetilde{A4}$) follows from \eqref{local.diss.3} and the
fact that $F_{T \circ X} (T \circ X) = F_{X} (X)$ for every strictly increasing function $T$.

Assertion (2) is straightforward.
\end{proof}

It follows from the previous result that, from now on,
we can express the dissimilarity index in terms of suitable properties of the extended dissimilarity function $d$.

Before introducing basic examples of such functions, we present here some additional desirable properties they may satisfy. First, we present some \emph{local} properties, i.e. properties that are satisfied by the restriction $d^{m_1,m_2}$ of the extended dissimilarity function $d$ for any possible choice of $m_1,m_2$ with $2 \leq m_1+m_2 \leq m$.

\begin{itemize}
\item[(L1)] Monotonicity with respect to lower orthant order\\
For all $2\le m_1+m_2\le m$
and for all
$ (\vec{\mathbb{X}}, \vec{\mathbb{Y}}), (\vec{\mathbb{X'}}, \vec{\mathbb{Y'}}) \in \Lina \times \Linb $, \linebreak
$(\vec{\mathbb{X}},\vec{\mathbb{Y}})\preceq_{lo}(\vec{\mathbb{X'}},\vec{\mathbb{Y'}})$
in the lower orthant order implies
$ d^{m_1,m_2} (\vec{\mathbb{X'}},\vec{\mathbb{Y'}})
	\le d^{m_1,m_2} (\vec{\mathbb{X}},\vec{\mathbb{Y}}) $.

\item[(L1c)] Monotonicity with respect to concordance order\\
For all $2\le m_1+m_2\le m$ and for all
$ (\vec{\mathbb{X}}, \vec{\mathbb{Y}}), (\vec{\mathbb{X'}}, \vec{\mathbb{Y'}}) \in \Lina \times \Linb $, \linebreak
$(\vec{\mathbb{X}},\vec{\mathbb{Y}})\preceq_C(\vec{\mathbb{X'}},\vec{\mathbb{Y'}})$
in the concordance order implies
$ d^{m_1,m_2} (\vec{\mathbb{X'}},\vec{\mathbb{Y'}})
	\le d^{m_1,m_2} (\vec{\mathbb{X}},\vec{\mathbb{Y}}) $.

\item[(L2)]  Rotation invariance\\
For all $2\le m_1+m_2\le m$
and for every $ (\vec{\mathbb{X}}, \vec{\mathbb{Y}}) \in \Lina \times \Linb $, it holds \linebreak
$ d^{m_1,m_2}(\vec{\mathbb{X}},\vec{\mathbb{Y}})
	= d^{m_1,m_2}(\mathbf{1}_{m_1}-\vec{\mathbb{X}},\mathbf{1}_{m_2}-\vec{\mathbb{Y}}) $, where $\mathbf{1}_n$ is a vector with all $n$ components equal to $1$.

\item[(L3)] Continuity\\
For all $2\le m_1+m_2\le m$,
any sequence
$ \{ \vec{\mathbb{Z}}_k=(\vec{\mathbb{X}},\vec{\mathbb{Y}})_k \}_{k\in \mathbb{N}} \subseteq \Lina \times \Linb $
and any vector $ \vec{\mathbb{Z}}=(\vec{\mathbb{X}}, \vec{\mathbb{Y}}) \in \Lina \times \Linb$,
if $\vec{\mathbb{Z}}_k$ weakly converges to $\vec{\mathbb{Z}}$ (as $k$ tends to $+\infty$), then
$ \lim_{k \to \infty}  d^{m_1,m_2}(\vec{\mathbb{X}},\vec{\mathbb{Y}})_k
= d^{m_1,m_2}(\vec{\mathbb{X}},\vec{\mathbb{Y}})$.
\end{itemize}

Property (L1) (respectively, (L1c)) implies that the dissimilarity degree is decreasing with respect to lower orthant (also called PLOD) order (respectively, concordance order).
For the definitions of these orderings see, for instance, \cite{Nel06,MueSto02}.
Since the upper bound of a random vector in the lower orthant (respectively, concordance) order is given by the comonotonic case, this property simply means that the dissimilarity degree tends to vanish as soon as one is approaching the comonotonic case. Notice that, in the bivariate case, lower orthant and concordance order coincide, while in higher dimensions  concordance order implies lower orthant order, but not vice versa (see, e.g., \cite{Joe90,MueSca00}).
Consequently, monotonicity with respect to lower orthant order (L1) implies monotonicity with respect to concordance order (L1c).

Property (L2) expresses the invariance of the dissimilarity degree with respect to the total reflection of the involved random vectors. The practical aspect of this property is that a change of sign in all the r.v.'s does not influence the clustering output.
Notice that this property may not be desirable when the dissimilarity degree should distinguish lower and upper tail behaviour of random vectors (see Section \ref{subsec:TDC} and, in particular, Remark \ref{TDC.rem}).

Property (L3) ensures that the dissimilarity degree is continuous with respect to weak convergence.
This latter property is usually required, for instance, for various measures of concordance (see, e.g., \cite{Fuc16Demo,Sca84,Tay16})
but it does not apply to the tail dependence coefficient (see Section \ref{subsec:TDC}).

\bigskip
Now, we provide some global properties of an extended dissimilarity function $d$ that connect the values of the dissimilarity at a given dimension, say $m_1+m_2$, with the values that it assumes at lower (respectively, higher) dimensions:

\begin{itemize}
\item[(G1)] Reducibility\\
For all $3\le m_{1}+m_{2}+m_3\le m$ and for every $ (\vec{\mathbb{X'}}, \vec{\mathbb{X''}}, \vec{\mathbb{Y}}) \in L^{0}(\mathbb{I}^{m_{1}}) \times L^{0}(\mathbb{I}^{m_{2}}) \times L^{0}(\mathbb{I}^{m_{3}}) $ such that $\vec{\mathbb{X'}}$, $\vec{\mathbb{X''}}$, and $\vec{\mathbb{Y}}$ are pairwise disjoint, if
\begin{equation}\label{eq:reducibility1}
d^{m_{1},m_{2}} (\vec{\mathbb{X'}},\vec{\mathbb{X''}})
\leq \min\big\{ d^{m_{1},m_3} (\vec{\mathbb{X'}},\vec{\mathbb{Y}}), d^{m_{2},m_3} (\vec{\mathbb{X''}},\vec{\mathbb{Y}}) \big\},
\end{equation}
then the inequality
\begin{equation}\label{eq:reducibility2}
  \min\big\{ d^{m_{1},m_3} (\vec{\mathbb{X'}},\vec{\mathbb{Y}}), d^{m_{2},m_3} (\vec{\mathbb{X''}},\vec{\mathbb{Y}}) \big\}
  \leq d^{m_{1}+m_{2},m_3} (\vec{\mathbb{X}},\vec{\mathbb{Y}})
\end{equation}
holds,
where $ \mathbb{X} := \mathbb{X'} \cup \mathbb{X''} $.

\item[(G1s)] Strict reducibility\\
(G1) holds with \eqref{eq:reducibility2} being strict
for some $3\le m_{1}+m_{2}+m_3\le m$ and at least one $ (\vec{\mathbb{X'}}, \vec{\mathbb{X''}}, \vec{\mathbb{Y}}) \in L^{0}(\mathbb{I}^{m_{1}}) \times L^{0}(\mathbb{I}^{m_{2}}) \times L^{0}(\mathbb{I}^{m_{3}}) $,
where $ \mathbb{X} := \mathbb{X'} \cup \mathbb{X''} $
and $ \mathbb{X'} $, $\mathbb{X''} $ and $\mathbb{Y} $ are pairwise disjoint.

\item[(G2)] Comonotonic invariance\\
For all $3\le m_1+m_2\le m$ with $ 2 \le m_1 $, the identity
$$
  d^{m_1,m_2} (\vec{\mathbb{X}},\vec{\mathbb{Y}})
	= d^{m_1-1,m_2} (\vec{\mathbb{X'}},\vec{\mathbb{Y}})
$$
holds whenever $\vec{\mathbb{X'}} \in L^{0}(\mathbb{I}^{m_1-1}) $ and
$ (\vec{\mathbb{X}}, \vec{\mathbb{Y}}) \in \Lina \times \Linb$ are random vectors
such that $\mathbb{X'}\cup \{X\}=\mathbb{X}$, where $ X \in \mathbb{X} $ is comonotonic with at least one element of $\mathbb{X'}$.
\end{itemize}

All these properties have an intuitive stochastic interpretation.
Property (G1) is usually referred to as {\it reducibility property} (see, e.g.,  \cite{Koj10}). It guarantees that the dissimilarity degree between two random vectors $\vec{\mathbb{X}}$ and $\vec{\mathbb{Y}}$ is larger than the dissimilarity degree between $\vec{\mathbb{Y}}$ and (at least) a subvector of $\vec{\mathbb{X}}$. Roughly speaking, increasing the diversity inside each group decreases the similarity between the groups.

\begin{example}
Given three r.v.'s $X',X'',Y$, property (G1) ensures that, if \eqref{eq:reducibility1} holds, i.e. $(X',X'')$ is the most similar pair among $(X',X'')$, $(X',Y)$ and $(X'',Y)$, then
$$
d^{1,1} (X',X'') \le
d^{2,1} ((X',X''),Y).
$$
\end{example}

The related property (G1s) says, furthermore, that there exist specific dependence structures such that the dissimilarity degree between $\vec{\mathbb{X}}$ and $\vec{\mathbb{Y}}$ is strictly larger than the dissimilarity degree between $\vec{\mathbb{Y}}$ and a subvector of $\vec{\mathbb{X}}$.
Clearly, property (G1s) implies property (G1), although the converse implication is not true (see Theorem \ref{LinkageMethodsTheorem}).

On the other side, property (G2) ensures that
the dissimilarity degree between $\vec{\mathbb{X'}}$ and $\vec{\mathbb{Y}}$ does not change if we add to $\vec{\mathbb{X'}}$ another random variable that is comonotone with at least one element of $\vec{\mathbb{X'}}$.
Property (G2) is similar to the \emph{point proportion admissible} property considered for data points in \cite{FisNes71} that states that ``if after we duplicate one or more points any number of times and reapply the procedure the boundaries of the clusters are not changed at any stage.''. Here, in fact, we recall that two comonotonic r.v.'s are equal up to increasing transformations (see, e.g., \cite{DurSem16}).

Obviously, because of the symmetry of the dissimilarity function in \eqref{eq:simmetry}, properties (G1), (G1s) and (G2) can be also reformulated for the second argument of the involved dissimilarity functions.


\section{Extended dissimilarity functions: properties and examples}\label{sec:example}

In the following section, we provide various examples of dissimilarity functions and we study whether they satisfy some of the previously introduced properties.
Recall that, in view of property \eqref{local.diss.3}, any $(m_1,m_2)$-dissimilarity function aiming at quantifying the proximity degree of two random vectors $\vec{\mathbb{X}}$ and $\vec{\mathbb{Y}}$ of dimension $m_1$ and $m_2$, respectively, only depends on the $(m_1+m_2)$-dimensional copula $C$ of the random vector $(\vec{\mathbb{X}},\vec{\mathbb{Y}})$. Thus, in some cases, it could be also convenient to define the dissimilarity functions directly in terms of $C$.

Moreover, for the sake of a concise use of copulas and their margins,
for $ L \subseteq \{1,...,m\} $, we define the map
$ \boldsymbol{\eta}_{L}: \mathbb{I}^{m} \times \mathbb{I}^{m} \to \mathbb{I}^{m} $ given coordinatewise by
\begin{equation*}
  \big( \boldsymbol{\eta}_{L} (\uuu,\vvv) \big)_{\ell} :=
  \begin{cases}
    u_{\ell} & \ell \in \{1,...,m\} \backslash L     \\
    v_{\ell} & \ell \in L
  \end{cases}
\end{equation*}
and, for $ l \in \{1,...,m\} $, we put $ \boldsymbol{\eta}_{l} := \boldsymbol{\eta}_{\{l\}} $.
We denote by
$ \nul $ the vector with all entries equal to $0$,
by $ \ein $ the vector with all entries equal to $1$ and by $ \CC^k $ the collection of all $k$-dimensional copulas, $ 2 \leq k \leq m $.
For any subset $ L = \{l_{1},...,l_{|L|}\} \subseteq \{ 1, \dots, m\} $ with $ 2 \leq |L| \leq m $
such that $ l_{i} < l_{j} $ for all $ i,j \in \{1,...,|L|\} $ with $ i < j $, we further define by $T_{L}(C)$ the lower dimensional margin of the copula $C$ related to the indices of the components of $C$ belonging to $L$.

\begin{example}\label{ExampleMargins}$\,$
\begin{itemize}
\item
The identity $ T_{L} (M) = M $ holds for every $ L \subseteq \{1,...,m\} $ with $ 2 \leq |L| \leq m $.
\item The identity $ T_{L} (\Pi) = \Pi $ holds for every $ L \subseteq \{1,...,m\} $ with $ 2 \leq |L| \leq m $. Here, $\Pi$ is the independence copula given, for all $u_1,\dots,u_m$ in $\mathbb I$, by $\Pi(u_1,\dots,u_m)=\prod_{i=1}^m u_i$.
\item
For all $3\le m_1+m_2+m_3\le m$ and for
every random vector
$ (\vec{\mathbb{X'}}, \vec{\mathbb{X''}}, \vec{\mathbb{Y}})
  \in L^{0}(\mathbb{I}^{m_{1}}) \times L^{0}(\mathbb{I}^{m_{2}}) \times L^{0}(\mathbb{I}^{m_{3}}) $
with copula $ C_{(\vec{\mathbb{X'}}, \vec{\mathbb{X''}}, \vec{\mathbb{Y}})} \in \CC^{m_1 + m_2 + m_3} $,
the copulas $ C_{(\vec{\mathbb{X'}}, \vec{\mathbb{Y}})} \in \CC^{m_1 + m_3} $ and
$ C_{(\vec{\mathbb{X''}}, \vec{\mathbb{Y}})} \in \CC^{m_2 + m_3} $ satisfy
$ C_{(\vec{\mathbb{X'}}, \vec{\mathbb{Y}})}
  = T_{\{1,...,m_1+m_2+m_3\} \backslash \{m_1+1,...,m_1+m_2\}} \big( C_{(\vec{\mathbb{X}}, \vec{\mathbb{Y}})} \big) $
and \linebreak
$  C_{(\vec{\mathbb{X''}}, \vec{\mathbb{Y}})}
  = T_{\{m_1+1,...,m_1+m_2+m_3\}} \big( C_{(\vec{\mathbb{X}}, \vec{\mathbb{Y}})} \big). $
\end{itemize}
\end{example}{}

For every $k$ such that $2\le k\le m$, we further define the map $ [\cdot\,,\cdot]: \CC^k \times \CC^k \to \mathbb{R} $ introduced, e.g., in \cite{fuc2016a} and given by
$$
[C,D] := \int_{\mathbb{I}^{k}} C ({\bf u}) \; \mathrm{d} Q^D({\bf u})
$$
where $Q^D$ denotes the probability measure associated with the copula $D$.
The map $ [.\,,.] $ is linear with respect to convex combinations in both arguments and is therefore called a {\it biconvex form}.
Moreover, the map $ [\cdot\,,\cdot] $ satisfies $ [M,M] = 1/2 $ and $ [\Pi,\Pi] = 1/2^{k} $.

The following technical result will be needed in the following and it is reported here.

\begin{lemma}\label{AppendixLemma1}
Consider $ 2 \leq k \leq m $ and $ C \in \CC^{k} $ satisfying $ T_{\{i,j\}} (C)= M $
for some $ i, j \in \{1,...,k\} $ with $ i \neq j $.
\begin{itemize}
\item[(i)] Then $ Q^C \big[ \big\{ \uuu \in \mathbb{I}^k \, \big| \, u_i = u_j \big\} \big] = 1 $.
\item[(ii)] The identity
$ \int_{\mathbb{I}^k} f(\uuu) \; \mathrm{d} Q^C (\uuu)
	= \int_{\mathbb{I}^k} f (\boldsymbol{\eta}_{i} (\uuu, u_j \, \eee_i)) \; \mathrm{d} Q^C (\uuu) $
holds for every measurable function $ f: \mathbb{I}^k \to \R $.
\item[(iii)] The identity
$ C(u \, {\bf 1})
	= C (\boldsymbol{\eta}_{i} ( u \, {\bf 1}, {\bf 1} )) $
holds for every $ u \in \mathbb{I} $. 
\item[(iv)] Then
$ [C,C] = [T_{\{1,...,k\} \backslash\{i\}} (C), T_{\{1,...,k\} \backslash\{i\}} (C) ] $.
\item[(v)] The identity $ T_{i,l} (C)  = T_{j,l} (C) $
holds for every $ l \in \{1,...,k\} \backslash \{i,j\} $.
\end{itemize}
\end{lemma}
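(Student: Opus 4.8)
The plan is to derive parts (ii)--(v) from part (i), so I would establish (i) first, as it is the structural heart of the lemma. The hypothesis $T_{\{i,j\}}(C) = M$ says precisely that, for $\uuu \sim Q^C$, the bivariate margin $(u_i,u_j)$ has copula $M$, i.e. $u_i$ and $u_j$ are comonotonic with uniform marginals. To obtain $Q^C[\{u_i = u_j\}] = 1$ I would show $Q^C[\{u_i \le s < u_j\}] = 0$ for every $s \in \mathbb{I}$: since $T_{\{i,j\}}(C) = M$ gives $Q^C[\{u_i \le s,\, u_j \le s\}] = \min(s,s) = s = Q^C[\{u_i \le s\}]$, we get $Q^C[\{u_i \le s,\, u_j > s\}] = 0$, and symmetrically with $i,j$ exchanged. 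Writing $\{u_i < u_j\} = \bigcup_{s \in \mathbb{Q} \cap \mathbb{I}} \{u_i \le s < u_j\}$ as a countable union of $Q^C$-null sets, and likewise for $\{u_i > u_j\}$, yields $Q^C[\{u_i \ne u_j\}] = 0$, which is (i).

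Parts (ii), (iii) and (v) then follow as instances of the principle that integration against $Q^C$ does not see the null set $\{u_i \ne u_j\}$. For (ii), on the full-measure set $\{u_i = u_j\}$ one has $\boldsymbol{\eta}_i(\uuu, u_j\,\eee_i) = \uuu$ (replacing the $i$-th coordinate by $u_j = u_i$ changes nothing), so $f(\uuu) = f(\boldsymbol{\eta}_i(\uuu, u_j\,\eee_i))$ holds $Q^C$-a.e. and the two integrals coincide for every measurable $f$. For (v), I would write $T_{i,l}(C)(s,t) = Q^C[\{u_i \le s,\, u_l \le t\}]$ and $T_{j,l}(C)(s,t) = Q^C[\{u_j \le s,\, u_l \le t\}]$ and note that the symmetric difference of the events $\{u_i \le s\}$ and $\{u_j \le s\}$ is contained in $\{u_i \ne u_j\}$, so the two values agree for all $(s,t)$. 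For (iii), recalling that $C(\ppp)$ equals the $Q^C$-measure of the box $\prod_{\ell}[0,p_\ell]$, the events defining $C(u\,\ein)$ and $C(\boldsymbol{\eta}_i(u\,\ein, \ein))$ differ only in whether the constraint $u_i \le u$ is imposed; since the constraint $u_j \le u$ is present in both and $u_i = u_j$ $Q^C$-a.e., the two events coincide up to a null set and the copula values are equal.

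For part (iv) I would exploit the bilinear, product-measure representation of $[\cdot\,,\cdot]$. Using $C(\uuu) = \int_{\mathbb{I}^k} \mathbf{1}\{v_\ell \le u_\ell \text{ for all } \ell\}\,\mathrm{d}Q^C(\vvv)$ and Fubini gives $[C,C] = \int_{\mathbb{I}^k} C(\uuu)\,\mathrm{d}Q^C(\uuu) = (Q^C \otimes Q^C)\big[\{(\uuu,\vvv) : v_\ell \le u_\ell \text{ for all } \ell\}\big]$, i.e. the probability that two independent $Q^C$-samples are coordinatewise ordered. By (i) both samples satisfy $u_i = u_j$ and $v_i = v_j$ almost surely, so the constraint $v_i \le u_i$ is $(Q^C \otimes Q^C)$-a.e. implied by the constraint $v_j \le u_j$ and can be dropped. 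The resulting event depends only on the coordinates $\ell \ne i$, and the projection of $Q^C$ onto those coordinates is by definition $Q^{\tilde C}$ with $\tilde C := T_{\{1,\dots,k\}\setminus\{i\}}(C)$; rereading the product measure through this projection gives exactly $[\tilde C, \tilde C]$.

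The genuinely conceptual step is (i); once the support is concentrated on $\{u_i = u_j\}$, parts (ii), (iii) and (v) are routine null-set arguments. The step I expect to be the main (if modest) obstacle is the measure-theoretic bookkeeping in (iv): the passage from the iterated $Q^C$-integral to the product measure $Q^C \otimes Q^C$, and the clean identification of the projected measure with $Q^{\tilde C}$. Once the redundant $i$-th constraint has been eliminated via (i), this identification is immediate, so the difficulty is one of careful formulation rather than of any new idea.
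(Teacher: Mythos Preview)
Your proposal is correct and follows essentially the same route as the paper: (i) is obtained by showing that the pushforward of $Q^C$ onto the $(i,j)$-coordinates is $Q^M$ (the paper phrases this via the projection map, you via the countable union $\{u_i<u_j\}=\bigcup_{s\in\mathbb{Q}\cap\mathbb{I}}\{u_i\le s<u_j\}$), and (ii)--(v) are then the same null-set/Fubini consequences. The only cosmetic difference is in (iv), where the paper invokes an external marginalization identity for the biconvex form while you spell out the equivalent product-measure argument directly.
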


\begin{proof}
For $ p,q \in \{1,...,k\} $ with $ p \neq q $,
we define the projection
$ {\rm proj}_{\{p,q\}}: \mathbb{I}^k \to \mathbb{I}^2$,
$ {\rm proj}_{\{p,q\}} (\uuu) := (u_p, u_q)$.
Then
$ (Q^C)_{{\rm proj}_{\{i,j\}}} [[0,v_1] \times [0,v_2]]
	= C (\boldsymbol{\eta}_{\{i,j\}} ({\bf 1}, v_1 \, \eee_i + v_2 \, \eee_j))
	= (T_{\{i,j\}} (C)) (v_1,v_2)
	= M (v_1,v_2) $
for every $ \vvv \in \mathbb{I}^2 $ and hence
$ (Q^C)_{{\rm proj}_{\{i,j\}}}
  = Q^{T_{\{i,j\}} (C)}
	= Q^M $
which implies
$$
  Q^C [\{ \uuu \in \mathbb{I}^k \, | \, u_i < u_j \}]
	= (Q^C)_{{\rm proj}_{\{i,j\}}} [\{ \vvv \in \mathbb{I}^2 \, | \, v_1 < v_2 \}]
	= Q^M [\{ \vvv \in \mathbb{I}^2 \, | \, v_1 < v_2 \}]
	= 0
$$
Thus,
$ Q^C [\{\uuu \in \mathbb{I}^k \, | \, u_i = u_j\}] = 1 $
which proves (i) and, immediately, implies (ii).
Now, consider $ u \in \mathbb{I} $. Then, (ii) yields
\begin{eqnarray*}
  C(u \, {\bf 1})
	& = & \int_{\mathbb{I}^{k}} \chi_{[{\bf 0}, u \, {\bf 1}]} (\vvv) \; \mathrm{d} Q^C (\vvv)
	\\
	& = & \int_{\mathbb{I}^{k}} \chi_{[{\bf 0}, u \, {\bf 1}]} \big( \boldsymbol{\eta}_{i} (\vvv, v_j \, \eee_i) \big) \; \mathrm{d} Q^C (\vvv)
	\\
	& = & \int_{\mathbb{I}^{k}} \prod_{l=1, l \neq i}^k \chi_{[0,u]} (v_l) \; \mathrm{d} Q^C (\vvv)
	= C (\boldsymbol{\eta}_{i} (u \, {\bf 1}, {\bf 1}))
\end{eqnarray*}
where $\chi_B$ denotes the indicator function with respect to the set $B$. This proves (iii).
Moreover, (ii) together with \cite[Theorem 5.3.1]{fuc2015} yields
\begin{eqnarray*}
  [C,C]
	& = & \int_{\mathbb{I}^{k}} C (\uuu) \; \mathrm{d} Q^{C} (\uuu)
	= \int_{\mathbb{I}^{k}} C (\boldsymbol{\eta}_{i} (\uuu, u_j \, \eee_i)) \; \mathrm{d} Q^{C} (\uuu)
	\\
	& = & \int_{\mathbb{I}^{k}} \int_{\mathbb{I}^{k}}
	     \chi_{[{\bf 0}, \boldsymbol{\eta}_{i} (\uuu, u_j \, \eee_i)]} (\vvv) \; \mathrm{d} Q^{C} (\vvv) \mathrm{d} Q^{C} (\uuu)
	\\
	& = & \int_{\mathbb{I}^{k}} \int_{\mathbb{I}^{k}}
	     \chi_{[{\bf 0}, \boldsymbol{\eta}_{i} (\uuu, u_j \, \eee_i)]} (\boldsymbol{\eta}_{i} (\vvv, v_j \, \eee_i)) \; \mathrm{d} Q^{C} (\vvv) \mathrm{d} Q^{C} (\uuu)
	\\
	& = & \int_{\mathbb{I}^{k}} \int_{\mathbb{I}^{k}}
	      \prod_{l=1, l \neq i}^k \chi_{[0, u_l]} (v_l) \; \mathrm{d} Q^{C} (\vvv) \mathrm{d} Q^{C} (\uuu)
	\\
	& = & \int_{\mathbb{I}^{k}} \int_{\mathbb{I}^{k}}
	     \chi_{[{\bf 0}, \boldsymbol{\eta}_{i} (\uuu, 1 \, \eee_i)]} (\vvv) \; \mathrm{d} Q^{C} (\vvv) \mathrm{d} Q^{C} (\uuu)
	= \int_{\mathbb{I}^{k}} C (\boldsymbol{\eta}_{i} (\uuu, 1 \, \eee_i)) \; \mathrm{d} Q^{C} (\uuu)
	\\
	& = & [T_{\{1,...,k\} \backslash\{i\}} (C), T_{\{1,...,k\} \backslash\{i\}} (C)]
\end{eqnarray*}
This proves (iv).
Finally, consider $ l \in \{1,...,k\} \backslash \{i,j\} $.
Applying (ii) we obtain
\begin{eqnarray*}
  \big( T_{i,l} (C) \big) (v_1,v_2)
	& = & \int_{\mathbb{I}^2} \chi_{[0,v_1] \times [0,v_2]} (w_1,w_2) \; \mathrm{d} Q^{T_{i,l} (C)} (w_1,w_2)
	\\
	& = & \int_{\mathbb{I}^2} \chi_{[0,v_1] \times [0,v_2]} (w_1,w_2) \; \mathrm{d} \big( Q^C \big)_{{\rm proj}_{\{i,l\}}}  (w_1,w_2)
	\\
	& = & \int_{\mathbb{I}^k} \chi_{[0,v_1] \times [0,v_2]} ({\rm proj}_{\{i,l\}}(\uuu)) \; \mathrm{d} Q^C (\uuu)
	\\
	& = & \int_{\mathbb{I}^k} \chi_{[0,v_1] \times [0,v_2]} (u_i,u_l) \; \mathrm{d} Q^C (\uuu)
	\\
	& = & \int_{\mathbb{I}^k} \chi_{[0,v_1] \times [0,v_2]} (u_j,u_l) \; \mathrm{d} Q^C (\uuu)
	= \big( T_{j,l} (C) \big) (v_1,v_2)
\end{eqnarray*}
for every $ \vvv \in \mathbb{I}^2 $. This proves (v).
\end{proof}


\subsection{Extended dissimilarity functions based on linkage methods and a pairwise dissimilarity function}

First, we introduce dissimilarity functions that are defined in a similar way as in the classical hierarchical clustering algorithms, i.e. via single, average and complete linkage.

Consider a $(1,1)$-dissimilarity function $ d^{1,1} $ and $ m_1, m_2 \in \mathbb{N} $ with $ 2 \leq m_1 + m_2 \leq m $.
We define the maps
$ d^{m_1,m_2}_{\rm min}, d^{m_1,m_2}_{\rm ave}, d^{m_1,m_2}_{\rm max} \colon
  \Lina\times\Linb\to\mathbb{R}_+  $ by letting
\begin{eqnarray*}
  d^{m_1,m_2}_{\rm min} (\vec{\mathbb{X}},\vec{\mathbb{Y}})
	& := & \min \big\{ d^{1,1} (X,Y) \, \big| \, X \in \mathbb{X}, Y \in \mathbb{Y} \big\}
	\\
  d^{m_1,m_2}_{\rm ave} (\vec{\mathbb{X}},\vec{\mathbb{Y}})
	& := & \frac{1}{m_1 \, m_2} \sum_{X \in \mathbb{X}} \sum_{Y \in \mathbb{Y}} d^{1,1} (X,Y)
  \\
  d^{m_1,m_2}_{\rm max} (\vec{\mathbb{X}},\vec{\mathbb{Y}})
	& := & \max \big\{ d^{1,1} (X,Y) \, \big| \, X \in \mathbb{X}, Y \in \mathbb{Y} \big\}
\end{eqnarray*}
It is straightforward to show that,
for all $ 2 \leq m_1 + m_2 \leq m $,
$ d^{m_1,m_2}_{\rm min} $,
$ d^{m_1,m_2}_{\rm ave} $ and
$ d^{m_1,m_2}_{\rm max} $ are $(m_1,m_2)$-dissimilarity functions.
Thus, they can be extended as mappings from \linebreak
$\bigcup_{2 \leq m_1+m_2 \leq m} \Lina\times\Linb$ to $[0,+\infty[$
denoted, respectively, by $d_{\rm min}, d_{\rm ave}, d_{\rm max}$.
The mappings $d_{\rm min}$, $d_{\rm ave}$ and $d_{\rm max}$ are called, respectively, the \emph{single, average and complete extended dissimilarity functions induced by $d^{1,1}$}.

In the sequel, we focus on the extended dissimilarity function based on the following $(1,1)$-dissimilarity functions (see Section \ref{SubsectionExtDissMOD})
\begin{align}\label{eq:Blomqvist}
d^{1,1}_{\beta} (X,Y)
&:= \frac{1}{2} - C_{(X,Y)} \big( \tfrac{1}{2}, \tfrac{1}{2} \big)\\\label{eq:Footrule}
d^{1,1}_{\phi} (X,Y)
&:= \frac{1}{2} - \big[ C_{(X,Y)}, M \big]\\
\label{eq:Kendall}
d^{1,1}_{\tau} (X,Y)
&:= \frac{1}{2} - \big[ C_{(X,Y)}, C_{(X,Y)} \big]\\
\label{eq:Spearman}
d^{1,1}_{\rho} (X,Y)
& := \frac{1}{3} - \big[ C_{(X,Y)}, \Pi \big]
\end{align}
The function $d^{1,1}_{\beta}$ is related to the pairwise version of medial correlation coefficient (also known as Blomqvist's beta),
$d^{1,1}_{\phi}$ is related to the pairwise version of Spearman's footrule, and the functions
$d^{1,1}_{\tau}$ and $d^{1,1}_{\rho}$ are related to pairwise Kendall's tau and pairwise Spearman's rho.
In Section \ref{SubsectionExtDissMOD} we list some properties of these $(1,1)$-dissimilarity functions.

In the following we study whether single, average and complete extended dissimilarity functions satisfy some desirable properties;
having in mind that, in the bivariate case, lower orthant and concordance order coincide, the next result is straightforward.

\begin{theorem} \label{MoDLinkage}
Let $d_{\rm min}$, $d_{\rm ave}$ and $d_{\rm max}$ be the extended dissimilarity functions induced by $d^{1,1}$. Then:
\begin{itemize}
\item[(i)] $ d_{\rm min} $, $ d_{\rm ave} $ and $ d_{\rm max} $ satisfy (L1) and (L1c) whenever $ d^{1,1} $ is decreasingly monotone with respect to lower orthant order;

\item[(ii)] $ d_{\rm min} $, $ d_{\rm ave} $  and $ d_{\rm max} $ satisfy ($L2$) whenever $d^{1,1}(X,Y)=d^{1,1}(1-X,1-Y)$ for all $X,Y\in L^{0}(\mathbb{I})$;

\item[(iii)] $ d_{\rm min} $, $ d_{\rm ave} $ and $ d_{\rm max} $ satisfy ($L3$) whenever $ d^{1,1}$ is continuous with respect to weak convergence.
\end{itemize}
\end{theorem}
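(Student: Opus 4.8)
The plan is to exploit the fact that each of $d_{\rm min}$, $d_{\rm ave}$ and $d_{\rm max}$ is obtained by aggregating the values $d^{1,1}(X_i,Y_j)$ of the underlying $(1,1)$-dissimilarity over all pairs $X_i\in\mathbb{X}$, $Y_j\in\mathbb{Y}$ by means of, respectively, the minimum, the arithmetic mean and the maximum. Since each such pair $(X_i,Y_j)$ is a bivariate margin of the joint vector $(\vec{\mathbb{X}},\vec{\mathbb{Y}})$, and since the three structural notions involved (lower orthant order, total reflection, weak convergence) all pass to bivariate margins, every statement reduces to the corresponding hypothesis on $d^{1,1}$ applied pairwise, combined with the elementary observation that the minimum, the arithmetic mean and the maximum are monotone and continuous aggregation operators.

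For part (i), I would first record that $(\vec{\mathbb{X}},\vec{\mathbb{Y}})\preceq_{lo}(\vec{\mathbb{X'}},\vec{\mathbb{Y'}})$ means $C_{(\vec{\mathbb{X}},\vec{\mathbb{Y}})}\le C_{(\vec{\mathbb{X'}},\vec{\mathbb{Y'}})}$ pointwise on $\mathbb{I}^{m_1+m_2}$. Setting all coordinates except those in positions $i$ and $m_1+j$ equal to $1$ yields $C_{(X_i,Y_j)}\le C_{(X'_i,Y'_j)}$ pointwise on $\mathbb{I}^2$, i.e.\ $(X_i,Y_j)\preceq_{lo}(X'_i,Y'_j)$ for every pair. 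By the hypothesis that $d^{1,1}$ is decreasing with respect to the lower orthant order, $d^{1,1}(X'_i,Y'_j)\le d^{1,1}(X_i,Y_j)$ for all $i,j$. This termwise inequality is preserved by each aggregator: for the minimum one bounds $d^{m_1,m_2}_{\rm min}(\vec{\mathbb{X'}},\vec{\mathbb{Y'}})$ by the value at the pair attaining $d^{m_1,m_2}_{\rm min}(\vec{\mathbb{X}},\vec{\mathbb{Y}})$; for the maximum one bounds $d^{m_1,m_2}_{\rm max}(\vec{\mathbb{X'}},\vec{\mathbb{Y'}})$, evaluated at the pair attaining it, by the corresponding unprimed value; and for the average the inequality follows by summing. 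Hence (L1) holds for all three functions, and (L1c) follows from the already noted implication that monotonicity with respect to the lower orthant order entails monotonicity with respect to the concordance order.

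For part (ii), the bivariate margin of the reflected vector $(\mathbf{1}_{m_1}-\vec{\mathbb{X}},\mathbf{1}_{m_2}-\vec{\mathbb{Y}})$ in positions $i$ and $m_1+j$ is exactly $(1-X_i,1-Y_j)$. The hypothesis $d^{1,1}(X,Y)=d^{1,1}(1-X,1-Y)$ gives $d^{1,1}(1-X_i,1-Y_j)=d^{1,1}(X_i,Y_j)$ for every pair, so the collection of pairwise values is unchanged termwise by reflection; consequently its minimum, mean and maximum are unchanged, which is precisely (L2). For part (iii), weak convergence of $\vec{\mathbb{Z}}_k$ to $\vec{\mathbb{Z}}$ forces weak convergence of each bivariate margin $(X_i,Y_j)_k$ to $(X_i,Y_j)$; by continuity of $d^{1,1}$ with respect to weak convergence, $d^{1,1}((X_i,Y_j)_k)\to d^{1,1}(X_i,Y_j)$ for each of the finitely many pairs. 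Since the minimum, the arithmetic mean and the maximum of finitely many convergent sequences converge to the corresponding aggregate of the limits, (L3) holds for $d_{\rm min}$, $d_{\rm ave}$ and $d_{\rm max}$.

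The only genuinely structural ingredient is the passage of each property to the bivariate margins; once this is in place every remaining step is elementary, which is why the result is essentially routine. I would expect the mildest care to be needed in the $\min$ and $\max$ cases of part (i), where one must exhibit the correct witnessing pair rather than argue termwise as for the average.
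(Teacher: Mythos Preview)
Your proposal is correct and is precisely the straightforward argument the paper has in mind; the paper itself does not spell out a proof, merely noting that ``in the bivariate case, lower orthant and concordance order coincide'' and declaring the result immediate. Your write-up simply makes explicit the passage to bivariate margins and the preservation of the relevant properties under the $\min$/average/$\max$ aggregators, which is exactly what underlies the paper's one-line justification.
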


In the following theorem we show that the single,
the average and the complete extended dissimilarity functions satisfy some of the global properties introduced above.

\begin{theorem}\label{LinkageMethodsTheorem}
Let $d_{\rm min}$, $d_{\rm ave}$ and $d_{\rm max}$ be the extended dissimilarity functions induced by $d^{1,1}$. Then:
\begin{itemize}
\item[(i)] $d_{\rm min}$  satisfies (G1) and (G2),
but fails to satisfy (G1s);
\item[(ii)] $d_{\rm ave}$ satisfies (G1);
\item[(iii)] $d_{\rm max}$ satisfies (G1) and (G2).
\end{itemize}
\end{theorem}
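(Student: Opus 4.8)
The plan is to reduce everything to a single decomposition identity for each linkage, obtained by splitting the first argument along the disjoint union $\mathbb{X}=\mathbb{X'}\cup\mathbb{X''}$. First I would record, straight from the definitions and from the fact that pairwise disjointness makes $\mathbb{X'}$ and $\mathbb{X''}$ partition the index set over which the extremum or the average is taken, that with $a:=d^{m_1,m_3}(\vec{\mathbb{X'}},\vec{\mathbb{Y}})$ and $b:=d^{m_2,m_3}(\vec{\mathbb{X''}},\vec{\mathbb{Y}})$ (for the relevant linkage) one has
$$
d^{m_1+m_2,m_3}_{\rm min}(\vec{\mathbb{X}},\vec{\mathbb{Y}})=\min\{a,b\},\quad
d^{m_1+m_2,m_3}_{\rm max}(\vec{\mathbb{X}},\vec{\mathbb{Y}})=\max\{a,b\},\quad
d^{m_1+m_2,m_3}_{\rm ave}(\vec{\mathbb{X}},\vec{\mathbb{Y}})=\tfrac{m_1}{m_1+m_2}\,a+\tfrac{m_2}{m_1+m_2}\,b .
$$
Property (G1) then follows at once for all three functions: since a weighted average lies between its two values and $\min\{a,b\}\le\max\{a,b\}$, in every case the left-hand side is $\ge\min\{a,b\}$, which is exactly \eqref{eq:reducibility2}. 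I would emphasise that this argument never uses the hypothesis \eqref{eq:reducibility1}, so (G1) in fact holds unconditionally.

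For the failure of (G1s) for $d_{\rm min}$ I would invoke that its decomposition is an \emph{equality}, $d^{m_1+m_2,m_3}_{\rm min}(\vec{\mathbb{X}},\vec{\mathbb{Y}})=\min\{a,b\}$; hence the inequality \eqref{eq:reducibility2} is always attained with equality and can never be strict, so no admissible configuration makes it strict and (G1s) fails.

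The two invariance statements (G2) reduce to the observation that, in the uniform setting, comonotonicity collapses to almost-sure equality. If $X,X_0\in L^{0}(\mathbb{I})$ are comonotonic, then by the equivalent formulation (a) of comonotonicity, together with $F^{(-1)}(U)=U$ for uniform margins, $(X,X_0)\stackrel{d}{=}(U,U)$, whence $X=X_0$ almost surely; consequently $(X,Y)\stackrel{d}{=}(X_0,Y)$ for every $Y$, and law-invariance \eqref{local.diss.3} gives $d^{1,1}(X,Y)=d^{1,1}(X_0,Y)$. Adjoining to $\mathbb{X'}$ a variable $X$ comonotonic with some $X_0\in\mathbb{X'}$ therefore introduces, for $d_{\rm min}$, only the extra candidates $\min_{Y\in\mathbb{Y}}d^{1,1}(X,Y)=\min_{Y\in\mathbb{Y}}d^{1,1}(X_0,Y)\ge d^{m_1-1,m_2}_{\rm min}(\vec{\mathbb{X'}},\vec{\mathbb{Y}})$ (the inequality because $X_0\in\mathbb{X'}$), so the overall minimum is unchanged; symmetrically, for $d_{\rm max}$ the extra candidate satisfies $\max_{Y}d^{1,1}(X,Y)=\max_{Y}d^{1,1}(X_0,Y)\le d^{m_1-1,m_2}_{\rm max}(\vec{\mathbb{X'}},\vec{\mathbb{Y}})$, so the overall maximum is unchanged. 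This establishes (G2) for $d_{\rm min}$ and $d_{\rm max}$.

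The only genuinely non-mechanical step is this collapse of comonotonicity to almost-sure equality that underlies (G2); everything else is bookkeeping of minima, maxima and averages over a partitioned index set. The one delicate point I would state carefully is that (G1) holds \emph{without} invoking its hypothesis, which is precisely why single linkage, producing equality in \eqref{eq:reducibility2}, fails the strict version (G1s).
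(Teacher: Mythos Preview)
Your proposal is correct and follows essentially the same approach as the paper: the decomposition identities you record for $d_{\rm min}$, $d_{\rm ave}$, $d_{\rm max}$ under the disjoint split $\mathbb{X}=\mathbb{X'}\cup\mathbb{X''}$ are exactly what the paper computes, and your conclusions for (G1) and the failure of (G1s) for $d_{\rm min}$ mirror theirs (including the observation that hypothesis \eqref{eq:reducibility1} is never used). The only minor difference is in (G2): the paper invokes its Lemma~\ref{AppendixLemma1}(v) (a copula-level statement that $T_{\{i,l\}}(C)=T_{\{j,l\}}(C)$ whenever $T_{\{i,j\}}(C)=M$) to obtain $d^{1,1}(X',Y)=d^{1,1}(X'',Y)$, whereas you argue directly that comonotonic uniform variables coincide almost surely and then use law-invariance---these are equivalent, yours being marginally more elementary.
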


\begin{proof}
We first prove (G1). To this end, consider
$ 3 \le m_{1}+m_{2}+m_3 \le m $,
the random vector
$ (\vec{\mathbb{X'}}, \vec{\mathbb{X''}}, \vec{\mathbb{Y}})
  \in L^{0}(\mathbb{I}^{m_{1}}) \times L^{0}(\mathbb{I}^{m_{2}}) \times L^{0}(\mathbb{I}^{m_{3}})$
satisfying
$ d^{m_{1},m_{2}} (\vec{\mathbb{X'}},\vec{\mathbb{X''}}) \leq \linebreak
\min\big\{ d^{m_{1},m_3} (\vec{\mathbb{X'}},\vec{\mathbb{Y}}), d^{m_{2},m_3} (\vec{\mathbb{X''}},\vec{\mathbb{Y}}) \big\} $
such that $ \mathbb{X'} $, $\mathbb{X''} $ and $\mathbb{Y} $ are pairwise disjoint
and put $ \mathbb{X} := \mathbb{X'} \cup \mathbb{X''} $.
Then
\begin{eqnarray*}
  \lefteqn{\min\big\{ d_{\rm min}^{m_{1},m_3} (\vec{\mathbb{X'}},\vec{\mathbb{Y}}),
											d_{\rm min}^{m_{2},m_3} (\vec{\mathbb{X''}},\vec{\mathbb{Y}}) \big\}}
	\\
	& = & \min\Big\{ \min \big\{ d^{1,1} (X,Y) \, \big| \, X \in \mathbb{X'}, Y \in \mathbb{Y} \big\},									\min \big\{ d^{1,1} (X,Y) \, \big| \, X \in \mathbb{X''}, Y \in \mathbb{Y} \big\} \Big\}
	\\
	& = & \min \big\{ d^{1,1} (X,Y) \, \big| \, X \in \mathbb{X}, Y \in \mathbb{Y} \big\}
	\\
	& = & d^{m_1+m_2,m_3}_{\rm min} (\vec{\mathbb{X}},\vec{\mathbb{Y}}).
\end{eqnarray*}
Thus, $d_{\rm min}$ satifies (G1), but cannot satisfy (G1s).
Moreover,
\begin{eqnarray*}
  \lefteqn{\min\big\{ d_{\rm ave}^{m_{1},m_3} (\vec{\mathbb{X'}},\vec{\mathbb{Y}}),
											d_{\rm ave}^{m_{2},m_3} (\vec{\mathbb{X''}},\vec{\mathbb{Y}}) \big\}}
	\\
	&   =  & \min\left\{ \frac{1}{m_1 \, m_3} \sum_{X \in \mathbb{X'}} \sum_{Y \in \mathbb{Y}} d^{1,1} (X,Y), 										 \frac{1}{m_2 \, m_3} \sum_{X \in \mathbb{X''}} \sum_{Y \in \mathbb{Y}} d^{1,1} (X,Y) \right\}
  \\
	& \leq & \frac{m_1}{m_1+m_2} \; \frac{1}{m_1 \, m_3} \sum_{X \in \mathbb{X'}} \sum_{Y \in \mathbb{Y}} d^{1,1} (X,Y) + 		 \frac{m_2}{m_1+m_2} \; \frac{1}{m_2 \, m_3} \sum_{X \in \mathbb{X''}} \sum_{Y \in \mathbb{Y}} d^{1,1} (X,Y)
	\\
	&   =  & \frac{1}{(m_1+m_2) \, m_3} \sum_{X \in \mathbb{X}} \sum_{Y \in \mathbb{Y}} d^{1,1} (X,Y)
	\\*
	&   =  & d^{m_1+m_2,m_3}_{\rm ave} (\vec{\mathbb{X}},\vec{\mathbb{Y}}),
  \\
  \lefteqn{\min\big\{ d_{\rm max}^{m_{1},m_3} (\vec{\mathbb{X'}},\vec{\mathbb{Y}}),
											d_{\rm max}^{m_{2},m_3} (\vec{\mathbb{X''}},\vec{\mathbb{Y}}) \big\}}
	\\*
	&   =  & \min\Big\{ \max \big\{ d^{1,1} (X,Y) \, \big| \, X \in \mathbb{X'}, Y \in \mathbb{Y} \big\},									\max \big\{ d^{1,1} (X,Y) \, \big| \, X \in \mathbb{X''}, Y \in \mathbb{Y} \big\} \Big\}
	\\
	& \leq & \max\Big\{ \max \big\{ d^{1,1} (X,Y) \, \big| \, X \in \mathbb{X'}, Y \in \mathbb{Y} \big\},									\max \big\{ d^{1,1} (X,Y) \, \big| \, X \in \mathbb{X''}, Y \in \mathbb{Y} \big\} \Big\}
	\\
	&   =  & \max \big\{ d^{1,1} (X,Y) \, \big| \, X \in \mathbb{X}, Y \in \mathbb{Y} \big\}
	\\*
	&   =  & d^{m_1+m_2,m_3}_{\rm max} (\vec{\mathbb{X}},\vec{\mathbb{Y}}).
\end{eqnarray*}
Thus, the average and complete extended dissimilarity functions satisfy  (G1).

Now, we prove property (G2).
To this end, consider $3\le m_1+m_2\le m$ with $ 2 \le m_1 $,
$\vec{\mathbb{X'}} \in L^{0}(\mathbb{I}^{m_1-1}) $ and
$ (\vec{\mathbb{X}},\vec{\mathbb{Y}}) \in \Lina \times \Linb $ such that
$ \mathbb{X'} \cup \{X''\} = \mathbb{X} $, where $ X'' \in \mathbb{X} $ is comonotonic with some element
$ X' \in \mathbb{X'}$.
Then, by Lemma \ref{AppendixLemma1} and property \eqref{local.diss.3}, the identity
 $d^{1,1} (X',Y)=d^{1,1}(X'',Y)$ holds for every $Y \in \mathbb{Y}$, and we obtain
\begin{eqnarray*}
  d^{m_1,m_2}_{\rm min} (\vec{\mathbb{X}},\vec{\mathbb{Y}})
	& = & \min \big\{ d^{1,1} (X,Y) \, \big| \, X \in \mathbb{X}, Y \in \mathbb{Y} \big\}
	\\*
	& = & \min \big\{ d^{1,1} (X,Y) \, \big| \, X \in \mathbb{X'}, Y \in \mathbb{Y} \big\}
	\\*
	& = & d^{m_1-1,m_2}_{\rm min} (\vec{\mathbb{X'}},\vec{\mathbb{Y}})
  \\
	d^{m_1,m_2}_{\rm max} (\vec{\mathbb{X}},\vec{\mathbb{Y}})
	& = & \max \big\{ d^{1,1} (X,Y) \, \big| \, X \in \mathbb{X}, Y \in \mathbb{Y} \big\}
	\\
	& = & \max \big\{ d^{1,1} (X,Y) \, \big| \, X \in \mathbb{X'}, Y \in \mathbb{Y} \big\}
	\\*
	& = & d^{m_1-1,m_2}_{\rm max} (\vec{\mathbb{X'}},\vec{\mathbb{Y}})
\end{eqnarray*}
Thus, the single and complete extended dissimilarity functions satisfy (G2).
\end{proof}

We now present some sufficient condition on $d^{1,1}$ such that both the average and complete extended dissimilarity functions satisfy (G1s).

\begin{corollary}\label{cor:HierarchicalClusterG2}
Assume that $d^{1,1}$ is strictly monotonically decreasing with respect to the lower orthant order,
i.e. $(X,Y) \prec_{lo} (X',Y')$ in the lower orthant order implies
$ d^{1,1} (X',Y') < d^{1,1} (X,Y) $.
Then $d_{\rm ave}$ and $d_{\rm max}$ satisfy (G1s).
\end{corollary}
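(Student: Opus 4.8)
The functions $d_{\rm ave}$ and $d_{\rm max}$ already satisfy (G1) by Theorem \ref{LinkageMethodsTheorem}, so the plan is only to produce a single instance in which the inequality \eqref{eq:reducibility2} is strict. I would work in the smallest relevant dimension $m_1=m_2=m_3=1$ (so $m\ge 3$), where for three uniform r.v.'s $X',X'',Y$ the premise \eqref{eq:reducibility1} reads $d^{1,1}(X',X'')\le\min\{d^{1,1}(X',Y),d^{1,1}(X'',Y)\}$ and the linkage functions collapse to
$$
d^{2,1}_{\rm ave}((X',X''),Y)=\tfrac12\big(d^{1,1}(X',Y)+d^{1,1}(X'',Y)\big),\qquad
d^{2,1}_{\rm max}((X',X''),Y)=\max\{d^{1,1}(X',Y),d^{1,1}(X'',Y)\}.
$$
Since the minimum of two numbers lies strictly below both their average and their maximum as soon as the two numbers differ, the whole task reduces to exhibiting a joint law of $(X',X'',Y)$ for which (i) $d^{1,1}(X',X'')$ is the smallest of the three pairwise dissimilarities, so the premise holds, and (ii) $d^{1,1}(X',Y)\neq d^{1,1}(X'',Y)$.

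For the construction I would sidestep any compatibility issue by building the joint law explicitly through a latent-variable mixture. Let $U,V',V'',V_Y$ be i.i.d.\ uniform on $\mathbb{I}$ and let $I',I'',I_Y$ be independent Bernoulli variables, independent of the uniforms, with success probabilities $p',p'',p_Y$. Setting $X'=I'U+(1-I')V'$, $X''=I''U+(1-I'')V''$ and $Y=I_YU+(1-I_Y)V_Y$, each of these is uniform on $\mathbb{I}$, and conditioning on the indicators shows that the pairwise copulas are the convex combinations
$$
C_{(X',X'')}=p'p''\,M+(1-p'p'')\,\Pi,\quad
C_{(X',Y)}=p'p_Y\,M+(1-p'p_Y)\,\Pi,\quad
C_{(X'',Y)}=p''p_Y\,M+(1-p''p_Y)\,\Pi.
$$
Because $M\ge\Pi$ pointwise with strict inequality on the interior, the family $\theta M+(1-\theta)\Pi$ is strictly increasing in $\theta$ with respect to the (bivariate) lower orthant order. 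Choosing, e.g., $p'=\tfrac12$, $p''=\tfrac{2}{5}$ and $p_Y=\tfrac1{10}$ yields mixing weights $p'p''=\tfrac15>p'p_Y=\tfrac1{20}>p''p_Y=\tfrac1{25}$, so that
$$
C_{(X',Y)}\prec_{lo}C_{(X',X'')},\qquad
C_{(X'',Y)}\prec_{lo}C_{(X',X'')},\qquad
C_{(X'',Y)}\prec_{lo}C_{(X',Y)}
$$
are all strict orderings.

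I would then invoke the hypothesis that $d^{1,1}$ is strictly decreasing with respect to the lower orthant order. The first two strict orderings give $d^{1,1}(X',X'')<d^{1,1}(X',Y)$ and $d^{1,1}(X',X'')<d^{1,1}(X'',Y)$, so the premise \eqref{eq:reducibility1} holds; the third gives $d^{1,1}(X',Y)<d^{1,1}(X'',Y)$, so the two cross-dissimilarities are distinct. Substituting into the two displayed expressions, $\min\{d^{1,1}(X',Y),d^{1,1}(X'',Y)\}=d^{1,1}(X',Y)$ is strictly smaller than both $\tfrac12(d^{1,1}(X',Y)+d^{1,1}(X'',Y))=d^{2,1}_{\rm ave}((X',X''),Y)$ and $d^{1,1}(X'',Y)=d^{2,1}_{\rm max}((X',X''),Y)$. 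Hence \eqref{eq:reducibility2} is strict for both $d_{\rm ave}$ and $d_{\rm max}$, establishing (G1s).

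The only genuine obstacle here is guaranteeing that the prescribed pairwise behaviour is actually realized by a bona fide random vector, i.e.\ a compatibility/existence question for the trivariate law; the latent-variable mixture settles this by construction, and everything else is the monotonicity bookkeeping recorded above. A fully analogous example (or, alternatively, a trivariate Gaussian copula with correlation matrix whose off-diagonal entries satisfy the same chain of strict inequalities) would serve equally well.
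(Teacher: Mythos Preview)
Your proof is correct and follows the same overall strategy as the paper: work with $m_1=m_2=m_3=1$, construct a trivariate law whose three bivariate margins are strictly ordered in the lower orthant sense, apply the strict-monotonicity hypothesis to obtain $d^{1,1}(X',X'')<d^{1,1}(X',Y)<d^{1,1}(X'',Y)$, and then observe that the minimum of two distinct numbers is strictly below their mean and their maximum.

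The difference lies in the construction of the example. The paper writes down an explicit $m$-dimensional copula, a polynomial perturbation of $\Pi$ (equation~\eqref{eq:PertubatedPi}), whose three bivariate margins are pairwise strictly ordered; this has the advantage of giving a single closed-form copula that can be reused in later examples (as in Example~\ref{HierarchicalClusterCounterEx}), but it carries the obligation of checking that the expression is actually a copula (the paper does this via the density). Your latent-variable mixture instead builds the joint law probabilistically, so the compatibility question is settled by construction and the bivariate margins are immediately identified as Fr\'echet mixtures $\theta M+(1-\theta)\Pi$ with distinct $\theta$'s. Both routes are equally valid; yours is arguably more elementary and self-contained, while the paper's explicit copula is more portable for subsequent computations.
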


\begin{proof}
Consider $m\geq 3 $, $ m_1 = m_2 = m_3 = 1 $ and the $m$--dimensional  copula $ C$ given by
\begin{equation}\label{eq:PertubatedPi}
  C(\uuu)
	:= \Pi(\uuu)
	    - \frac{1}{3} \; \Big( (1-u_1)(1-u_2) + (1-u_1)(1-u_3) + (1-u_2)(1-u_3) \Big)
			  \prod_{i=1}^{3} u_i^{4-i} \prod_{i=4}^{m} u_i.
\end{equation}
(To check that this function is actually a copula it is enough to compute its density).
Then, for every random vector
$ (X',X'',Y) \in L^{0}(\mathbb{I}) \times L^{0}(\mathbb{I}) \times L^{0}(\mathbb{I}) $ having copula
$ T_{\{1,2,3\}} (C) $, we have
$$
 (X'',Y)
	\prec_{lo} (X',Y)
	\prec_{lo} (X',X'')
$$
and hence
$d^{1,1} (X',X'') < \min\big\{ d^{1,1} (X',Y), d^{1,1} (X'',Y) \big\}$ as well as
\begin{center}
\begin{tabular}{lclcl}
  $\min\big\{ d^{1,1} (X',Y), d^{1,1} (X'',Y) \big\}$
	& $<$ & $\frac{1}{2} \, d^{1,1} (X',Y) + \frac{1}{2} \, d^{1,1} (X'',Y)$
	& $=$ & $d^{1+1,1}_{\rm ave} (\vec{\mathbb{X}},Y)$
	\\
  $\min\big\{ d^{1,1} (X',Y), d^{1,1} (X'',Y) \big\}$
	& $<$ & $\max\big\{ d^{1,1} (X',Y), d^{1,1} (X'',Y) \big\}$
	& $=$ & $d^{1+1,1}_{\rm max} (\vec{\mathbb{X}},Y)$
\end{tabular}
\end{center}
where $ \mathbb{X} = X' \cup X'' $.
Therefore, $d_{\rm ave}$ and $d_{\rm max}$ satisfy (G1s).
\end{proof}

\begin{remark}
Notice that the single, average and complete extended dissimilarity functions induced by $d^{1,1} := f \circ \rho$, where $\rho$ is the pairwise Spearman's correlation and $f$ is a strictly decreasing function, are strictly monotonically decreasing with respect to the lower orthant order (see, e.g., \cite{ahnfuchs2018}).
\end{remark}

The following example shows that the condition stated in Corollary \ref{cor:HierarchicalClusterG2} is sufficient, but not necessary.

\begin{example}\label{HierarchicalClusterCounterEx}
Consider the map $d^{1,1}_{\beta}$ given by \eqref{eq:Blomqvist},
$m \geq 3 $, $ m_1 = m_2 = m_3 = 1 $ and the copula $C$ given by \eqref{eq:PertubatedPi}.
Further, note that $d^{1,1}_{\beta}$ fails to be strictly monotonically decreasing with respect to the lower orthant order: to verify this, it is enough to consider two copulas with the same value in the point $(0.5,0.5)$, like ordinal sums of two copulas with respect to the partition $([0,0.5],[0.5,1])$. Then, for every random vector
$ (X',X'',Y) \in L^{0}(\mathbb{I}) \times L^{0}(\mathbb{I}) \times L^{0}(\mathbb{I}) $ having copula
$ T_{\{1,2,3\}} (C) $, we have
\begin{center}
\begin{tabular}{lclcl}
  $d^{1,1}_{\beta} (X',X'')$
	& $\leq$ & $\min\big\{ d^{1,1}_{\beta} (X',Y), d^{1,1}_{\beta} (X'',Y) \big\}$
	&   $<$  & $d^{1+1,1}_{\rm ave} (\vec{\mathbb{X}},Y)$	
	\\
	$d^{1,1}_{\beta} (X',X'')$
	& $\leq$ & $\min\big\{ d^{1,1}_{\beta} (X',Y), d^{1,1}_{\beta} (X'',Y) \big\}$
	&   $<$  & $d^{1+1,1}_{\rm max} (\vec{\mathbb{X}},Y)$
\end{tabular}
\end{center}
where $ \mathbb{X} = X' \cup X'' $.
Indeed, we have
\begin{eqnarray*}
  d^{1,1}_{\beta} (X',X'')
	&   =  & \frac{1}{4} + \frac{1}{3} \; \frac{1}{2^7}
	\\
	\min\big\{ d^{1,1}_{\beta} (X',Y), d^{1,1}_{\beta} (X'',Y) \big\}
	&   =  & \frac{1}{4} + \frac{1}{3} \; \frac{2}{2^7}
  \\
	d^{1+1,1}_{\rm ave} (\vec{\mathbb{X}},Y)	
  &   =  & \frac{1}{4} + \frac{1}{3} \; \frac{3}{2^7}
  \\
	d^{1+1,1}_{\rm max} (\vec{\mathbb{X}},Y)	
  &   =  & \frac{1}{4} + \frac{1}{3} \; \frac{4}{2^7}
\end{eqnarray*}
Thus, the average and complete extended dissimilarity functions induced by $d^{1,1}_{\beta}$ satisfy (G1s).
By applying the above copula,
it is straightforward to check that also the average and complete extended dissimilarity functions induced by
$d^{1,1}_{\phi}$ and $d^{1,1}_{\tau}$ given by (\ref{eq:Footrule}) and (\ref{eq:Kendall}), satisfy (G1s). We notice that such dissimilarity functions both fail to be strictly monotonically decreasing with respect to the lower orthant order.
\end{example}

The next example shows that the average extended dissimilarity function may fail to satisfy (G2) for specific choices of $d^{1,1}$.

\begin{example}\label{HierarchicalClusterCounterEx2}
Consider $d^{1,1}_{\beta}$ given by \eqref{eq:Blomqvist}. Further, consider $m \geq 4 $, $ m_1 = 3 $, $ m_2 = 1 $ and the $m$--dimensional copula $C$ given by
$$
  C(\uuu)
	:= \min\{u_1, u_2\} \, \left( \prod_{i=3}^m u_i - \frac{1}{2} \; (1-u_3)(1-u_4) \prod_{i=3}^m u_i \right)
$$
which is the copula of a random vector with two independent sub-vectors (see e.g. \cite{DurSem16}).
Then, for every random vector
$ \vec{\mathbb{X}} = (X_1,X_2,X_3)^{\prime} \in L^{0}(\mathbb{I}^{3}) $
and every r.v.~$ Y \in L^{0}(\mathbb{I}) $
such that $ (\vec{\mathbb{X}},Y) $ has copula $ T_{\{1,2,3,4\}} (C) $ and hence $X_1$ and $X_2$ are comonotonic,
the average extended dissimilarity function induced by $d^{1,1}_{\beta}$ satisfies
$$
  d^{3,1}_{\rm ave} (\vec{\mathbb{X}},Y)
	\neq d^{2,1}_{\rm ave} ((X_1,X_3),Y)
$$
Indeed, we obtain
$
  d^{3,1}_{\rm ave} (\vec{\mathbb{X}},Y)
	  =  \frac{1}{4} + \frac{1}{3} \; \frac{1}{2^5}
	\neq \frac{1}{4} + \frac{1}{2} \; \frac{1}{2^5}
	  =  d^{2,1}_{\rm ave} ((X_1,X_3),Y).
$
Thus, the average extended dissimilarity function based on $d^{1,1}_{\beta}$ fails to satisfy (G2).
By applying the above copula,
it is straightforward to check that also the average extended dissimilarity functions induced by
$d^{1,1}_{\phi}$, $d^{1,1}_{\tau}$ and $d^{1,1}_{\rho}$ given, respectively, by \eqref{eq:Footrule}, \eqref{eq:Kendall} and \eqref{eq:Spearman} fail to satisfy (G2).
\end{example}

We conclude by noticing that the extended dissimilarity functions based on single, average and complete linkage share the same structural drawback: They take into account solely information about the pairwise dependence structure. Therefore, for each of these extended dissimilarity functions, the value of an $(m_1,m_2)$-dissimilarity function of a random vector depends on its bivariate margins only. The next result is hence evident.

\begin{corollary}
Consider a $(1,1)$-dissimilarity function $ d^{1,1} $,
some constant $ c\ge 0$,
$2\le m_1+m_2\le m$ and let
$ (\vec{\mathbb{X}}, \vec{\mathbb{Y}}) \in \Lina \times \Linb $ be a random vector satisfying
$ d^{1,1} (X,Y) = c $
for every $ X \in \mathbb{X} $ and every $ Y \in \mathbb{Y} $.
Then
$ d^{m_1,m_2}_{\rm min} (\vec{\mathbb{X}},\vec{\mathbb{Y}})
	= d^{m_1,m_2}_{\rm ave} (\vec{\mathbb{X}},\vec{\mathbb{Y}})
	= d^{m_1,m_2}_{\rm max} (\vec{\mathbb{X}},\vec{\mathbb{Y}})
	= c $.
\end{corollary}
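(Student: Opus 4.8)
The plan is to substitute the hypothesis $d^{1,1}(X,Y) = c$ directly into the three defining formulas for $d^{m_1,m_2}_{\rm min}$, $d^{m_1,m_2}_{\rm ave}$ and $d^{m_1,m_2}_{\rm max}$ and simply read off the common value $c$; no further machinery from the earlier results is needed.

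First I would dispatch $d^{m_1,m_2}_{\rm min}$ and $d^{m_1,m_2}_{\rm max}$ together. By assumption the set $\{ d^{1,1}(X,Y) \mid X \in \mathbb{X},\, Y \in \mathbb{Y} \}$ reduces to the singleton $\{c\}$, so both its minimum and its maximum equal $c$. Hence $d^{m_1,m_2}_{\rm min}(\vec{\mathbb{X}},\vec{\mathbb{Y}}) = c = d^{m_1,m_2}_{\rm max}(\vec{\mathbb{X}},\vec{\mathbb{Y}})$. For the average linkage, the double sum in the definition of $d^{m_1,m_2}_{\rm ave}$ runs over the $m_1 m_2$ pairs $(X,Y)$ with $X \in \mathbb{X}$ and $Y \in \mathbb{Y}$, each summand being equal to $c$; thus the sum equals $m_1 m_2 \, c$, and dividing by the normalizing factor $m_1 m_2$ yields $d^{m_1,m_2}_{\rm ave}(\vec{\mathbb{X}},\vec{\mathbb{Y}}) = c$.

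There is no genuine obstacle here: the statement is an immediate consequence of the elementary fact that the minimum, the (arithmetic) mean and the maximum of a finite collection of reals all coincide precisely when the collection is constant. The corollary merely records this degeneracy, making explicit that single, average and complete linkage cannot be distinguished on configurations in which all pairwise dissimilarities agree, which reinforces the preceding observation that these three extended dissimilarity functions rest solely on the bivariate (pairwise) dependence information.
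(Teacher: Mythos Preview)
Your proof is correct and matches the paper's treatment: the paper simply states that ``the next result is hence evident'' without giving any explicit argument, and your direct substitution into the defining formulas for $d^{m_1,m_2}_{\rm min}$, $d^{m_1,m_2}_{\rm ave}$ and $d^{m_1,m_2}_{\rm max}$ is precisely the triviality the authors have in mind.
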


\begin{example}\label{MoDIndependenceCounterExample}
Consider $2\le m_1+m_2 \leq m$ with $m \geq 3$ and the copula
$C: \mathbb{I}^{m} \to \mathbb{I}$ given by
$$
  C(\uuu)
	:= \Pi(\uuu) + \prod_{i=1}^{m} u_{i} (1-u_{i})
$$
Then
$ C \neq \Pi $ and since $m \geq 3$ we have $ T_{L} (C) = \Pi $ for every $ L \subseteq \{1,...,m\} $ with $|L| = 2$ (note that the second term on the right hand side vanishes when putting $u_j = 1$ for some $j \in \{1,\dots,m\} \backslash L$), and the identities
\begin{eqnarray*}
  d^{m_1,m_2}_{\rm min} (\vec{\mathbb{X}},\vec{\mathbb{Y}})
	& = & d^{m_1,m_2}_{\rm min} (\vec{\mathbb{X'}},\vec{\mathbb{Y'}})
	\\
	d^{m_1,m_2}_{\rm ave} (\vec{\mathbb{X}},\vec{\mathbb{Y}})
	& = & d^{m_1,m_2}_{\rm ave} (\vec{\mathbb{X'}},\vec{\mathbb{Y'}})
	\\
	d^{m_1,m_2}_{\rm max} (\vec{\mathbb{X}},\vec{\mathbb{Y}})
	& = & d^{m_1,m_2}_{\rm max} (\vec{\mathbb{X'}},\vec{\mathbb{Y'}})
\end{eqnarray*}
hold for every random vector
$ (\vec{\mathbb{X}}, \vec{\mathbb{Y}}) \in \Lina \times \Linb $ with copula $C$
and every random vector
$ (\vec{\mathbb{X'}}, \vec{\mathbb{Y'}}) \in \Lina \times \Linb $ with copula $\Pi$.
Thus, neither the single nor
the average nor the complete extended dissimilarity function distinguishes between pairwise independence and global independence.
\end{example}


\subsection{Extended dissimilarity functions based on measures of  multivariate association}
\label{SubsectionExtDissMOD}

In this section we study extended dissimilarity functions which are derived from various measures of multivariate association (see, e.g., \cite{Nel06,Sch_et_al10CTA}).
Contrarily to the dissimilarity functions based on linkage methods, here we rely on global measures of association which do not only depend on the pairwise association. Thus, in principle, the derived dissimilarity functions could be able to detect high-dimensional features that are not apparent with the latter methods.
To this end, for $ m_1, m_2 \in \mathbb{N} $ with $ 2 \leq m_1 + m_2 \leq m $,
we define the maps
$d^{m_1,m_2}_{\beta}$,
$d^{m_1,m_2}_{\phi}$,
$d^{m_1,m_2}_{\tau}$,
$d^{m_1,m_2}_{\rho}$
from $\Lina\times\Linb$ to $[0,+\infty[$ by letting
\begin{center}
\begin{tabular}{lclcl}
  $d^{m_1,m_2}_{\beta} (\vec{\mathbb{X}},\vec{\mathbb{Y}})$
	& $:=$ & $\frac{1}{2} - C_{(\vec{\mathbb{X}},\vec{\mathbb{Y}})} \big( \tfrac{\bf 1}{\bf 2} \big)$
	& $ =$ & $M \big( \tfrac{\bf 1}{\bf 2} \big)  - C_{(\vec{\mathbb{X}},\vec{\mathbb{Y}})} \big( \tfrac{\bf 1}{\bf 2} \big)$
	\\
	$d^{m_1,m_2}_{\phi} (\vec{\mathbb{X}},\vec{\mathbb{Y}})$
	& $:=$ & $\frac{1}{2} - \big[ C_{(\vec{\mathbb{X}},\vec{\mathbb{Y}})}, M \big]$
	& $ =$ & $\int_{\mathbb{I}} (M(u \,{\bf 1}) - C_{(\vec{\mathbb{X}},\vec{\mathbb{Y}})} (u \,{\bf 1})) \; \mathrm{d} \leb(u)$
	\\
	$d^{m_1,m_2}_{\tau} (\vec{\mathbb{X}},\vec{\mathbb{Y}})$
	& $:=$ & $\frac{1}{2} - \big[ C_{(\vec{\mathbb{X}},\vec{\mathbb{Y}})}, C_{(\vec{\mathbb{X}},\vec{\mathbb{Y}})} \big]$
	& $ =$ & $\big[ M, M \big] - \big[ C_{(\vec{\mathbb{X}},\vec{\mathbb{Y}})}, C_{(\vec{\mathbb{X}},\vec{\mathbb{Y}})} \big]$
	\\
	$d^{m_1,m_2}_{\rho} (\vec{\mathbb{X}},\vec{\mathbb{Y}})$
	& $:=$ & $\frac{1}{m_1+m_2+1} - \big[ C_{(\vec{\mathbb{X}},\vec{\mathbb{Y}})}, \Pi \big]$
	& $ =$ & $\int_{\mathbb{I}^{m_1+m_2}} (M(\uuu) - C_{(\vec{\mathbb{X}},\vec{\mathbb{Y}})} (\uuu)) \; \mathrm{d} \leb^{m_1+m_2}(\uuu)$
\end{tabular}
\end{center}
The function $d^{m_1,m_2}_{\beta}$ is related to the multivariate version of medial correlation coefficient (also known as Blomqvist's beta) that was introduced by \cite{Nel03} (see also \cite{Ube05}), whose $n$-dimensional version is given by $\big(2^n C\big( \tfrac{\bf 1}{\bf 2} \big)-1\big)/(2^{n-1}-1)$. The function $d^{m_1,m_2}_{\phi}$ is related to the multivariate version of Spearman's footrule considered in \cite{Ube05}. The functions $d^{m_1,m_2}_{\tau}$ and $d^{m_1,m_2}_{\rho}$ are related to some multivariate versions of Kendall's tau and Spearman's rho (see, for instance, \cite{Joe15,Sch_et_al10CTA,Tay16}).

\begin{theorem}\label{thm:global_d}
For all $ 2 \leq m_1 + m_2 \leq m $,
$ d^{m_1,m_2}_{\beta} $,
$ d^{m_1,m_2}_{\phi} $,
$ d^{m_1,m_2}_{\tau} $ and
$ d^{m_1,m_2}_{\rho} $
are $(m_1,m_2)$-dissimilarity functions,
and thus, the maps
$$
  d_{\beta},
	d_{\phi},
	d_{\tau},
	d_{\rho}:
	\bigcup_{2 \leq m_1+m_2 \leq m} \Lina\times\Linb\to[0,+\infty[
$$
with
$ d_{\beta}|_{\Lina\times\Linb} := d^{m_1,m_2}_{\beta} $,
$ d_{\phi}|_{\Lina\times\Linb} := d^{m_1,m_2}_{\phi} $,
$ d_{\tau}|_{\Lina\times\Linb} := d^{m_1,m_2}_{\tau} $ and
$ d_{\rho}|_{\Lina\times\Linb} := d^{m_1,m_2}_{\rho} $
are extended dissimilarity functions.
Moreover,
\begin{itemize}
\item[(i)] $ d_{\beta} $ satisfies (L1), (L1c), (L3), (G1), (G1s) and (G2).

\item[(ii)] $ d_{\phi} $ satisfies (L1), (L1c), (L3), (G1), (G1s) and (G2).

\item[(iii)] $ d_{\tau} $ satisfies (L1c), (L2), (L3),
(G1), (G1s) and (G2).

\item[(iv)] $ d_{\rho} $ satisfies (L1), (L1c), (L3), but fails to satisfy (G1), (G1s) and (G2).
\end{itemize}
\end{theorem}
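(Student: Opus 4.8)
The plan is to verify first the three axioms of Definition \ref{diss.fct.local} together with the symmetry \eqref{eq:simmetry}, and then to treat the local and the global properties separately, exploiting that all four maps are shifts either of the biconvex form $[\cdot\,,\cdot]$ or of a point evaluation of the copula $C:=C_{(\vec{\mathbb{X}},\vec{\mathbb{Y}})}$. That the four maps are $(m_1,m_2)$-dissimilarity functions is the routine part: \eqref{local.diss.3} is built in, since each definition depends on $(\vec{\mathbb{X}},\vec{\mathbb{Y}})$ only through $C$; \eqref{local.diss.2} holds because $M$ and $\Pi$ are symmetric and $[\cdot\,,\cdot]$ is invariant under simultaneous coordinate permutations; the vanishing \eqref{local.diss.1} at $C=M$ reduces to the stored identities $M(\tfrac{\bf 1}{\bf 2})=\tfrac12$, $[M,M]=\tfrac12$ and $[M,\Pi]=\tfrac{1}{m_1+m_2+1}$; nonnegativity is the Fr\'echet--Hoeffding bound $C\le M$ tested against the positive measures $Q^M,Q^\Pi$ and at $\tfrac{\bf 1}{\bf 2}$; and \eqref{eq:simmetry} follows once more from the symmetry of $M$ and $\Pi$.

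For the local properties I would argue as follows. For $d_\beta,d_\phi,d_\rho$ the functional is, up to a dimension-dependent additive constant, $-C(\tfrac{\bf 1}{\bf 2})$, $-[C,M]$ and $-[C,\Pi]$; since $Q^M,Q^\Pi$ are positive and $C\mapsto C(\tfrac{\bf 1}{\bf 2})$ is monotone, each is decreasing in the lower orthant order, giving (L1), hence (L1c) for free because $\preceq_C$ implies $\preceq_{lo}$. The genuinely different case is $d_\tau$, whose term $[C,C]$ is quadratic: it is \emph{not} monotone in $\preceq_{lo}$ in dimension $\ge 3$ but it \emph{is} monotone in $\preceq_C$, which yields (L1c); here I would use the reading $[C,C]=\mathbb{P}(\vvv\le\uuu)$ for independent random vectors $\uuu,\vvv$ with copula $C$, together with the known monotonicity of Kendall's tau under $\preceq_C$. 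Property (L2) for $d_\tau$ is the reflection invariance $[\widehat C,\widehat C]=[C,C]$ of the survival copula, proved by the substitution $\uuu\mapsto\ein-\uuu$ and exchangeability of the two copies; (L2) is not claimed for the other three. Finally (L3) uses that weak convergence of copulas is uniform convergence: the point evaluation and the forms $[\cdot\,,M]$, $[\cdot\,,\Pi]$ against a fixed copula are then continuous by bounded convergence, while $[C_k,C_k]\to[C,C]$ follows from the split $|[C_k,C_k]-[C,C]|\le\|C_k-C\|_\infty+|\int C\,\mathrm{d}Q^{C_k}-\int C\,\mathrm{d}Q^C|$, whose two terms vanish by uniform and by weak convergence.

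The global properties for $d_\beta,d_\phi,d_\tau$ all rest on a single monotonicity-in-inclusion observation: passing from a subvector $\vec{\mathbb{X'}}$ to $\vec{\mathbb{X}}=\vec{\mathbb{X'}}\cup\vec{\mathbb{X''}}$ merely adds coordinates, hence adds constraints in each of the probabilistic representations $C(\tfrac{\bf 1}{\bf 2})=\mathbb{P}(U_i\le\tfrac12\,\forall i)$, $C(u\ein)=\mathbb{P}(U_i\le u\,\forall i)$ and $[C,C]=\mathbb{P}(\vvv\le\uuu)$, so the dissimilarity cannot decrease, i.e.\ $d(\vec{\mathbb{X'}},\vec{\mathbb{Y}})\le d(\vec{\mathbb{X}},\vec{\mathbb{Y}})$. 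Because the additive constants $\tfrac12,\tfrac12,\tfrac12$ are \emph{dimension-free} here, this inequality holds outright, and (G1) follows at once from $\min\{d(\vec{\mathbb{X'}},\vec{\mathbb{Y}}),d(\vec{\mathbb{X''}},\vec{\mathbb{Y}})\}\le d(\vec{\mathbb{X'}},\vec{\mathbb{Y}})\le d(\vec{\mathbb{X}},\vec{\mathbb{Y}})$, without even invoking the hypothesis \eqref{eq:reducibility1}. For (G1s) I would take $\vec{\mathbb{X'}},\vec{\mathbb{X''}},\vec{\mathbb{Y}}$ mutually independent with $m_1=m_2=m_3=1$, where the inclusion inequality is strict (e.g.\ $[\Pi,\Pi]=\tfrac18<\tfrac14$ in dimensions $3$ and $2$) and \eqref{eq:reducibility1} holds with equality. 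Property (G2) is where Lemma \ref{AppendixLemma1} enters: if the adjoined variable is comonotonic with an element already present, parts (iii) and (iv) give $C(u\ein)=T_{\{1,\dots,k\}\backslash\{i\}}(C)(u\ein)$ and $[C,C]=[T_{\{1,\dots,k\}\backslash\{i\}}(C),T_{\{1,\dots,k\}\backslash\{i\}}(C)]$ for the reduced margin, which combined with the dimension-free constants yields the identity $d^{m_1,m_2}=d^{m_1-1,m_2}$ for $d_\beta,d_\phi,d_\tau$.

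The failures for $d_\rho$ are the crux, and the obstruction is precisely the \emph{dimension-dependent} constant $\tfrac{1}{m_1+m_2+1}$. Here $[C,\Pi]=\mathbb{E}[\prod_i(1-U_i)]$ still decreases when coordinates are added, but the simultaneous drop of the normalizing constant from $\tfrac1k$ to $\tfrac1{k+1}$ pushes in the opposite direction, so monotonicity-in-inclusion breaks and (G1) can fail; I would therefore exhibit one explicit copula (for instance a suitable perturbation of $\Pi$, in the spirit of \eqref{eq:PertubatedPi}) violating \eqref{eq:reducibility2}, which, since (G1s) presupposes (G1), simultaneously rules out (G1s). For (G2) a direct computation with a comonotonic pair $U_1=U_2$ reduces the claimed identity to $\mathbb{E}[U_1(1-U_1)(1-V)]=\tfrac1{12}$, i.e.\ to $\mathrm{Cov}(U_1(1-U_1),V)=0$. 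The main obstacle is that the symmetric weight $U_1(1-U_1)$ forces this covariance to vanish for all the obvious choices (independence, comonotonicity, countermonotonicity, and even FGM), so the counterexample must use a genuinely \emph{asymmetric} copula of $(X_1,Y)$; pinning down such an explicit copula, and checking the premise and conclusion of (G1) for the reducibility counterexample, is the step I expect to require the most care.
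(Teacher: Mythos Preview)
Your overall architecture coincides with the paper's: the dissimilarity axioms and the local properties are handled exactly as you outline (the paper cites \cite{fuc2016a} for the permutation and continuity statements, but the underlying arguments are yours), and the ``monotonicity-in-inclusion'' observation is precisely how the paper obtains (G1), (G1s) and (G2) for $d_\beta,d_\phi,d_\tau$, invoking Lemma~\ref{AppendixLemma1} for (G2) and the independence copula with $m_1=m_2=m_3=1$ for (G1s).

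Where you diverge from the paper is in the $d_\rho$ counterexamples, and here the paper's route is noticeably simpler than the one you are attempting. For (G2) you fix $m_2=1$ and correctly reduce the question to $\mathrm{Cov}(U_1(1-U_1),V)\neq 0$, which indeed forces a non-radially-symmetric bivariate copula; this can be done (Clayton works, for instance) but requires a computation you have not carried out. The paper instead takes $m_2=2$ and the product copula $C(\uuu,\vvv)=M(\uuu)\,\Pi(\vvv)$: then $d^{2,2}_\rho=\tfrac{1}{5}-\tfrac{1}{12}=\tfrac{14}{120}$ while $d^{1,2}_\rho=\tfrac{1}{4}-\tfrac{1}{8}=\tfrac{15}{120}$, and the asymmetry obstacle you identified simply disappears because the extra factor $(1-Y_2)$ breaks the coincidence $\tfrac{1}{6}\cdot\tfrac{1}{2}=\tfrac{1}{12}$.

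For (G1) the same lesson applies. A perturbation of $\Pi$ in dimension $3$ will struggle: with $m_1=m_2=m_3=1$ one checks that $d^{1,1}_\rho(X',Y)-d^{2,1}_\rho(\vec{\mathbb X},Y)=\tfrac{1}{12}-\mathbb{E}[(1-X')X''(1-Y)]$, and the right-hand side is nonpositive whenever $\mathbb{E}[(1-X')X'']\ge\tfrac{1}{6}$, which is the \emph{minimum} value attained at comonotonicity; so small perturbations of $\Pi$ keep (G1) intact. The paper's counterexample is again just $\Pi$, but with $m_1=m_2=m_3=2$: then $d^{2,2}_\rho=\tfrac{1}{5}-\tfrac{1}{16}=\tfrac{616}{4480}$ exceeds $d^{4,2}_\rho=\tfrac{1}{7}-\tfrac{1}{64}=\tfrac{570}{4480}$, because the drop in the normalizing constant $\tfrac{1}{k+1}$ now outpaces the drop in $[\Pi,\Pi]=2^{-k}$. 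In short, the trick you are missing is to raise the dimension rather than to search for exotic copulas.
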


\begin{proof}
We first prove the local properties and then, step by step, all the global properties.
Since
$C^{\xi} \big( \tfrac{\bf 1}{\bf 2} \big) = C \big( \tfrac{\bf 1}{\bf 2} \big)$,
$ [C^{\xi},M] = [C,M] $,
$ [C^{\xi},C^{\xi}] = [C,C] $ and
$ [C^{\xi},\Pi] = [C,\Pi] $
for every $ C \in \mathcal{C}^k $, every permutation $\xi$ of $\mathbb{I}^k$, where $C^\xi$ is the copula obtained from $C$ by permuting its arguments, and for every $ 2 \leq k \leq m $
(see \cite[Theorem 5.2]{fuc2016a}),
it follows that
$d_{\beta}$,
$d_{\phi}$,
$d_{\tau}$ and
$d_{\rho}$ are extended dissimilarity functions.
It is evident that $d_{\beta}$ satisfies (L1), (L1c) and (L3),
and it is immediate from \cite[Theorems 3.6, 4.3, 4.4 and 5.2]{fuc2016a} that
$d_{\phi}$ and
$d_{\rho}$ satisfy (L1), (L1c) and (L3) and that $d_{\tau}$ satisfies (L1c), (L2) and (L3).

Now, consider
$ 3 \le m_{1}+m_{2}+m_3 \le m $,
the random vector
$ (\vec{\mathbb{X'}}, \vec{\mathbb{X''}}, \vec{\mathbb{Y}})
  \in L^{0}(\mathbb{I}^{m_{1}}) \times L^{0}(\mathbb{I}^{m_{2}}) \times L^{0}(\mathbb{I}^{m_{3}}) $
such that $ \mathbb{X'} $, $\mathbb{X''} $ and $\mathbb{Y} $ are pairwise disjoint and \eqref{eq:reducibility1} holds, and
put $ \mathbb{X} := \mathbb{X'} \cup \mathbb{X''} $.
Then
\begin{eqnarray*}
  C_{(\vec{\mathbb{X}},\vec{\mathbb{Y}})} \big( \tfrac{\bf 1}{\bf 2} \big)
  & \leq & C_{(\vec{\mathbb{X}},\vec{\mathbb{Y}})} \big( \boldsymbol{\eta}_{\{m_1+1,...,m_1+m_2\}} \big( \tfrac{\bf 1}{\bf 2}, {\bf 1} \big) \big)
	\\*
	& 	=  & \big( T_{\{1,...,m_1+m_2+m_3\} \backslash \{m_1+1,...,m_1+m_2\}} \big( C_{(\vec{\mathbb{X}}, \vec{\mathbb{Y}})} \big) \big) \big( \tfrac{\bf 1}{\bf 2} \big)
	= C_{(\vec{\mathbb{X'}},\vec{\mathbb{Y}})} \big( \tfrac{\bf 1}{\bf 2} \big)
  \\
  C_{(\vec{\mathbb{X}},\vec{\mathbb{Y}})} \big( \tfrac{\bf 1}{\bf 2} \big)
  & \leq & C_{(\vec{\mathbb{X}},\vec{\mathbb{Y}})} \big( \boldsymbol{\eta}_{\{1,...,m_1\}} \big( \tfrac{\bf 1}{\bf 2}, {\bf 1} \big) \big)
	\\
	&   =  & \big( T_{\{m_1+1,...,m_1+m_2+m_3\}} \big( C_{(\vec{\mathbb{X}}, \vec{\mathbb{Y}})} \big) \big) \big( \tfrac{\bf 1}{\bf 2} \big)
	= C_{(\vec{\mathbb{X''}},\vec{\mathbb{Y}})} \big( \tfrac{\bf 1}{\bf 2} \big)
\end{eqnarray*}
and, by \cite[Theorem 5.3.1]{fuc2015} and Example \ref{ExampleMargins}, we obtain
\begin{eqnarray*}
  \big[ C_{(\vec{\mathbb{X}},\vec{\mathbb{Y}})}, M \big]
	& \leq & \int_{\mathbb{I}^{m_1+m_2+m_3}} C_{(\vec{\mathbb{X}},\vec{\mathbb{Y}})} \big( \boldsymbol{\eta}_{\{m_1+1,...,m_1+m_2\}} (\uuu, {\bf 1}) \big) \; \mathrm{d} Q^M (\uuu)
  \\
	&   =  & \big[ T_{\{1,...,m_1+m_2+m_3\} \backslash \{m_1+1,...,m_1+m_2\}} \big( C_{(\vec{\mathbb{X}}, \vec{\mathbb{Y}})} \big), T_{\{1,...,m_1+m_2+m_3\} \backslash \{m_1+1,...,m_1+m_2\}} (M) \big]
	\\
	&   =  & \big[ C_{(\vec{\mathbb{X'}},\vec{\mathbb{Y}})}, M \big]
	\\
	\big[ C_{(\vec{\mathbb{X}},\vec{\mathbb{Y}})}, M \big]
	& \leq & \int_{\mathbb{I}^{m_1+m_2+m_3}} C_{(\vec{\mathbb{X}},\vec{\mathbb{Y}})} \big( \boldsymbol{\eta}_{\{1,...,m_1\}} (\uuu, {\bf 1}) \big) \; \mathrm{d} Q^M (\uuu)
  \\*
	&   =  & \big[ T_{\{m_1+1,...,m_1+m_2+m_3\}} \big( C_{(\vec{\mathbb{X}}, \vec{\mathbb{Y}})} \big),
								 T_{\{m_1+1,...,m_1+m_2+m_3\}} (M) \big]
	\\*
	&   =  & \big[ C_{(\vec{\mathbb{X''}},\vec{\mathbb{Y}})}, M \big]
\end{eqnarray*}
as well as
\begin{eqnarray*}
  \big[ C_{(\vec{\mathbb{X}},\vec{\mathbb{Y}})}, C_{(\vec{\mathbb{X}},\vec{\mathbb{Y}})} \big]
	& \leq & \int_{\mathbb{I}^{m_1+m_2+m_3}} C_{(\vec{\mathbb{X}},\vec{\mathbb{Y}})} \big( \boldsymbol{\eta}_{\{m_1+1,...,m_1+m_2\}} (\uuu, {\bf 1}) \big) \; \mathrm{d} Q^{C_{(\vec{\mathbb{X}},\vec{\mathbb{Y}})}} (\uuu)
  \\*
	&   =  & \big[ T_{\{1,...,m_1+m_2+m_3\} \backslash \{m_1+1,...,m_1+m_2\}} \big( C_{(\vec{\mathbb{X}}, \vec{\mathbb{Y}})} \big), T_{\{1,...,m_1+m_2+m_3\} \backslash \{m_1+1,...,m_1+m_2\}} (C_{(\vec{\mathbb{X}},\vec{\mathbb{Y}})}) \big]
	\\*
	&   =  & \big[ C_{(\vec{\mathbb{X'}},\vec{\mathbb{Y}})}, C_{(\vec{\mathbb{X'}},\vec{\mathbb{Y}})} \big]
  \\*
	\big[ C_{(\vec{\mathbb{X}},\vec{\mathbb{Y}})}, C_{(\vec{\mathbb{X}},\vec{\mathbb{Y}})} \big]
	& \leq & \int_{\mathbb{I}^{m_1+m_2+m_3}} C_{(\vec{\mathbb{X}},\vec{\mathbb{Y}})} \big( \boldsymbol{\eta}_{\{1,...,m_1\}} (\uuu, {\bf 1}) \big) \; \mathrm{d} Q^{C_{(\vec{\mathbb{X}},\vec{\mathbb{Y}})}} (\uuu)
  \\*
	&   =  & \big[ T_{\{m_1+1,...,m_1+m_2+m_3\}} \big( C_{(\vec{\mathbb{X}}, \vec{\mathbb{Y}})} \big),
								 T_{\{m_1+1,...,m_1+m_2+m_3\}} (C_{(\vec{\mathbb{X}},\vec{\mathbb{Y}})}) \big]
	\\*
	&   =  & \big[ C_{(\vec{\mathbb{X''}},\vec{\mathbb{Y}})}, C_{(\vec{\mathbb{X''}},\vec{\mathbb{Y}})} \big]
\end{eqnarray*}
Thus,
\begin{eqnarray*}
  \min\big\{ d^{m_{1},m_3}_{\beta} (\vec{\mathbb{X'}},\vec{\mathbb{Y}}),
						 d^{m_{2},m_3}_{\beta} (\vec{\mathbb{X''}},\vec{\mathbb{Y}}) \big\}
  & \leq & d^{m_{1}+m_{2},m_3}_{\beta} (\vec{\mathbb{X}},\vec{\mathbb{Y}})
	\\*
	\min\big\{ d^{m_{1},m_3}_{\phi} (\vec{\mathbb{X'}},\vec{\mathbb{Y}}),
						 d^{m_{2},m_3}_{\phi} (\vec{\mathbb{X''}},\vec{\mathbb{Y}}) \big\}
  & \leq & d^{m_{1}+m_{2},m_3}_{\phi} (\vec{\mathbb{X}},\vec{\mathbb{Y}})
	\\*
	\min\big\{ d^{m_{1},m_3}_{\tau} (\vec{\mathbb{X'}},\vec{\mathbb{Y}}),
						 d^{m_{2},m_3}_{\tau} (\vec{\mathbb{X''}},\vec{\mathbb{Y}}) \big\}
  & \leq & d^{m_{1}+m_{2},m_3}_{\tau} (\vec{\mathbb{X}},\vec{\mathbb{Y}})
\end{eqnarray*}
which implies that $ d_{\beta} $, $ d_{\phi} $ and $ d_{\tau} $ satisfy (G1).
In Examples \ref{HierarchicalMOCCounterEx} and \ref{HierarchicalMOCCounterExRho1} we show that the extended dissimilarity functions $ d_{\beta} $, $ d_{\phi} $ and $ d_{\tau} $ satisfy also (G1s),
and that $ d_{\rho} $ fails to satisfy (G1) and (G1s).

Finally, consider $3\le m_1+m_2\le m$ with $ 2 \le m_1 $,
$\vec{\mathbb{X'}} \in L^{0}(\mathbb{I}^{m_1-1}) $ and
$ (\vec{\mathbb{X}},\vec{\mathbb{Y}}) \in \Lina \times \Linb$ such that
$ \mathbb{X'} \cup \{X''\} = \mathbb{X} $, where $ X'' \in \mathbb{X} $ is comonotonic with some element
$ X' \in \mathbb{X'}$.
Without loss of generality, denote by $i$ the position of $X''$ within the vector
$(\vec{\mathbb{X}},\vec{\mathbb{Y}})$.
Then, Lemma \ref{AppendixLemma1} and Example \ref{ExampleMargins} yield
$$
C_{(\vec{\mathbb{X}},\vec{\mathbb{Y}})} (u \, {\bf 1})
	= C_{(\vec{\mathbb{X}},\vec{\mathbb{Y}})} (\boldsymbol{\eta}_{i} (u \, {\bf 1}, {\bf 1}))
	= (T_{\{1,...,m_1+m_2\} \backslash\{i\}} (C_{(\vec{\mathbb{X}},\vec{\mathbb{Y}})})) (u \, {\bf 1})
	= C_{(\vec{\mathbb{X'}},\vec{\mathbb{Y}})} (u \, {\bf 1})
	$$
for every $ u \in \mathbb{I} $ and, hence,
\begin{eqnarray*}
  \big[ C_{(\vec{\mathbb{X}},\vec{\mathbb{Y}})}, M \big]
  & = & \int_{\mathbb{I}^{m_1+m_2}} C_{(\vec{\mathbb{X}},\vec{\mathbb{Y}})} (\uuu) \; \mathrm{d} Q^M (\uuu)
	\\
	& = & \int_{\mathbb{I}} C_{(\vec{\mathbb{X}},\vec{\mathbb{Y}})} (u \, {\bf 1}) \; \mathrm{d} \leb (u)
	= \int_{\mathbb{I}} C_{(\vec{\mathbb{X'}},\vec{\mathbb{Y}})} (u \, {\bf 1}) \; \mathrm{d} \leb (u)
	\\
	& = & \int_{\mathbb{I}^{m_1-1+m_2}} C_{(\vec{\mathbb{X'}},\vec{\mathbb{Y}})} (\uuu) \; \mathrm{d} Q^M (\uuu)
	= \big[ C_{(\vec{\mathbb{X'}},\vec{\mathbb{Y}})}, M \big]
\end{eqnarray*}
and Lemma \ref{AppendixLemma1} together with Example \ref{ExampleMargins} implies
$$
  [C_{(\vec{\mathbb{X}},\vec{\mathbb{Y}})},C_{(\vec{\mathbb{X}},\vec{\mathbb{Y}})}]
	= \big[ T_{\{1,...,m_1+m_2\} \backslash\{i\}} \big( C_{(\vec{\mathbb{X}},\vec{\mathbb{Y}})} \big), T_{\{1,...,m_1+m_2\} \backslash\{i\}} \big( C_{(\vec{\mathbb{X}},\vec{\mathbb{Y}})} \big) \big]
	= [C_{(\vec{\mathbb{X'}},\vec{\mathbb{Y}})},C_{(\vec{\mathbb{X'}},\vec{\mathbb{Y}})}]
$$
Thus,
\begin{eqnarray*}
  d^{m_1,m_2}_{\beta} (\vec{\mathbb{X}},\vec{\mathbb{Y}})
	& = & d^{m_1-1,m_2}_{\beta} (\vec{\mathbb{X'}},\vec{\mathbb{Y}})
	\\
	d^{m_1,m_2}_{\phi} (\vec{\mathbb{X}},\vec{\mathbb{Y}})
	& = & d^{m_1-1,m_2}_{\phi} (\vec{\mathbb{X'}},\vec{\mathbb{Y}})
	\\
	d^{m_1,m_2}_{\tau} (\vec{\mathbb{X}},\vec{\mathbb{Y}})
	& = & d^{m_1-1,m_2}_{\tau} (\vec{\mathbb{X'}},\vec{\mathbb{Y}})
\end{eqnarray*}
which implies that $ d_{\beta} $, $ d_{\phi} $ and $ d_{\tau} $ satisfy (G2).
In Example \ref{HierarchicalMOCCounterExRho2} we show that the extended dissimilarity function $ d_{\rho} $ fails to satisfy (G2).
\end{proof}

The following example shows that the extended dissimilarity functions
$ d_{\beta} $, $ d_{\phi} $ and $ d_{\tau} $ satisfy (G1s).

\begin{example}\label{HierarchicalMOCCounterEx}
Consider $m\geq 3 $, $ m_1 = m_2 = m_3 = 1 $ and the product copula $ \Pi $.
Then, for every random vector $ (X',X'',Y) \in L^{0}(\mathbb{I})  \times L^{0}(\mathbb{I}) \times L^{0}(\mathbb{I})$ having copula
$ T_{\{1,2,3\}} (\Pi) $, the extended dissimilarity functions
$ d_{\beta} $, $ d_{\phi} $ and $ d_{\tau} $ satisfy
\begin{center}
\begin{tabular}{lclcl}
  $d_{\beta}^{1,1} (X',X'')$
	& $\leq$ & $\min\big\{ d_{\beta}^{1,1} (X',Y), d_{\beta}^{1,1} (X'',Y) \big\}$
	&   $<$  & $d^{1+1,1}_{\beta} (\vec{\mathbb{X}},Y)$	
	\\
	$d_{\phi}^{1,1} (X',X'')$
	& $\leq$ & $\min\big\{ d_{\phi}^{1,1} (X',Y), d_{\phi}^{1,1} (X'',Y) \big\}$
	&   $<$  & $d^{1+1,1}_{\phi} (\vec{\mathbb{X}},Y)$	
	\\
	$d_{\tau}^{1,1} (X',X'')$
	& $\leq$ & $\min\big\{ d_{\tau}^{1,1} (X',Y), d_{\tau}^{1,1} (X'',Y) \big\}$
	&   $<$  & $d^{1+1,1}_{\tau} (\vec{\mathbb{X}},Y)$	
\end{tabular}
\end{center}
where $ \mathbb{X} = X' \cup X'' $.
Indeed, we have
\begin{center}
\begin{tabular}{lclclclcl}
  $d_{\beta}^{1,1} (X',X'')$
	&   $=$  & $\min\big\{ d_{\beta}^{1,1} (X',Y), d_{\beta}^{1,1} (X'',Y) \big\}$
	&   $=$  & $ 2/8 $
	&   $<$  & $ 3/8 $
	&   $=$  &$d^{1+1,1}_{\beta} (\vec{\mathbb{X}},Y)$	
	\\
	$d_{\phi}^{1,1} (X',X'')$
	&   $=$  & $\min\big\{ d_{\phi}^{1,1} (X',Y), d_{\phi}^{1,1} (X'',Y) \big\}$
	&   $=$  & $ 2/12 $
	&   $<$  & $ 3/12 $
	&   $=$  & $d^{1+1,1}_{\phi} (\vec{\mathbb{X}},Y)$	
	\\
	$d_{\tau}^{1,1} (X',X'')$
	&   $=$  & $\min\big\{ d_{\tau}^{1,1} (X',Y), d_{\tau}^{1,1} (X'',Y) \big\}$
	&   $=$  & $ 2/8 $
	&   $<$  & $ 3/8 $
  &   $=$  & $d^{1+1,1}_{\tau} (\vec{\mathbb{X}},Y)$	
\end{tabular}
\end{center}
Thus, the extended dissimilarity functions
$ d_{\beta} $, $ d_{\phi} $ and $ d_{\tau} $ satisfy (G1s).
\end{example}

We conclude this section by showing that the extended dissimilarity function $ d_{\rho} $ fails to satisfy (G1), (G1s) and (G2).

\begin{example}\label{HierarchicalMOCCounterExRho1}
Consider $m\geq 6 $, $ m_1 = m_2 = m_3 = 2 $ and the product copula $ \Pi $.
Then, for every random vector $ (\vec{\mathbb{X'}},\vec{\mathbb{X''}},\vec{\mathbb{Y}}) \in L^{0}(\mathbb{I}^2) \times L^{0}(\mathbb{I}^2) \times L^{0}(\mathbb{I}^2) $ having copula
$ T_{\{1,\dots,6\}} (\Pi)$, the extended dissimilarity function
$ d_{\rho} $ satisfies
$$
d_\rho^{2,2} (\vec{\mathbb{X'}},\vec{\mathbb{X''}}) = \min\big\{ d_{\rho}^{2,2} (\vec{\mathbb{X'}},\vec{\mathbb{Y}}), d_{\rho}^{2,2} (\vec{\mathbb{X''}},\vec{\mathbb{Y}}) \big\}
>  d^{2+2,2}_{\rho} (\vec{\mathbb{X}},\vec{\mathbb{Y}})	
$$
where $ \mathbb{X} = \vec{\mathbb{X'}} \cup \vec{\mathbb{X''}} $.
Indeed, we have
$$
d_\rho^{2,2} (\vec{\mathbb{X'}},\vec{\mathbb{X''}})
=\min\big\{ d_{\rho}^{2,2} (\vec{\mathbb{X'}},\vec{\mathbb{Y}}), d_{\rho}^{2,2} (\vec{\mathbb{X''}},\vec{\mathbb{Y}}) \big\}
=\frac{616}{4480}>\frac{570}{4480}
=d^{2+2,2}_{\rho} (\vec{\mathbb{X}},\vec{\mathbb{Y}})
$$	
Thus, the extended dissimilarity function
$ d_{\rho} $ fails to satisfy (G1) and also (G1s).
\end{example}

\begin{example}\label{HierarchicalMOCCounterExRho2}
Consider $m \geq 4 $, $ m_1 = 2 $, $ m_2 = m-2 $
and the copula $ C: \mathbb{I}^{2} \times \mathbb{I}^{m-2} \to \mathbb{I} $ given by
$$
  C(\uuu,\vvv)
	:= M(\uuu)\, \Pi(\vvv)
$$
(see e.g. \cite{DurSem16}). Then,
for every random vector
$ \vec{\mathbb{X}} = (X_1,X_2)^{\prime} \in L^{0}(\mathbb{I}^{2}) $
and every random vector $ \vec{\mathbb{Y}} \in L^{0}(\mathbb{I}^2) $
such that $ (\vec{\mathbb{X}},\vec{\mathbb{Y}}) $ has copula $ T_{\{1,2,3,4\}} (C) $ and hence $X_1$ and $X_2$ are comonotonic,
the extended dissimilarity function $ d_{\rho} $ satisfies
$$
  d^{2,2}_{\rho} (\vec{\mathbb{X}},\vec{\mathbb{Y}})
	\neq d^{1,2}_{\rho} (X_1,\vec{\mathbb{Y}})
$$
Indeed, we obtain
$
  d^{2,2}_{\rho} (\vec{\mathbb{X}},\vec{\mathbb{Y}})
	  =  \frac{14}{120}
	\neq \frac{15}{120}
	  =  d^{1,2}_{\rho} (X_1,\vec{\mathbb{Y}}).
$
Thus, the extended dissimilarity function
$ d_{\rho} $ fails to satisfy (G2).
\end{example}

\begin{remark}{}
The choice of some normalizing constants in the definition of dissimilarity functions based on measures of association is crucial. In particular, the direct use of multivariate versions of these measures may be flawed in some cases, as the following example indicates.
\\
Consider the measure of concordance Kendall's tau $\kappa$ (see, e.g., \cite{Genest2011}) and,
for $ m_1, m_2 \in \mathbb{N} $ with $ 2 \leq m_1 + m_2 \leq m $,
define the map
$d^{m_1,m_2}_{\kappa} \colon \Lina\times\Linb\to[0,+\infty[ $ by letting
$$
  d^{m_1,m_2}_{\kappa} (\vec{\mathbb{X}},\vec{\mathbb{Y}})
	:= 1 - \kappa (\vec{\mathbb{X}},\vec{\mathbb{Y}})
	 = \frac{\big[M,M\big] - \big[ C_{(\vec{\mathbb{X}},\vec{\mathbb{Y}})}, C_{(\vec{\mathbb{X}},\vec{\mathbb{Y}})} \big]}{\big[ M, M \big] - \big[\Pi,\Pi\big]}
	 = \frac{d^{m_1,m_2}_{\tau} (\vec{\mathbb{X}},\vec{\mathbb{Y}})}{\big[ M, M \big] - \big[\Pi,\Pi\big]}
$$
Then (G1) is equivalent to the inequality
$
  \kappa (\vec{\mathbb{X}},\vec{\mathbb{Y}})
	\leq \max\big\{ \kappa (\vec{\mathbb{X'}},\vec{\mathbb{Y}}), \kappa (\vec{\mathbb{X''}},\vec{\mathbb{Y}}) \big\}
$
for all $3\le m_{1}+m_{2}+m_3\le m$ and for all
$ (\vec{\mathbb{X'}}, \vec{\mathbb{X''}}, \vec{\mathbb{Y}}) \in L^{0}(\mathbb{I}^{m_{1}}) \times L^{0}(\mathbb{I}^{m_{2}}) \times L^{0}(\mathbb{I}^{m_{3}}) $,
	where $ \mathbb{X} := \mathbb{X'} \cup \mathbb{X''} $
and $ \mathbb{X'} $, $\mathbb{X''} $ and $\mathbb{Y} $ are pairwise disjoint.
Now, consider $m\geq 4 $ and the copula $ C: \mathbb{I}^{m} \to \mathbb{I}$ given by
$$
  C(\uuu)
	:= \Pi(\uuu) + \prod_{i=1}^{m} u_i \prod_{i=1}^{4} (1-u_i)
$$
Then, for every random vector
$ (X',X'',\vec{\mathbb{Y}}) \in L^{0}(\mathbb{I}) \times L^{0}(\mathbb{I}) \times L^{0}(\mathbb{I}^2) $
having copula
$ T_{\{1,2,3,4\}} (C) $, the above inequality reduces to
$
  \kappa (\vec{\mathbb{X}},\vec{\mathbb{Y}})
	\leq 0
$
where $ \mathbb{X} := X' \cup X'' $.
However, straightforward calculation yields
$
  \kappa (\vec{\mathbb{X}},\vec{\mathbb{Y}}) = \frac{2}{567}
$
which contradicts (G1).
Thus, although $d^{m_1,m_2}_{\tau}$ satisfies (G1), $d^{m_1,m_2}_{\kappa}$ may fail to satisfy (G1).
\end{remark}


\subsection{Extended dissimilarity functions based on multivariate  tail dependence}\label{subsec:TDC}

In this section we study an extended dissimilarity function based on a modified version of the classical lower tail dependence coefficient (see, e.g., \cite{DurSem16}). This kind of dissimilarity concept is useful in order to  detect different tail association in random vectors. In the literature, similar concepts have been considered for the analysis of financial time series. See, e.g., \cite{DeLZuc11,DurPapTor15StatPap,Jietal18,Yang2018}.

For $ m_1, m_2 \in \mathbb{N} $ with $ 2 \leq m_1 + m_2 \leq m $,
we define the function $ d^{m_1,m_2}_{\rm LTD} \colon \Lina\times\Linb\to[0,+\infty[ $ by letting
$$
  d^{m_1,m_2}_{\rm LTD} (\vec{\mathbb{X}},\vec{\mathbb{Y}})
	:= 1 - \limsup_{u \to 0^+} \; \frac{C_{(\vec{\mathbb{X}},\vec{\mathbb{Y}})} (u {\bf 1})}{u}
$$
Notice that, provided that the above limit superior coincides with the limit inferior, then $d^{1,1}=1-\lambda_L$, where $\lambda_L$
is the lower tail dependence coefficient of $(X,Y)$.

\begin{theorem}
For all $ 2 \leq m_1 + m_2 \leq m $, $ d^{m_1,m_2}_{\rm LTD} $ is a $(m_1,m_2)$-dissimilarity function,
and thus,
$$
  d_{\rm LTD}: \bigcup_{2 \leq m_1+m_2 \leq m} \Lina\times\Linb\to[0,+\infty[
$$
with $ d_{\rm LTD}|_{\Lina\times\Linb} := d^{m_1,m_2}_{\rm LTD} $ is an extended dissimilarity function satisfying (L1), (L1c), (G1), (G1s) and (G2).
\end{theorem}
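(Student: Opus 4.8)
The plan is to follow the blueprint of the proof of Theorem \ref{thm:global_d}, exploiting that $d^{m_1,m_2}_{\rm LTD}$ depends on the copula $C:=C_{(\vec{\mathbb{X}},\vec{\mathbb{Y}})}$ only through its diagonal section $u\mapsto C(u\,\ein)$ near the origin. First I would record that the map is well defined with values in $[0,1]$: since every copula satisfies $C\le M$, one has $C(u\,\ein)\le M(u\,\ein)=u$, so $C(u\,\ein)/u\in[0,1]$ and the $\limsup$ lies in $[0,1]$. Properties (1)--(3) of Definition \ref{diss.fct.local} are then immediate --- for $C=M$ the diagonal section equals $u$, giving $d_{\rm LTD}=0$; permuting coordinates (within a block, or swapping the two blocks as in \eqref{eq:simmetry}) leaves $C(u\,\ein)$ unchanged; and the whole expression is a functional of $C$, giving the law-invariance \eqref{local.diss.3}. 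This already shows $d_{\rm LTD}$ is an extended dissimilarity function.

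For (L1) I would argue pointwise: $(\vec{\mathbb{X}},\vec{\mathbb{Y}})\preceq_{lo}(\vec{\mathbb{X'}},\vec{\mathbb{Y'}})$ means the associated copulas satisfy $C\le C'$ everywhere, so $C(u\,\ein)/u\le C'(u\,\ein)/u$ for every $u\in\,]0,1]$; monotonicity of the $\limsup$ then yields $d^{m_1,m_2}_{\rm LTD}(\vec{\mathbb{X'}},\vec{\mathbb{Y'}})\le d^{m_1,m_2}_{\rm LTD}(\vec{\mathbb{X}},\vec{\mathbb{Y}})$. Property (L1c) follows because the concordance order implies the lower orthant order. Note that (L2) and (L3) are deliberately \emph{not} claimed, consistently with the fact that the tail coefficient breaks weak-continuity and separates the two tails.

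The global properties reduce again to the diagonal section, using the margin identities of Example \ref{ExampleMargins} and Lemma \ref{AppendixLemma1}. For (G1), put $C=C_{(\vec{\mathbb{X}},\vec{\mathbb{Y}})}$ on $\mathbb{I}^{m_1+m_2+m_3}$; since a copula is non-decreasing in each argument, raising the $\vec{\mathbb{X''}}$-coordinates from $u$ to $1$ gives
$$
  C(u\,\ein)\le C\big(\boldsymbol{\eta}_{\{m_1+1,\dots,m_1+m_2\}}(u\,\ein,\ein)\big)=C_{(\vec{\mathbb{X'}},\vec{\mathbb{Y}})}(u\,\ein),
$$
and symmetrically $C(u\,\ein)\le C_{(\vec{\mathbb{X''}},\vec{\mathbb{Y}})}(u\,\ein)$. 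Dividing by $u$ and passing to the $\limsup$ shows the diagonal limit of $C$ is below the minimum of the two marginal ones, whence
$$
  d^{m_1+m_2,m_3}_{\rm LTD}(\vec{\mathbb{X}},\vec{\mathbb{Y}})\ge\max\big\{d^{m_1,m_3}_{\rm LTD}(\vec{\mathbb{X'}},\vec{\mathbb{Y}}),\,d^{m_2,m_3}_{\rm LTD}(\vec{\mathbb{X''}},\vec{\mathbb{Y}})\big\},
$$
so \eqref{eq:reducibility2} holds even without assuming \eqref{eq:reducibility1}. For (G2), if the comonotonic element sits in position $i$ and is comonotonic with the element in position $j$, then $T_{\{i,j\}}(C)=M$ and Lemma \ref{AppendixLemma1}(iii) gives $C(u\,\ein)=C(\boldsymbol{\eta}_i(u\,\ein,\ein))=C_{(\vec{\mathbb{X'}},\vec{\mathbb{Y}})}(u\,\ein)$ for every $u$; the diagonal sections coincide, so the two $\limsup$s, and hence the two dissimilarities, are equal.

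The one step needing an idea is (G1s). Here the $\Pi$-based examples of Section \ref{sec:example} are useless: any copula that is $o(u)$ along the diagonal has vanishing tail dependence, forcing every $d_{\rm LTD}$-value to equal $1$ and ruling out a strict inequality. Instead I would exhibit a copula with genuine lower tail dependence whose joint tail dependence is strictly below that of its pairs. The exchangeable Clayton family works: taking $m_1=m_2=m_3=1$ and letting the three-dimensional margin of $(X',X'',Y)$ be the Clayton copula of parameter $\theta>0$, its $n$-variate diagonal $\big(n\,u^{-\theta}-(n-1)\big)^{-1/\theta}$ gives $\lambda_L=n^{-1/\theta}$, so each pair has $\lambda_L=2^{-1/\theta}$ while the triple has $\lambda_L=3^{-1/\theta}<2^{-1/\theta}$. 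Thus \eqref{eq:reducibility1} holds with equality and
$$
  \min\big\{d^{1,1}_{\rm LTD}(X',Y),\,d^{1,1}_{\rm LTD}(X'',Y)\big\}=1-2^{-1/\theta}<1-3^{-1/\theta}=d^{2,1}_{\rm LTD}(\vec{\mathbb{X}},Y),
$$
which establishes (G1s). I expect this construction to be the only non-routine point; the rest is bookkeeping with diagonal sections and the margin and comonotonicity lemmas already proved.
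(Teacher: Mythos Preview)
Your proposal is correct and follows essentially the same approach as the paper: the diagonal-section monotonicity for (G1), Lemma \ref{AppendixLemma1}(iii) for (G2), and the Clayton copula for (G1s) are exactly what the paper uses (there with the specific parameter $\theta=1/2$, yielding the values $3/4$ and $8/9$). Your observation that \eqref{eq:reducibility2} actually holds with $\max$ in place of $\min$, and without needing hypothesis \eqref{eq:reducibility1}, is a mild sharpening but follows from the same computation.
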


\begin{proof}
It is straightforward to show that $ d_{\rm LTD} $ is an extended dissimilarity function satisfying (L1) and (L1c).

Now, consider
$ 3 \le m_{1}+m_{2}+m_3 \le m $,
the random vector
$ (\vec{\mathbb{X'}}, \vec{\mathbb{X''}}, \vec{\mathbb{Y}})
  \in L^{0}(\mathbb{I}^{m_{1}}) \times L^{0}(\mathbb{I}^{m_{2}}) \times L^{0}(\mathbb{I}^{m_{3}}) $
such that \eqref{eq:reducibility1} holds,
$ \mathbb{X'} $, $\mathbb{X''} $ and $\mathbb{Y} $ are pairwise disjoint  and put $ \mathbb{X} := \mathbb{X'} \cup \mathbb{X''} $.
Then
\begin{eqnarray*}
  C_{(\vec{\mathbb{X}},\vec{\mathbb{Y}})} (u {\bf 1})
  & \leq & C_{(\vec{\mathbb{X}},\vec{\mathbb{Y}})} \big( \boldsymbol{\eta}_{\{m_1+1,...,m_1+m_2\}} (u {\bf 1}, {\bf 1}) \big)
	\\
	& 	=  & \big( T_{\{1,...,m_1+m_2+m_3\} \backslash \{m_1+1,...,m_1+m_2\}} \big( C_{(\vec{\mathbb{X}}, \vec{\mathbb{Y}})} \big) \big) (u {\bf 1})
	\\
	&	  =  & C_{(\vec{\mathbb{X'}},\vec{\mathbb{Y}})} (u {\bf 1})
  \\
  C_{(\vec{\mathbb{X}},\vec{\mathbb{Y}})} (u {\bf 1})
  & \leq & C_{(\vec{\mathbb{X}},\vec{\mathbb{Y}})} \big( \boldsymbol{\eta}_{\{1,...,m_1\}} (u {\bf 1}, {\bf 1}) \big)
	\\
	&   =  & \big( T_{\{m_1+1,...,m_1+m_2+m_3\}} \big( C_{(\vec{\mathbb{X}}, \vec{\mathbb{Y}})} \big) \big) (u {\bf 1})
	\\
	&   =  & C_{(\vec{\mathbb{X''}},\vec{\mathbb{Y}})} (u {\bf 1})
\end{eqnarray*}
for every $ u \in \mathbb{I} $, and thus,
$ \min\big\{ d^{m_{1},m_3}_{\rm LTD} (\vec{\mathbb{X'}},\vec{\mathbb{Y}}),
						 d^{m_{2},m_3}_{\rm LTD} (\vec{\mathbb{X''}},\vec{\mathbb{Y}}) \big\}
  \leq d^{m_{1}+m_{2},m_3}_{\rm LTD} (\vec{\mathbb{X}},\vec{\mathbb{Y}}).$
This proves (G1).
In Example \ref{HierarchicalLTDCounterEx} we show that the extended dissimilarity function $ d_{\rm LTD} $ satisfies (G1s).

Finally, consider $3\le m_1+m_2\le m$ with $ 2 \le m_1 $,
$\vec{\mathbb{X'}} \in L^{0}(\mathbb{I}^{m_1-1}) $ and
$ (\vec{\mathbb{X}},\vec{\mathbb{Y}}) \in \Lina \times \Linb$ such that
$ \mathbb{X'} \cup \{X''\} = \mathbb{X} $, where $ X'' \in \mathbb{X} $ is comonotonic with some element
$ X' \in \mathbb{X'}$.
Without loss of generality, denote by $i$ the position of $X''$ within the vector
$(\vec{\mathbb{X}},\vec{\mathbb{Y}})$.
Then Lemma \ref{AppendixLemma1} yields
$ C_{(\vec{\mathbb{X}},\vec{\mathbb{Y}})} (u {\bf 1})
	= C_{(\vec{\mathbb{X}},\vec{\mathbb{Y}})} (\boldsymbol{\eta}_{i} (u {\bf 1}, {\bf 1}))
	= (T_{\{1,...,m_1+m_2\} \backslash \{i\}} (C_{(\vec{\mathbb{X}},\vec{\mathbb{Y}})})) (u {\bf 1})
	= C_{(\vec{\mathbb{X'}},\vec{\mathbb{Y}})} (u {\bf 1}) $
for every $ u \in \mathbb{I} $ and hence
$
  d^{m_1,m_2}_{\rm LTD} (\vec{\mathbb{X}},\vec{\mathbb{Y}})
	= d^{m_1-1,m_2}_{\rm LTD} (\vec{\mathbb{X'}},\vec{\mathbb{Y}}).
$
This proves (G2) and, hence, the assertion.
\end{proof}

The following example shows that the extended dissimilarity function $ d_{\rm LTD} $ satisfies (G1s).

\begin{example}\label{HierarchicalLTDCounterEx}
Consider $m\geq 3 $, $ m_1 = m_2 = m_3 = 1$, and the copula
$ C: \mathbb{I}^{m} \to \mathbb{I}$ given by
$$
  C(\uuu)
	:= \left( \sum_{i=1}^{m} u_i^{-1/2} - (n-1) \right)^{-2}
$$
which is the Clayton copula with parameter $1/2$.
Then, for every random vector
$ (X',X'',Y) \in L^{0}(\mathbb{I}) \times L^{0}(\mathbb{I}) \times L^{0}(\mathbb{I}) $
having copula $ T_{\{1,2,3\}} (C) $,
the extended dissimilarity function $ d_{\rm LTD} $ satisfies
$$
d_{\rm LTD}^{1,1} (X',X'')
\leq \min\big\{ d_{\rm LTD}^{1,1} (X',Y), d_{\rm LTD}^{1,1} (X'',Y) \big\}
<d^{1+1,1}_{\rm LTD} (\vec{\mathbb{X}},Y)
$$	
where $ \mathbb{X} = X' \cup X''$. Indeed, we have
$$
d_{\rm LTD}^{1,1} (X',X'')
=\min\big\{ d_{\rm LTD}^{1,1} (X',Y), d_{\rm LTD}^{1,1} (X'',Y) \big\}
=\dfrac{3}{4}<\dfrac{8}{9} 
=d^{1+1,1}_{\rm LTD} (\vec{\mathbb{X}},Y).
$$
Thus, the extended dissimilarity function $ d_{\rm LTD} $ satisfies (G1s).
\end{example}

\begin{remark} \label{TDC.rem} \leavevmode
\begin{enumerate}
\item[(1)] Notice that $d_{\rm LTD}$ does not satisfy (L2), since lower and upper tail behaviour of a copula may be different.

\item[(2)] Note also that $d_{\rm LTD}$ does not satisfy (L3).
To this end, consider, for instance, the bivariate copula $C_k$ that is an ordinal sum of $(M,\Pi)$ with respect to $([0,1/k],[1/k,1])$ (see, e.g., \cite{DurSem16}). Then $C_k$ tends to $\Pi$, as $k$ tends to $+\infty$  with $d^{1,1}_{\rm LTD}(\Pi)=1$, but $d^{1,1}_{\rm LTD}(C_k)=0$ for every $k\ge 2$.

\item[(3)] 
In particular then also $d_{\rm LTD}^{1,1} $ does not satisfy (L3)
and this hence transfers to the extended dissimilarity functions based on linkage methods $ d_{\rm min} $, $ d_{\rm ave} $ and $ d_{\rm max} $ (compare Theorem \ref{MoDLinkage}).
\end{enumerate}
\end{remark}

\section{The hierarchical clustering procedure}\label{sec:algorithm}

Here, we summarize how a general agglomerative hierarchical algorithm based on extended dissimilarity functions can be implemented (see, for instance, \cite{Eve11,Gor87,Koj04}). To this end, we remind that we aim at determining a suitable partition of the (finite) set $\mathcal{X}=\{X_1,\dots,X_m\}$ of $m\ge 3$ continuous r.v.'s into non-empty and non-overlapping classes. 

%

Given a dissimilarity index $\widetilde{d}$ induced by some extended dissimilarity function
$$
d:
	\bigcup_{2 \leq m_1+m_2 \leq m} \Lina\times\Linb\to[0,+\infty[
$$
the different steps of an agglomerative hierarchical clustering algorithm based on $d$ are given below:
\begin{enumerate}
\item[(1)] Each object of $\mathcal{X}$ forms a class.
\item[(2)] For each pair of classes $\mathbb{X}$ and $\mathbb{Y}$, one computes $\widetilde{d}(\mathbb{X},\mathbb{Y})$.
\item[(3)]  A pair of classes having the smallest dissimilarity degree, say $\{\mathbb{X}_1,\mathbb{Y}_1\}$, is identified, then the composite class $\mathbb{X}_1\cup\mathbb{Y}_1$ is formed and the number of classes is decremented.
\item[(4)]  Steps (2), (3) and (4) are repeated until the number of classes is equal to $1$.
\end{enumerate}

The hierarchy of classes built by the clustering algorithm can be hence represented by means of a \emph{dendrogram}, from which a suitable partition of $\mathcal{X}$ can be derived (see, for instance, \cite{Eve11}).

\bigskip
Now, while these steps are common to any agglomerative algorithm, the use of the extended dissimilarity function may provide some important insights into the agglomerative hierarchical algorithm. In fact, the procedure can use: either (a) the information about the pairwise dependence, as in the dissimilarity function based on linkage methods; or (b) the information about their global (higher dimensional) copula. The latter method, in particular, will allow us to
detect those dependencies that only appear in higher dimensions, a feature that can be quite appealing in applications.

\begin{example}\label{Ex:dendro_HAC}
As an illustrative example, consider a set $\mathcal{X}$ formed by $6$ pairwise independent r.v.'s such that $\mathbb{X}=\{X_1,X_2,X_3\}$ and $\mathbb{Y}=\{Y_1,Y_2,Y_3\}$ are, respectively, globally dependent. For instance, we may assume that they are coupled with a trivariate FGM copula with parameter $\theta_1$ and $\theta_2$, respectively $(\theta_1>\theta_2)$.
\\
Now, every dissimilarity index based on classical linkage methods cannot recognize the difference among the two groups and its related dendrogram would be similar to the representation in Figure \ref{fig:dendro_HAC} (left). However, if we consider the dissimilarity index based on multivariate Kendall's tau, then the procedure could produce a different output and recognize the dendrogram structure as in Figure \ref{fig:dendro_HAC} (right).
\end{example}

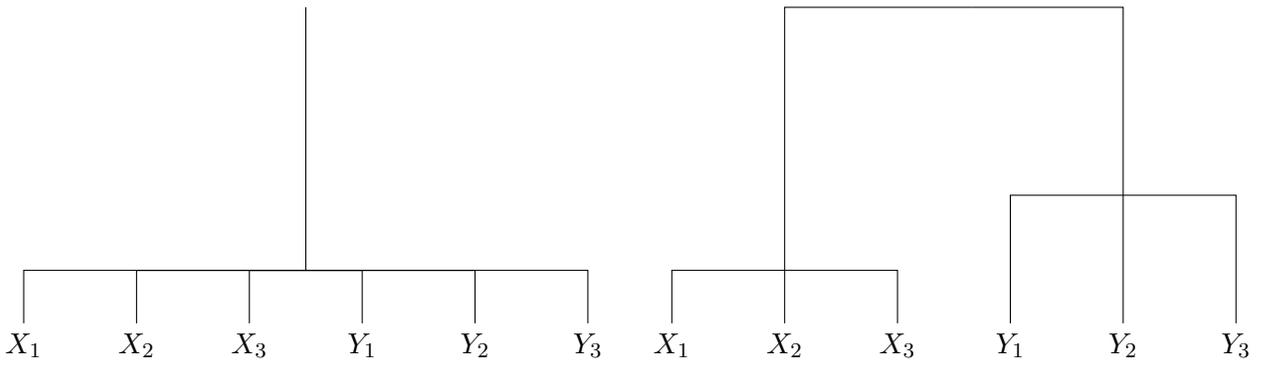
\begin{figure}
\begin{tabular}{lr}
\begin{tikzpicture}[sloped]
\node (a) at (-5,0) {$X_1$};
\node (b) at (-3.5,0) {$X_2$};
\node (c) at (-2,0) {$X_3$};
\node (d) at (-0.5,0) {$Y_1$};
\node (e) at (1,0) {$Y_2$};
\node (h) at (2.5,0) {$Y_3$};
\node (all) at (-1.25,1) {};

\draw  (a) |- (all.center);
\draw  (b) |- (all.center);
\draw  (c) |- (all.center);
\draw  (d) |- (all.center);
\draw  (e) |- (all.center);
\draw  (h) |- (all.center);
\draw (-1.25,1)--(-1.25,4.5);
\end{tikzpicture}

&

\begin{tikzpicture}[sloped]
\node (a) at (-5,0) {$X_1$};
\node (b) at (-3.5,0) {$X_2$};
\node (c) at (-2,0) {$X_3$};
\node (d) at (-0.5,0) {$Y_1$};
\node (e) at (1,0) {$Y_2$};
\node (h) at (2.5,0) {$Y_3$};
\node (abc) at (-3.5,1) {};
\node (deh) at (1,2) {};
\node (all) at (-1,4.5) {};

\draw  (a) |- (abc.center);
\draw  (b) |- (abc.center);
\draw  (c) |- (abc.center);
\draw  (d) |- (deh.center);
\draw  (e) |- (deh.center);
\draw  (h) |- (deh.center);
\draw  (abc.center) |- (all.center);
\draw  (deh.center) |- (all.center);
\end{tikzpicture}
\end{tabular}
\caption{Two illustrative examples of dendrogram representation of a random vector based on different extended dissimilarity functions. See Example \ref{Ex:dendro_HAC}.}\label{fig:dendro_HAC}
\end{figure}

Apart from the case when the probability law of the $\mathcal{X}=\{X_1,\dots,X_m\}$ ($m\ge 3$) is known (i.e. by some fitting procedures and/or expert opinion), the information about $\mathcal{X}$ is usually recovered from some available observations, which can be considered as random sample from $X_1,\dots,X_m$, denoted by $(x_{ij})$ with $i=1,\dots,n$ and $j=1,\dots,m$.
In such a case, depending on the dissimilarity functions, specific estimation procedures should be adopted.

\begin{example}\label{Ex:plug-in}
Consider the case when a dissimilarity function $d^{m_1,m_2}$ can be expressed as a smooth function of a given measure of association $\mu$ for $(m_1+m_2)$-dimensional random vectors, say
$$
d^{m_1,m_2}=f(\mu)
$$
Then, in view of a suitable application of continuous mapping theorem, a (plug--in) estimator of $d^{m_1,m_2}$ is given by
$$
\widehat{d}^{m_1,m_2}=f(\widehat\mu),
$$
where $\widehat \mu$ is a convenient estimator of $\mu$. Such a procedure can be, for instance, applied to the dissimilarity functions considered in Section \ref{SubsectionExtDissMOD}.
\end{example}

\begin{remark}
In the case of multivariate time series, i.e. when, for a fixed $j$, $(x_{ij})$ presents a time--varying behaviour, it is common to apply the hierarchical algorithm not directly on the historical time series, but on the residual time series obtained after fitting each univariate time series with an appropriate time-varying model (like ARMA-GARCH specification). Such a general framework is described, for instance, in \cite{Pat12} (see also \cite{Acaretal19}) and applied, among others, in \cite{DeLZuc11,DurPapTor14ADAC}.
\end{remark}

In general, every dissimilarity function has a strong impact on the clustering procedure, since each one can have a quite different interpretation. However, when extended dissimilarity functions based on different linkage methods are considered, it would be convenient to compare them since they are defined from the same bivariate dissimilarity function.  Below, via a simulation study, we check whether the choice of the linkage method may have a relevant impact on the performance of the algorithm.

\subsection{A simulation study about linkage methods}\label{subsec:simula}

Here, we compare the performance of  hierarchical clustering methods where the extended dissimilarity functions are based on average, single and complete linkage method, while the pairwise dissimilarities are obtained from
$d^{1,1}_\beta$, $d^{1,1}_\phi$, $d^{1,1}_{\rho}$ and $d^{1,1}_{\tau}$. Notice that, since these pairwise dissimilarities are based on classical measures of association, their (non-parametric) estimation is grounded on the (classical) empirical versions of these measures, as described in \cite{Genetal13JSFdS,Sch_et_al10CTA,Ube05} among others (see also Example \ref{Ex:plug-in}).

First, we consider the following setup. A random vector $\vec{\mathbb{X}}$ of dimension $m=15$ is constructed in the following way:
\begin{itemize}
\item the random vector is formed by three independent subvectors, say $(\vec{\mathbb{X}}_1,\vec{\mathbb{X}}_2,\vec{\mathbb{X}}_3)$;
\item the dimension $m_i$ of each $\vec{\mathbb{X}}_i$ is randomly chosen from $2$ to $11$ to ensure that each group has $2$ elements and $m_1+m_2+m_3=m$;
\item each $\vec{\mathbb{X}}_i$ is distributed according to a copula generated from four different copula models, namely Clayton, Frank, Gumbel and equicorrelated Gaussian (for the definition of these families, see, e.g., \cite{DurSem16}), with pairwise Kendall's tau equal to $\tau$.
\end{itemize}
For $B=500$ replications, the simulation study is then performed simulating $N$ independent realizations $(N\in\{50,100,250\})$ from $\vec{\mathbb{X}}$ with $\tau\in\{0.1,0.2,0.3\}$.
Hence, for each simulated scenario the Adjusted Rand Index \cite{HubAra85} (ARI, hereafter) is calculated to measure the agreement between the obtained partition and the true one. Here, the partition is obtained by cutting the dendrogram so that three groups are derived.

The distribution of ARI for each scenario is shown in Figures~\ref{Clayton},~\ref{Frank},~\ref{Gumbel} and~\ref{Gaussian}.
As the results for the four copula models are very similar, we only comment those obtained for the Clayton copula and shown in Fig.~\ref{Clayton}. We remind that a larger Adjusted Rand Index means a higher agreement between two partitions and the maximum value of the index is $1$. As one could have expected, the lower is the degree of dependence among the variables of a group, the harder is for the hierarchical clustering algorithm to identify the true partition. Moreover, the larger is the sample size, the better are the results for a given dependence degree. As far as linkage methods are concerned, one can see that, remarkably differences appear only when the dependence level is really low, i.e. lower than $0.3$. In these cases, irrespectively from the dissimilarity measures, the average linkage method appears to be more satisfactory than the complete and the single ones. As for the pairwise dissimilarity function, $d^{1,1}_\beta$ appears to be the worst choice in case of weak dependence among groups. Overall, the average linkage performed the best, which confirms its potential frequently proved in the literature \cite{Hall68,Eisen98,Bottegoni06,Kumar12}, especially when $d^{1,1}_\phi$, $d^{1,1}_{\rho}$ and $d^{1,1}_{\tau}$ are used.

\par
Since the choice of the copula family seems to be irrelevant in the previous simulation, we fix one specific family, namely Clayton class, and perform a similar simulation study in higher dimensions. Specifically, we consider a random vector $\vec{\mathbb{X}}$ of dimension $m\in\{60,120\}$ such that:
\begin{itemize}
\item the random vector is formed by $K\in \{6,10\}$ independent subvectors, say $\vec{\mathbb{X}}_i$;
\item the dimension $m_i$ of each $\vec{\mathbb{X}}_i$ is $m/K$;
\item each $\vec{\mathbb{X}}_i$ is distributed according to a Clayton copula with pairwise Kendall's tau equal to $\tau$.
\end{itemize}
For $B=500$ replications, the simulation study is then performed simulating $N$ independent realizations $(N\in\{100,250\})$ from $\vec{\mathbb{X}}$ with $\tau\in\{0.1,0.2,0.3\}$.

The results can be seen from Figure~\ref{100_3} to Figure~\ref{250_6}. Summarizing, both for $K=6$ and $K=10$, the average linkage performs better than the other two linkages, while the single linkage is the worst one. The complete linkage shows a performance similar to the average linkage when $\tau>0.1$. There are no remarkable differences among dissimilarities by varying $m$ in $\{60,120\}$ and the slight differences are remarkably reduced as when $\tau>0.1$ and $N=250$, cases where all the measures show an almost perfect performance (except for the single linkage and $d_\beta^{1,1}$).


\section{Applications}\label{sec:appli}

In order to show the ability of our methodology in the statistical practice, we present some empirical analysis.

\subsection{Analysis of gene expressions}

First, we focus on the NCI60 data set which is available in the R package \texttt{made4} \cite{MADE4} and contains 144 gene expression (log-ratio measurements) rows and 60 cell line columns. Gene expressions have been extracted by using the cDNA spotted microarray technology \cite{Ross2000} and pre-processed as described by \cite{Culhane2003}.
The study has been carried out by the National Cancer Institute's (NCI) Developmental Therapeutics Program (DTP) and human tumour cell lines have been derived from patients with leukaemia (LEUK), melanoma (MELAN), non-small colon lung (NSCLC), colon (COLON), central nervous system (CNS), ovarian (OVAR), renal (RENAL), breast (BREAST) and prostate (PROSTATE) cancers. Here, we divided the human tumour cell lines in two groups according to the (bivariate) Kendall's $\tau$ and using $0.3$ as cut-off. Precisely, for each subset of human tumour cell lines, i.e. for each kind of tumour, the pairwise Kendall's $\tau$ correlation matrix has been computed. If at least 60\% (59.4\% for the BREAST cancer group) of pairwise correlation coefficients is greater than or equal to 0.3, then that kind of tumour has been considered as `tumour with high dependence', otherwise it has been classified as `tumour with low dependence'. The rationale is to show empirical results comparable with the scenarios simulated in the performed Monte Carlo studies. Hence, as for the tissues with low dependence ($\tau<0.3$) we have 8 BREAST, 9 NSCLC, and 6 OVAR, while as for the tissues with high dependence ($\tau\geq0.3$) we have 6 CNS, 7 COLON, 6 LEUK, 8 MELAN, 2 PROSTATE, and 8 RENAL.

Tables~\ref{tab:low} and~\ref{tab:high} show the obtained results. Coherently with the simulation results, when the dependence is low (see Table~\ref{tab:low}), any linkage method, irrespectively of the kind of extended dissimilarity function, is seldom able to recognize the true partition, whereas, when the dependence is mild or high (see Table~\ref{tab:high}), then  the single linkage method appears to perform badly while the average and the complete are very good competitors. Here, one may argue that the global properties and, particularly the reducibility property (G1), can play a role in explaining these performance (see Theorem \ref{LinkageMethodsTheorem}). In addition, we notice that the kind of dissimilarity measure appears to have an impact on the goodness of the final partition only when the average linkage method is used, in this case $d_\rho$ appears to be the best dissimilarity measure.

\bigskip
The second example concerns the data set discussed in \cite{Notterman2001} containing the transcript of 7086 human mRNAs from 4 normal tissues and 4 adenoma tissues. By applying the hierarchical clustering we want to evaluate the capability of distinguishing the two tissue types. In this empirical case, all tissues have a quite high Kendall's $\tau$ correlation ($>0.607$) and high Spearman's $\rho$ correlation ($>0.766$). The resulting clusterings by varying dissimilarity measure and linkage method are shown in Table~\ref{tab:Not} and Fig.~\ref{fig:dendro}.
Note that we are here using the Rand Index \cite{Rand71} instead of its adjusted version since the number and the size of groups are very small.

The dissimilarity measure $d_\beta$ and the complete linkage method is the only combination able to perfectly recognize the two tissue types, thus supporting the concept that genome-wide expression profiling may permit a classification of solid tumors. Again, the effect of the kind of dissimilarity measure appears to be irrelevant.

Here, it is interesting to note that the extended dissimilarity functions $d_\beta$, $d_\phi$ and $d_\tau$ based on measures of multivariate dependence (see Section \ref{SubsectionExtDissMOD}) perfectly group the tissues (Rand index equal to $1$).
For the extended dissimilarity function $d_\rho$, however, the obtained Rand index equals $0.5714$ since two adenoma tissues have been clustered with the normal ones. Thus, one may argue that this performance is due to the fact that $d_\rho$ does not satisfy the reducibility property (G1) (see Theorem \ref{thm:global_d}).

Noteworthy, the dissimilarity $d_\tau$ in its multivariate version is the most computationally heavy measure.

Finally, for the sake of illustration, we discuss the steps of the clustering procedure by means of the extended dissimilarity function $d_\tau$ (see Section \ref{SubsectionExtDissMOD}).
Table \ref{tab:Kendall} provides the merging steps together with the corresponding values of $d_\tau$ and multivariate Kendall's tau (see, e.g., \cite{Genest2011, fumcsc2019}).

As can be seen from the values of Kendall's tau there is a huge $4$- and also $8$-dimensional dependence between the tissues.
Thus, it seems as if the multivariate versions of $d_\beta$, $d_\phi$ and $d_\tau$ perform entirely satisfactory when the r.v.'s are highly dependent and the sample size is large enough.
Again,
one may also conclude from the values of $d_\tau$ that the reducibility property is crucial.


\subsection{Analysis of financial time series}

Here, we provide an illustration of a copula-based clustering procedure based on financial time series. To this end, we consider the dataset formed by the end-day prices of the 505 constituents of the Standard \& Poor 500 index (S\&P hereafter) observed in the financial crisis of 2007-2008 is analysed, by complementing the analysis performed in \cite{DiLDurPap17RBN}. The dataset is available in the \texttt{R} package \texttt{qrmdata} \cite{qrmdata}, where the data are classified according to the Global Industry Classification Standard sector information. We consider 756 daily log-returns recorded from 2007-01-01 to 2009-12-31 on 461 constituents which have not missing data and belong to the following sectors (the number of companies in each sector is in parenthesis): Consumer discretionary (77), Consumer staples (33), Energy (36), Financials (84), Health care (51), Industrials (62), Information technology (59), Materials (25), Telecommunications services (5), and Utilities (29).

Following the copula-based approach for the analysis of time series (see, e.g., \cite{Pat12}), we fit a suitable marginal model to each of the 461 constituents to remove serial dependence. In particular, based on~\cite{DiLDurPap17RBN} we adopt the ARMA(1,1)-GARCH(1,1) model with innovations following a Student-\textit{t} distribution. Once the corresponding residuals have been extracted, hierarchical clustering algorithms are applied by varying dissimilarity measures and linkage methods.

Table~\ref{tab:SP} shows the agreement between the sector classification given by S\&P index and the group composition determined for each considered combination of a dissimilarity measure and a linkage method (here, ARI is used).
As it can be seen, single linkage method shows the worst agreement irrespective from the dissimilarity measure.
On the contrary, the performance of the average and the complete linkage method appear quite different from each other and, on this set of data, the complete linkage outperforms the average linkage. As expected, however, the benchmark group composition provided by sectors reflects poorly the comovements of financial time series.

When we consider a dissimilarity function based on the (pairwise) lower tail dependence coefficient discussed in Section~\ref{subsec:TDC} computed using the nonparametric estimator by~\cite{Schmid07}, the agreement between the obtained group composition and the benchmark sector-wise group composition is even worse. 
In fact, the ARI index equals to 0.003 for the average, 0.001 for the single, and 0.175 for the complete linkage method. In other words, as expected, grouping by economic/financial sectors may not reflect the real comovements of time series, especially in bearish periods.


\section{Conclusions}\label{sec:concl}

We have provided a theoretical foundation for the study of hierarchical clustering algorithms based on (rank-based) dissimilarity measures. To this end, we have introduced and studied dissimilarity functions for continuous random vectors, which are based on the use of copulas. Novel properties of a  dissimilarity have been considered (see Table \ref{tab:diss-prop}) and various dissimilarity measures have been analysed with respect to their main features (see Table \ref{tab:diss-satisfy}).
The obtained results may provide computational and practical insights that may guide for the choice of the most appropriate dissimilarity function for the problem at hand.

Finally, we would like to remark that the simulations and the empirical analysis have been performed in \cite{R}, also by means of the package \cite{Rcopula}.


\section*{Acknowledgements}
{\small We would like to thank the Associate Editor and the anonymous Reviewers for several helpful suggestions that have served to add clarity and breadth to the earlier version of this paper.

FD has been supported by the project ``Stochastic Models for Complex Systems'' by Italian MIUR (PRIN 2017, Project no. 2017JFFHSH). FMLDL has been supported by the project ``The use of Copula for the Analysis of Complex and Extreme Energy and Climate data (CACEEC)'' by the Free University of Bozen-Bolzano, Faculty of Economics and Management (Grant Nos. WW200S).
SF gratefully acknowledges the support of the WISS 2025
project ’IDA-lab Salzburg’ (20204-WISS/225/197-2019 and 0102-F1901166-KZP).}

{\footnotesize
}

\begin{table}[h]
\centering\caption{NCI60 data: ARI index of hierarchical clustering of low ($<0.3$) dependent tissues by varying dissimilarity measure and linkage method.}\label{tab:low}
\begin{tabular}{ccccc}
  \hline
                  & $d_\beta$ & $d_\phi$ & $d_\rho$ & $d_\tau$ \\
  \hline
  Average & \phantom{-}0.056  & 0.056 & 0.043 & 0.039 \\
  Single     & \phantom{-}0.056 & 0.056 & 0.056 & 0.056 \\
  Complete & -0.024                    & 0.005 & 0.039 & 0.039 \\
  \hline
\end{tabular}
\end{table}

\begin{table}[h]
\centering\caption{NCI60 data: ARI index of hierarchical clustering of high ($\geq 0.3$) dependent tissues by varying dissimilarity measure and linkage method.}\label{tab:high}
\begin{tabular}{ccccc}
  \hline
                  & $d_\beta$ & $d_\phi$ & $d_\rho$ & $d_\tau$ \\
  \hline
  Average   & 0.547 & 0.743 & 0.820 & 0.574 \\
  Single      & 0.116 & 0.076 & 0.076 & 0.298 \\
  Complete & 0.752 & 0.752 & 0.691 & 0.773 \\
   \hline
\end{tabular}
\end{table}

\begin{table}[h]
\centering\caption{Notterman's data: Rand index of hierarchical clustering results by varying dissimilarity measure and linkage method.}\label{tab:Not}
\begin{tabular}{ccccc}
  \hline
                  & $d_\beta$ & $d_\phi$ & $d_\rho$ & $d_\tau$ \\
  \hline
Average & 0.464 & 0.464 & 0.464 & 0.464 \\
Single & 0.464 & 0.464 & 0.464 & 0.464 \\
Complete & 1.000 & 0.464 & 0.571 & 0.464 \\
   \hline
\end{tabular}
\end{table}

\begin{table}[h]
\centering
\caption{Notterman's data: Steps of the hierarchical clustering procedure via dissimilarity function based on multivariate Kendall's tau.} \label{tab:Kendall}
\bigskip
\begin{tabular}{ccc}
\hline
merging variables
& $d_\tau$
& Kendall's tau
\\
\hline\hline
$5$ and $7$
& $0.061$
& $0.756$
\\
\hline\hline
$6$ and $8$
& $0.068$
& $0.728$
\\
\hline\hline
$1$ and $2$
& $0.076$
& $0.696$
\\
\hline\hline
$3$ and $4$
& $0.095$
& $0.620$
\\
\hline\hline
$(5,7)$ and $(6,8)$
& $0.124$
& $0.717$
\\
\hline\hline
$(1,2)$ and $(3,4)$
& $0.159$
& $0.637$
\\
\hline\hline
$(1,2,3,4)$ and $(5,6,7,8)$
& $0.209$
& $0.579$
\\
\hline
\end{tabular}
\end{table}

\begin{table}[h]
\centering\caption{S\&P 500: ARI index between the S\&P sector classification and the group composition provided by hierarchical clustering with different dissimilarity measures and linkage methods.}\label{tab:SP}
\begin{tabular}{lcccc}
  \hline
 & $d_\beta$ & $d_\phi$ & $d_\rho$ & $d_\tau$ \\
  \hline
   Average  & 0.003 & 0.006 & 0.003 & 0.003 \\
  Single      & 0.002 & 0.003 & 0.003 & 0.003 \\
  Complete & 0.331 & 0.337 & 0.320 & 0.370 \\
   \hline
\end{tabular}
\end{table}

\begin{table}[h]
\centering
\caption{Properties of dissimilarity functions.}\label{tab:diss-prop}
\bigskip
\begin{tabular}{ll}
(L1) & Order preserving property (lower orthant order) \\
(L1c) & Order preserving property (concordance order) \\
(L2) & Radially symmetry \\
(L3) & Continuity/Weak convergence \\
(G1) & Reducibility property \\
(G1s) & Strict reducibility property \\
(G2) & Comonotonic invariance\\
\end{tabular}
\end{table}

\begin{table}[h]
\centering
\caption{Summary of the properties satisfied (symbol: $\surd$), not satisfied (symbol: $\times$), or satisfied under specific conditions on $d^{1,1}$ (symbol: $\ast$) by the extended dissimilarity functions.}\label{tab:diss-satisfy}

\bigskip
\begin{tabular}{lccccccc}
\hline
& (L1) & (L1c) & (L2) & (L3) & (G1) & (G1s) & (G3) \\
\hline\hline
single linkage
& $\ast$ & $\ast$ & $\ast$ & $\ast$ & $\surd$  & $\times$ & $\surd$ \\
\hline\hline
average linkage
& $\ast$ & $\ast$ & $\ast$ & $\ast$ & $\surd$  & $\ast$  & $\ast$ \\
\hline\hline
complete linkage
& $\ast$ & $\ast$ & $\ast$ & $\ast$ & $\surd$  & $\ast$  & $\surd$ \\
\hline\hline
tail dependence
& $\surd$  & $\surd$  & $\times$ & $\times$ & $\surd$  & $\surd$  & $\surd$ \\
\hline\hline
Blomqvist's beta
& $\surd$  & $\surd$  & $\times$ & $\surd$  & $\surd$  & $\surd$  & $\surd$ \\
\hline\hline
Spearman's footrule
& $\surd$  & $\surd$  & $\times$ & $\surd$  & $\surd$  & $\surd$  & $\surd$ \\
\hline\hline
Kendall's tau
& $\times$ & $\surd$  & $\surd$  & $\surd$  & $\surd$  & $\surd$  & $\surd$ \\
\hline\hline
Spearman's rho
& $\surd$  & $\surd$  & $\times$ & $\surd$  & $\times$ & $\times$ & $\times$ \\
\hline
\end{tabular}
\end{table}

\begin{figure}
    \centering
  \includegraphics[width=.7\paperwidth,keepaspectratio=TRUE]{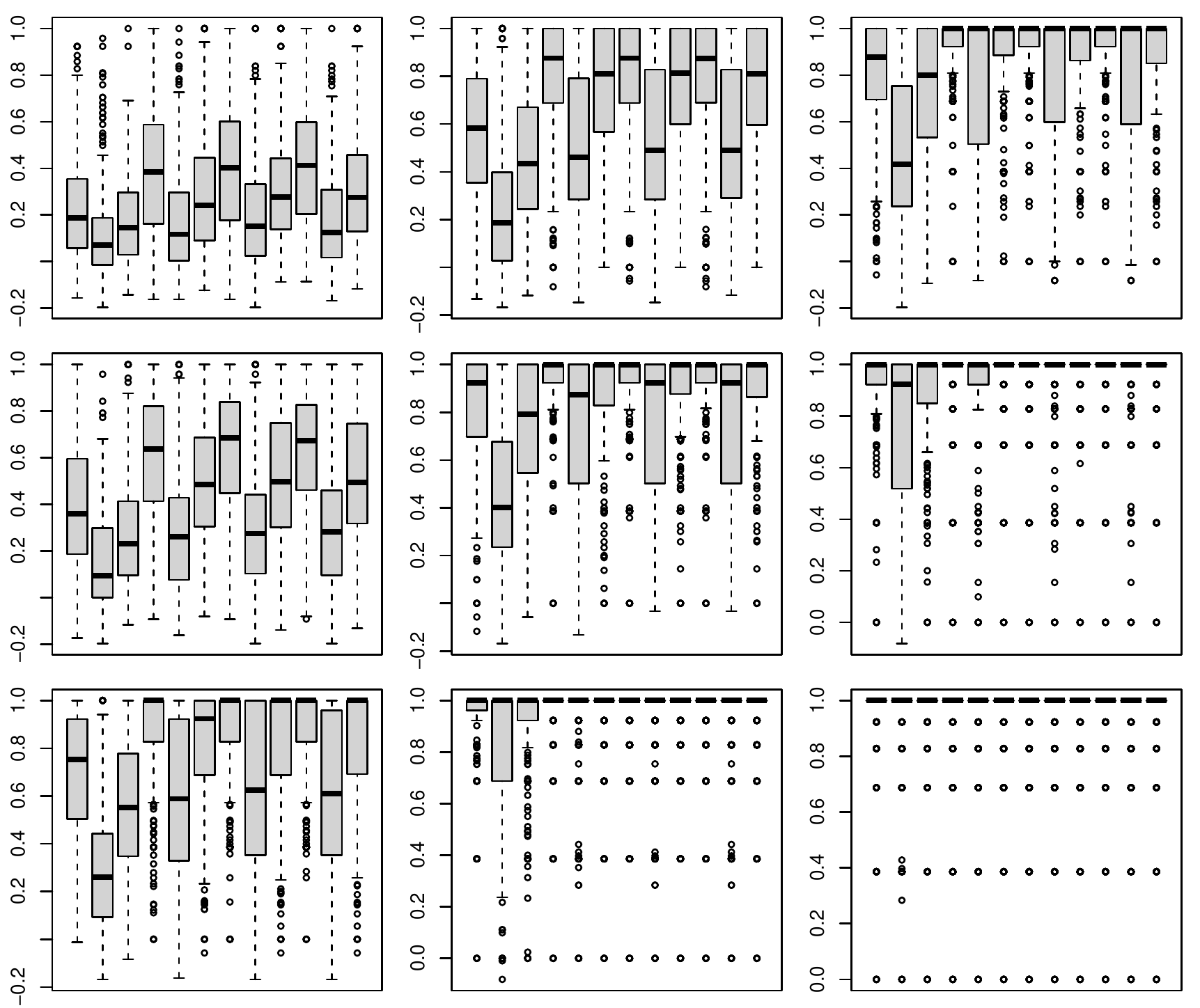}
  \caption{Boxplots of ARI (y-axis) by varying $i)$ pairwise dissimilarity measure among $d^{1,1}_\beta$, $d^{1,1}_\phi$, $d^{1,1}_{\rho}$ and $d^{1,1}_{\tau}$ and $ii$) linkage method among the average, single (minimum) and complete (maximum) one (x-axis starting with the average linkage and $d^{1,1}_\beta$, continuing with the single linkage and $d^{1,1}_\beta$ and ending with the complete linkage and $d^{1,1}_{\tau}$), $iii)$ Kendall's $\tau=(.1,.2,.3)$ (panels by cols), and $iv)$ sample size $N=50,100,250$ (panels by rows). Data simulated from independent groups with a Clayton copula within each group (see text).}\label{Clayton}
\end{figure}

\begin{figure}
  \centering
  \includegraphics[width=.7\paperwidth,keepaspectratio=TRUE]{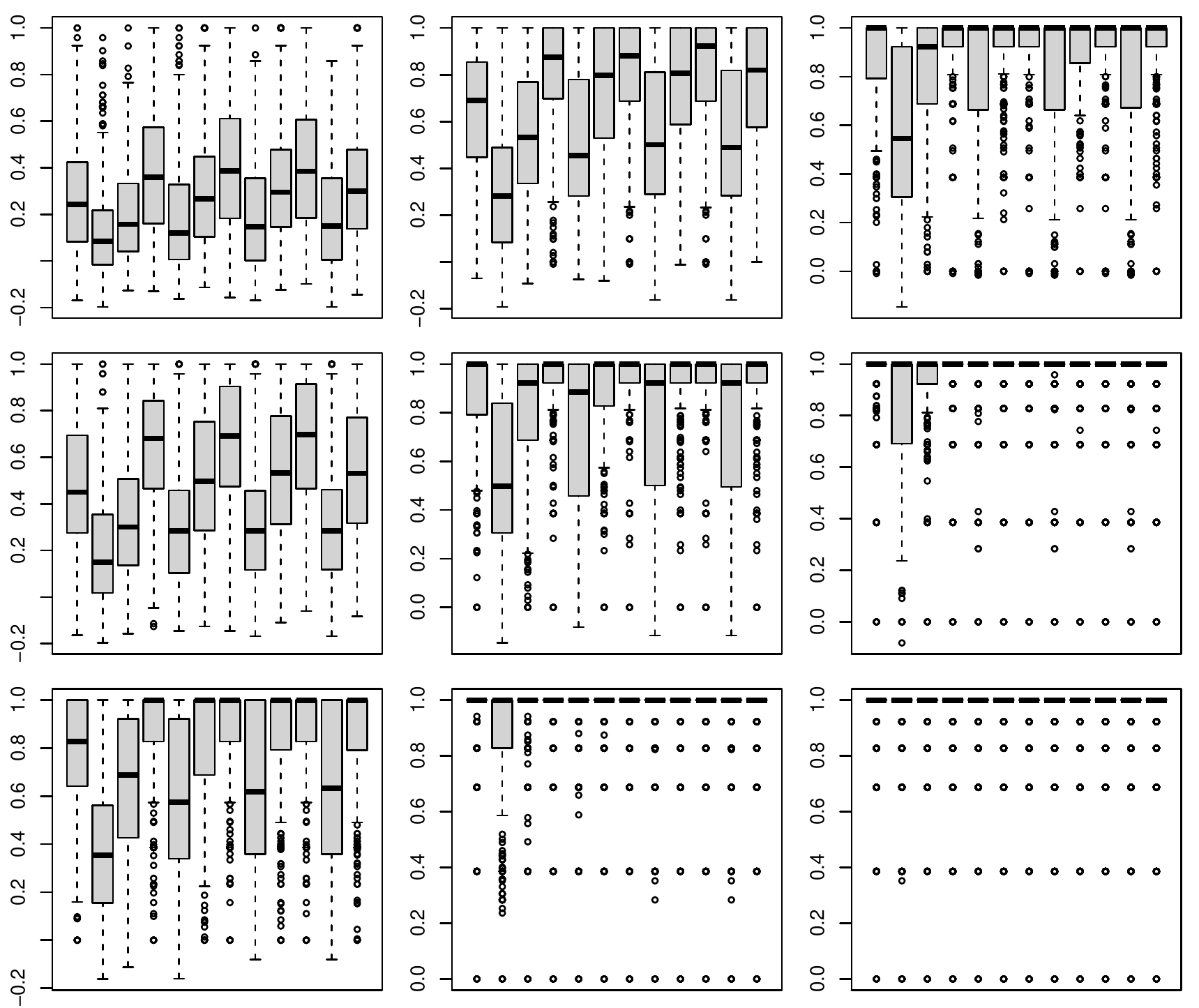}
  \caption{Boxplots of ARI (y-axis) by varying $i)$ pairwise dissimilarity measure among $d^{1,1}_\beta$, $d^{1,1}_\phi$, $d^{1,1}_{\rho}$ and $d^{1,1}_{\tau}$ and $ii$) linkage method among the average, single (minimum) and complete (maximum) one (x-axis starting with the average linkage and $d^{1,1}_\beta$, continuing with the single linkage and $d^{1,1}_\beta$ and ending with the complete linkage and $d^{1,1}_{\tau}$), $iii)$ Kendall's $\tau=(.1,.2,.3)$ (panels by cols), and $iv)$ sample size $N=50,100,250$ (panels by rows). Data simulated from independent groups with a Frank copula within each group (see text).}\label{Frank}
\end{figure}

\begin{figure}
  \centering
  \includegraphics[width=.7\paperwidth,keepaspectratio=TRUE]{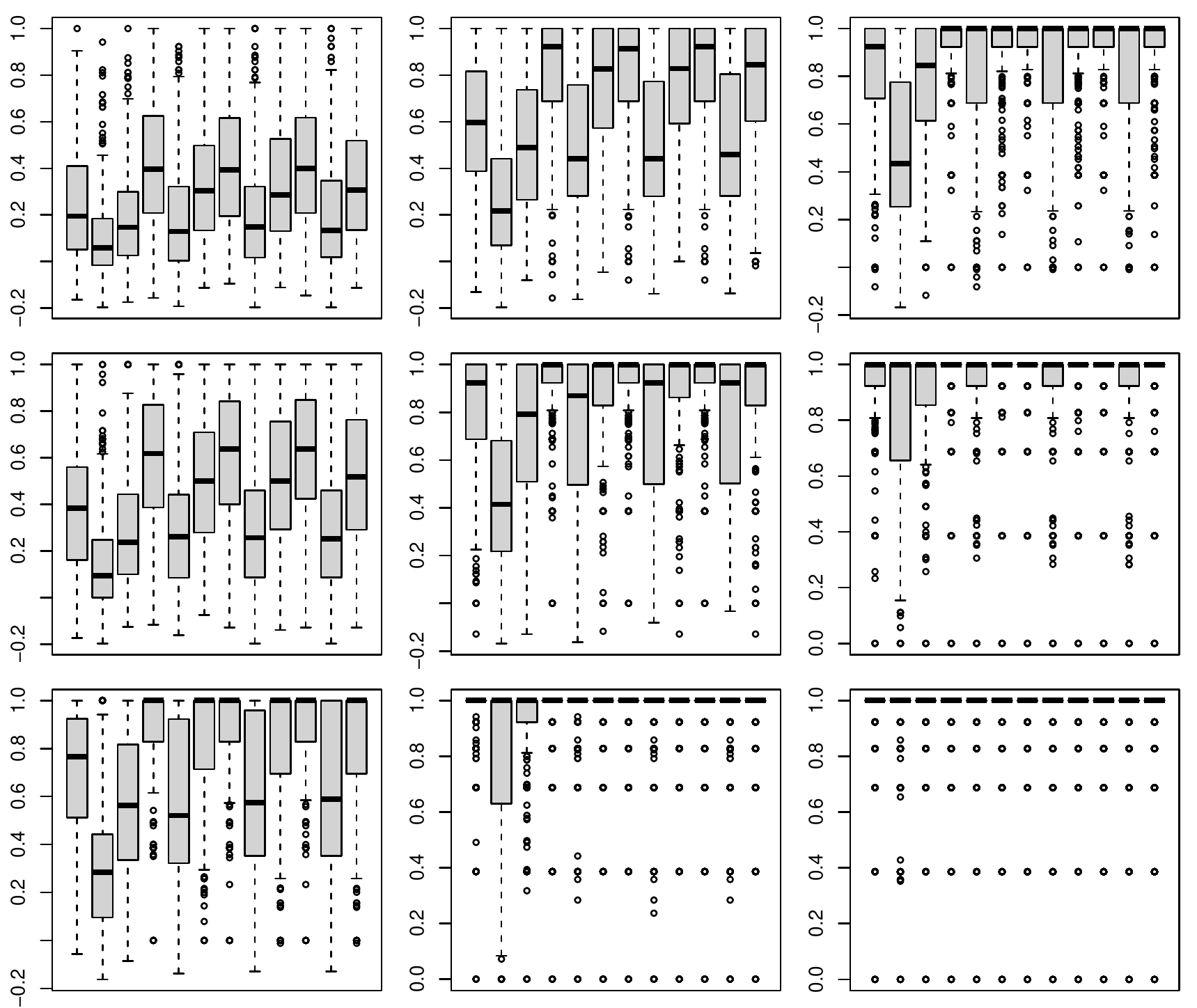}
  \caption{Boxplots of ARI (y-axis) by varying $i)$ pairwise dissimilarity measure among $d^{1,1}_\beta$, $d^{1,1}_\phi$, $d^{1,1}_{\rho}$ and $d^{1,1}_{\tau}$ and $ii$) linkage method among the average, single (minimum) and complete (maximum) one (x-axis starting with the average linkage and $d^{1,1}_\beta$, continuing with the single linkage and $d^{1,1}_\beta$ and ending with the complete linkage and $d^{1,1}_{\tau}$), $iii)$ Kendall's $\tau=(.1,.2,.3)$ (panels by cols), and $iv)$ sample size $N=50,100,250$ (panels by rows). Data simulated from independent groups with a Gumbel copula within each group (see text).}\label{Gumbel}
\end{figure}

\begin{figure}
  \centering
  \includegraphics[width=.7\paperwidth,keepaspectratio=TRUE]{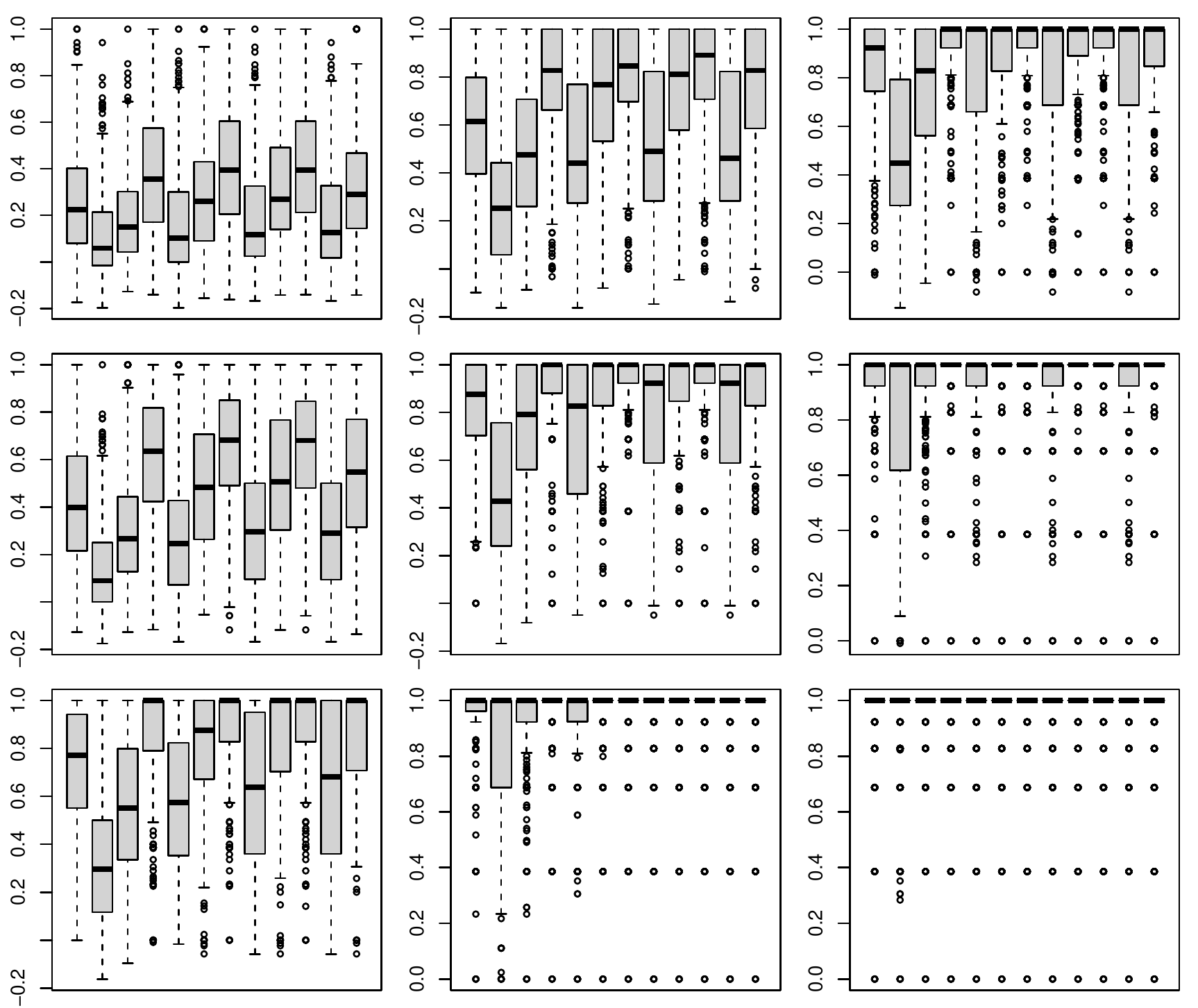}
  \caption{Boxplots of ARI (y-axis) by varying $i)$ pairwise dissimilarity measure among $d^{1,1}_\beta$, $d^{1,1}_\phi$, $d^{1,1}_{\rho}$ and $d^{1,1}_{\tau}$ and $ii$) linkage method among the average, single (minimum) and complete (maximum) one (x-axis starting with the average linkage and $d^{1,1}_\beta$, continuing with the single linkage and $d^{1,1}_\beta$ and ending with the complete linkage and $d^{1,1}_{\tau}$), $iii)$ Kendall's $\tau=(.1,.2,.3)$ (panels by cols), and $iv)$ sample size $N=50,100,250$ (panels by rows). Data simulated from independent groups with a equicorrelated Gaussian copula within each group (see text).}\label{Gaussian}
\end{figure}

\begin{figure}
  \centering
  \includegraphics[width=.7\paperwidth,keepaspectratio=TRUE]{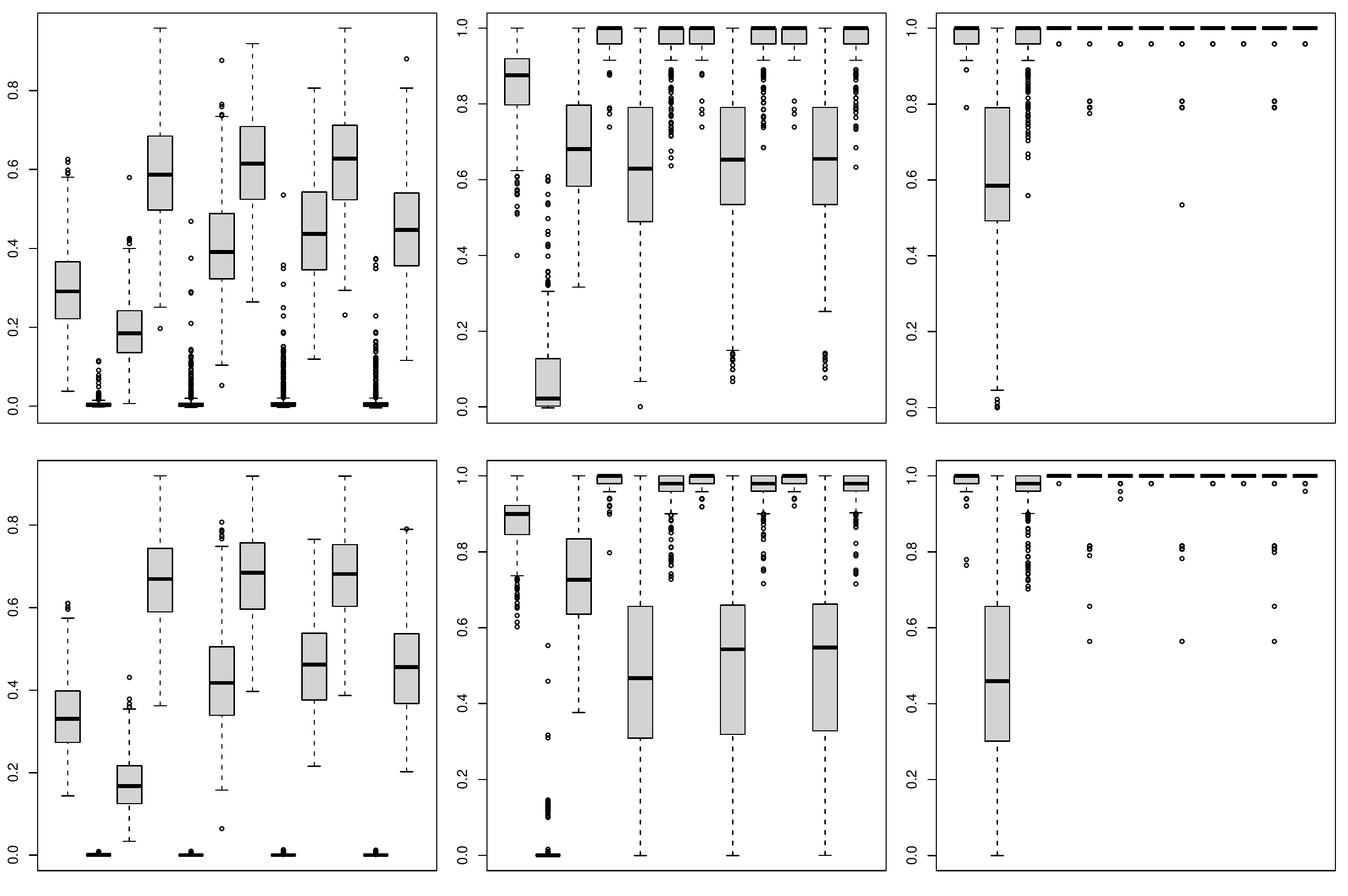}
  \caption{Boxplots of ARI (y-axis) by varying $i)$ pairwise dissimilarity measure among $d^{1,1}_\beta$, $d^{1,1}_\phi$, $d^{1,1}_{\rho}$ and $d^{1,1}_{\tau}$ and $ii$) linkage method among the average, single (minimum) and complete (maximum) one (x-axis starting with the average linkage and $d^{1,1}_\beta$, continuing with the single linkage and $d^{1,1}_\beta$ and ending with the complete linkage and $d^{1,1}_{\tau}$), $iii)$ Kendall's $\tau=(.1,.2,.3)$ (panels by cols), and $iv)$ clustering size (total number of variables) $m=60,120$ (panels by rows). Sample size is equal to $N=100$ and data are simulated from $K=6$ independent Clayton copulas of dimension $m/K$.}\label{100_3}
\end{figure}

\begin{figure}
  \centering
  \includegraphics[width=.7\paperwidth,keepaspectratio=TRUE]{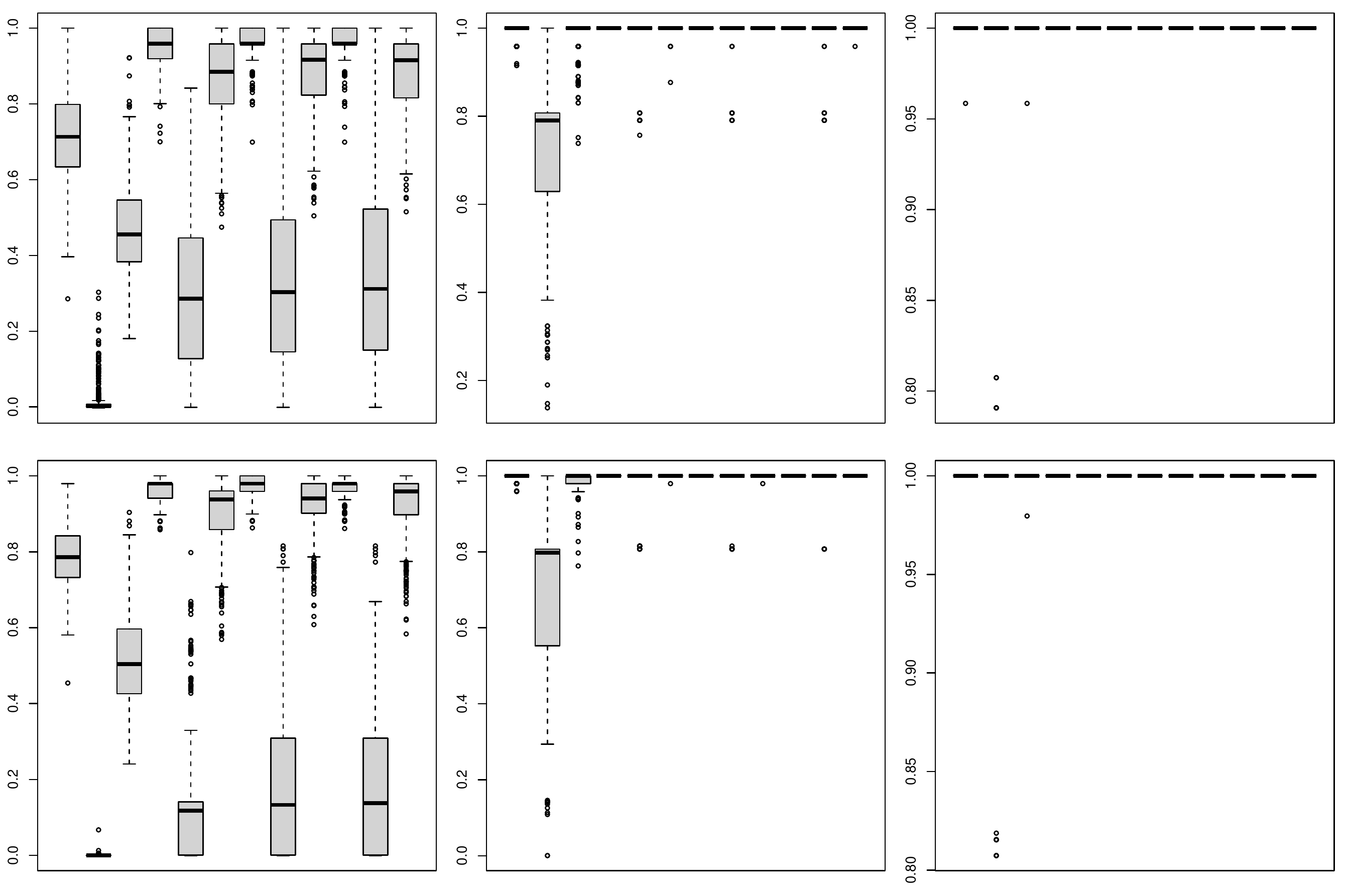}
  \caption{Boxplots of ARI (y-axis) by varying $i)$ pairwise dissimilarity measure among $d^{1,1}_\beta$, $d^{1,1}_\phi$, $d^{1,1}_{\rho}$ and $d^{1,1}_{\tau}$ and $ii$) linkage method among the average, single (minimum) and complete (maximum) one (x-axis starting with the average linkage and $d^{1,1}_\beta$, continuing with the single linkage and $d^{1,1}_\beta$ and ending with the complete linkage and $d^{1,1}_{\tau}$), $iii)$ Kendall's $\tau=(.1,.2,.3)$ (panels by cols), and $iv)$ clustering size (total number of variables) $m=60,120$ (panels by rows). Sample size is equal to $N=250$ and data are simulated from $K=6$ independent Clayton copulas of dimension $m/K$.}\label{250_3}
\end{figure}

\begin{figure}
  \centering
  \includegraphics[width=.7\paperwidth,keepaspectratio=TRUE]{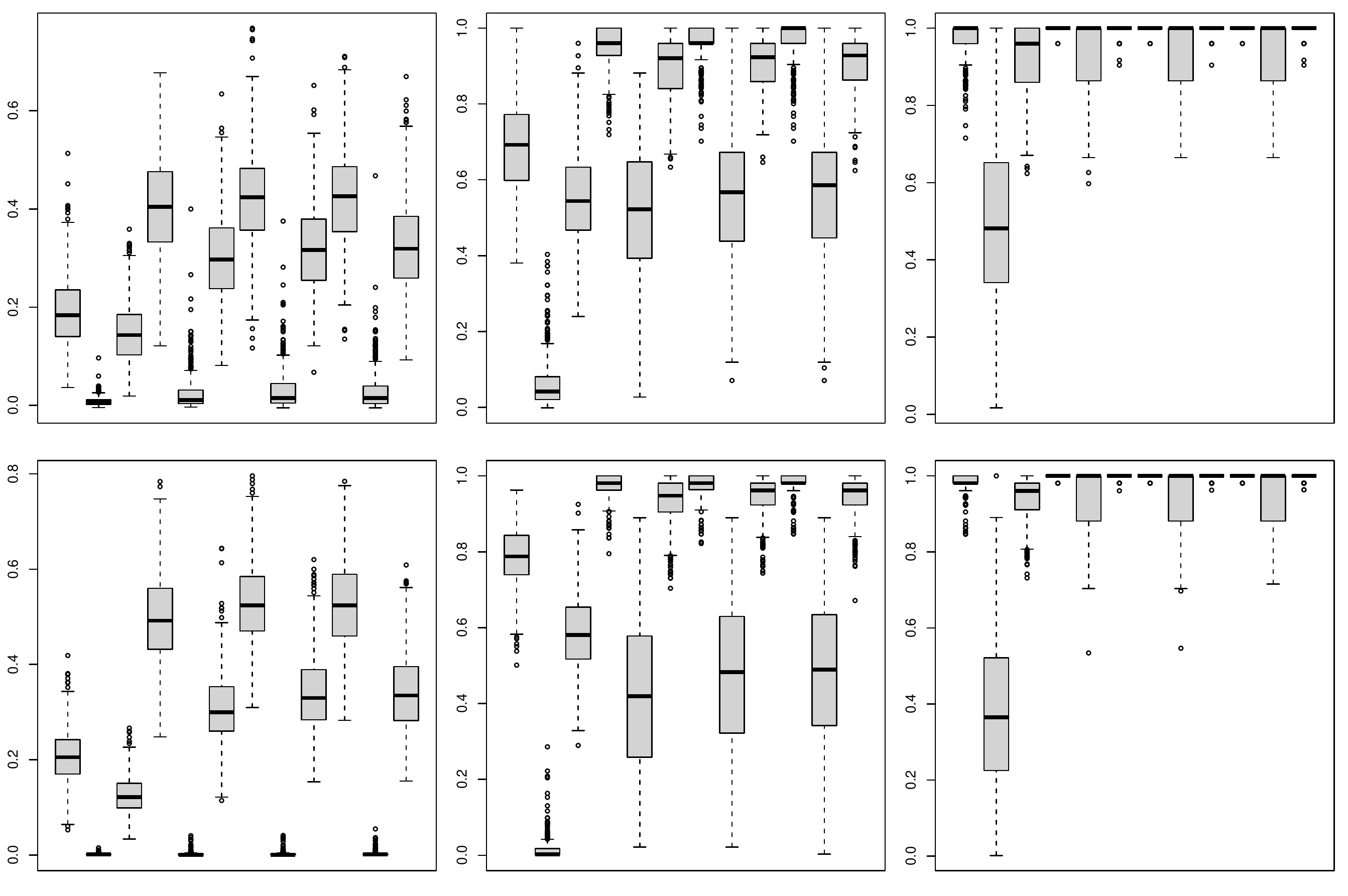}
  \caption{Boxplots of ARI (y-axis) by varying $i)$ pairwise dissimilarity measure among $d^{1,1}_\beta$, $d^{1,1}_\phi$, $d^{1,1}_{\rho}$ and $d^{1,1}_{\tau}$ and $ii$) linkage method among the average, single (minimum) and complete (maximum) one (x-axis starting with the average linkage and $d^{1,1}_\beta$, continuing with the single linkage and $d^{1,1}_\beta$ and ending with the complete linkage and $d^{1,1}_{\tau}$), $iii)$ Kendall's $\tau=(.1,.2,.3)$ (panels by cols), and $iv)$ clustering size (total number of variables) $m=60,120$ (panels by rows). Sample size is equal to $N=100$ and data are simulated from $K=10$ independent Clayton copulas of dimension $m/K$.}\label{100_6}
\end{figure}

\begin{figure}
  \centering
  \includegraphics[width=.7\paperwidth,keepaspectratio=TRUE]{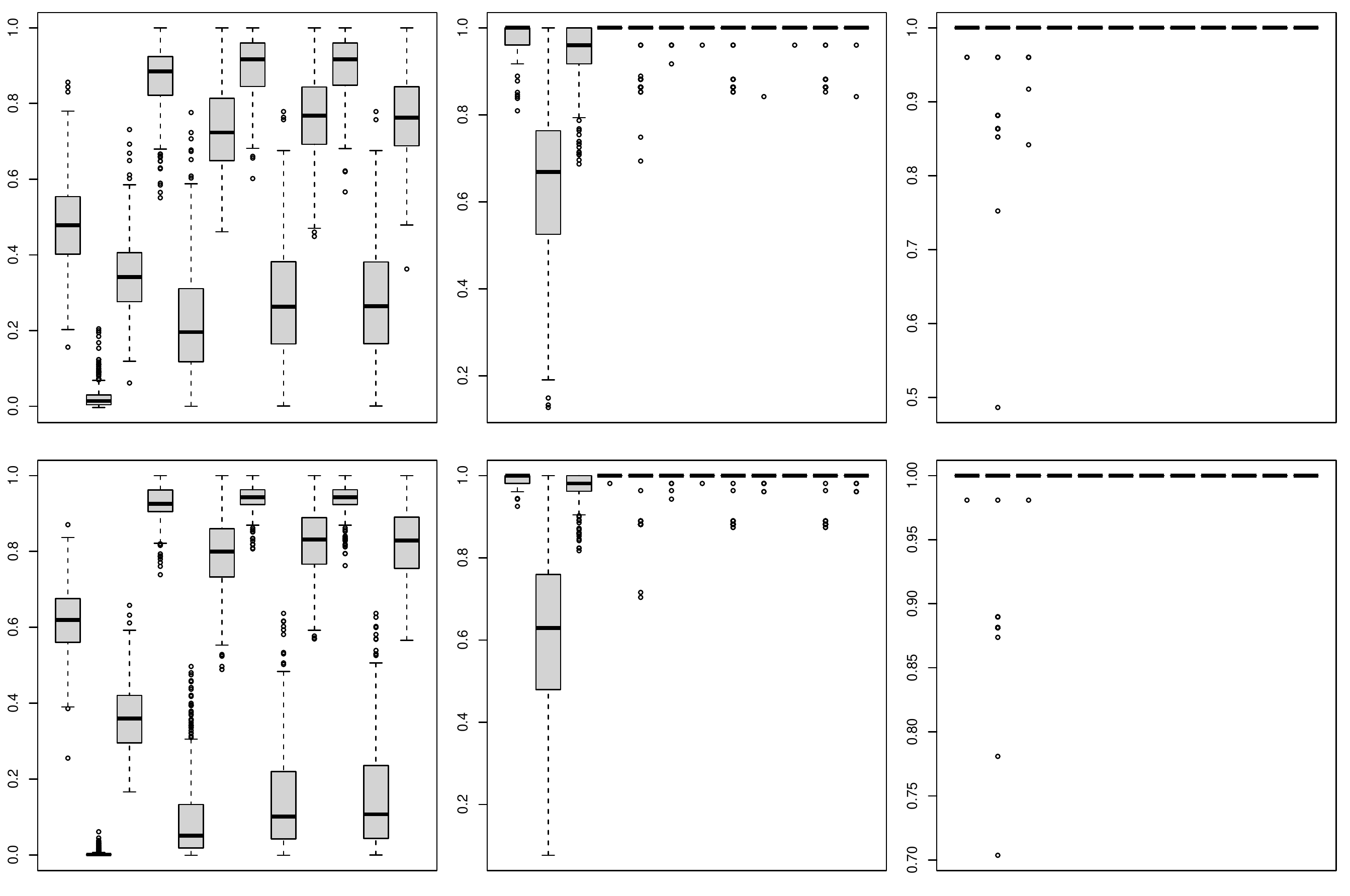}
  \caption{Boxplots of ARI (y-axis) by varying $i)$ pairwise dissimilarity measure among $d^{1,1}_\beta$, $d^{1,1}_\phi$, $d^{1,1}_{\rho}$ and $d^{1,1}_{\tau}$ and $ii$) linkage method among the average, single (minimum) and complete (maximum) one (x-axis starting with the average linkage and $d^{1,1}_\beta$, continuing with the single linkage and $d^{1,1}_\beta$ and ending with the complete linkage and $d^{1,1}_{\tau}$), $iii)$ Kendall's $\tau=(.1,.2,.3)$ (panels by cols), and $iv)$ clustering size (total number of variables) $m=60,120$ (panels by rows). Sample size is equal to $N=250$ and data are simulated from $K=10$ independent Clayton copulas of dimension $m/K$.}\label{250_6}
\end{figure}

\begin{figure}
  \centering
  \includegraphics[angle=270,width=.7\paperwidth,keepaspectratio=TRUE]{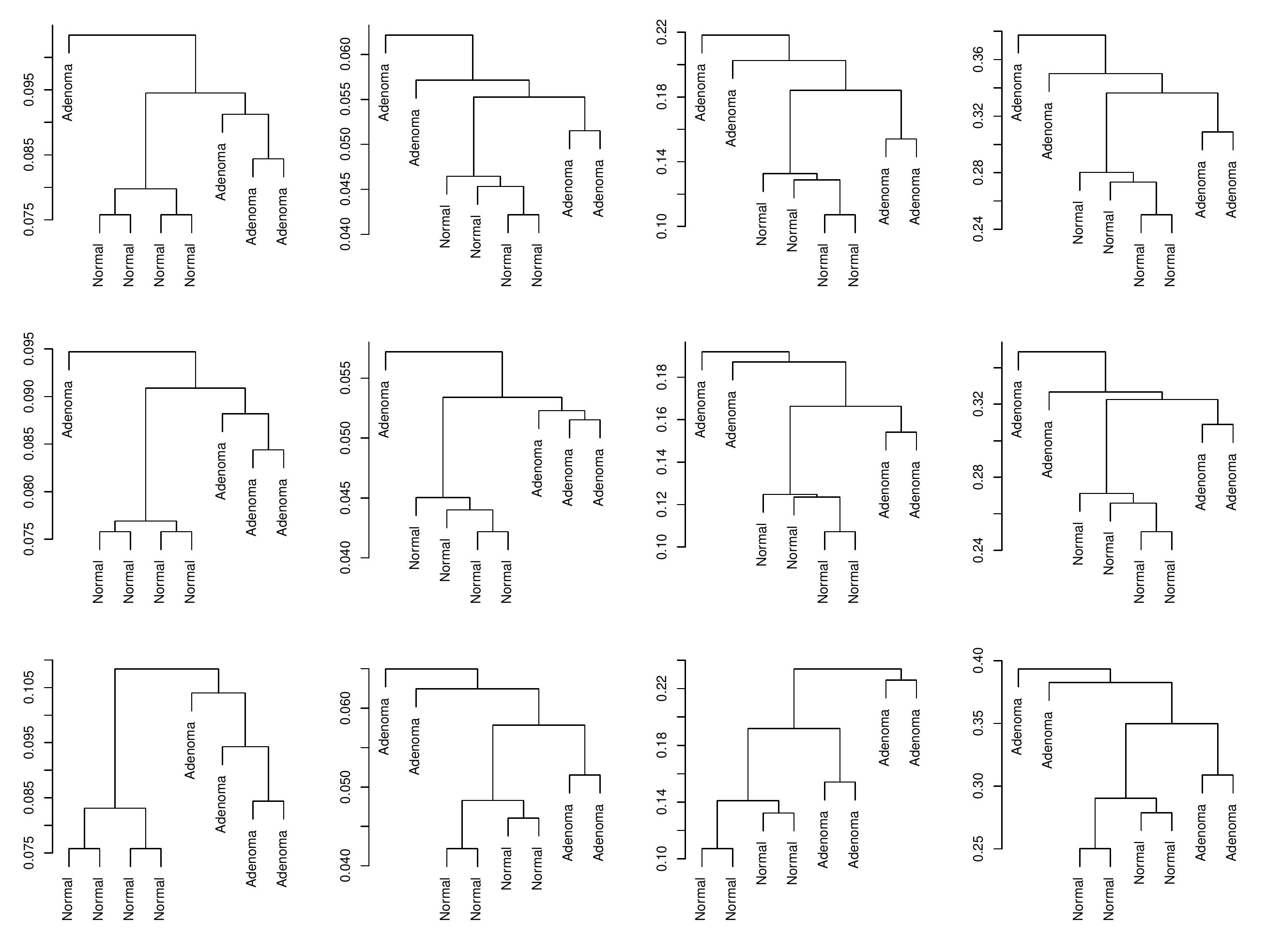}
  \caption{Dendrograms of the data set by \cite{Notterman2001}  by varying $i)$ dissimilarity measure among $d_\beta$, $d_\phi$, $1-\rho$ and $1-\tau$ by cols, and $ii$) linkage method among the average, single (minimum) and complete (maximum) one by row.}
	\label{fig:dendro}
\end{figure}
\end{document}